\newtheorem{theorem}{Theorem}[section]
\newtheorem{lemma}[theorem]{Lemma}
\newtheorem{proposition}[theorem]{Proposition}
\newtheorem{corollary}[theorem]{Corollary}
\theoremstyle{definition}
\newtheorem{definition}{Definition}
\newtheorem{remark}[theorem]{Remark}
\numberwithin{equation}{section}
\newcommand{\bea}{\begin{eqnarray}}
\newcommand{\eea}{\end{eqnarray}}
\newcommand{\<}{\langle}
\renewcommand{\>}{\rangle}
\newcommand{\wt}{\widetilde}
\newcommand{\wh}{\widehat}
\def\eps{{\varepsilon}}
\def\supp{{\rm supp}}
\def\bP{{\boldsymbol{P}}}
\def\bQ{{\boldsymbol{Q}}}
\def\bone{{\mathbf 1}}
\def\cF{{\mathcal F}}
\def\good{\footnotesize\textbf{\texttt{good}}}
\def\bad{\footnotesize\textbf{\texttt{bad}}}
\def\bB{\boldsymbol{B}}
\def\de{{\rm d}}
\def\<{\langle}
\def\>{\rangle}
\def\b0{{\boldsymbol{0}}}
\renewcommand{\b}{\mathbf{b}}
\newcommand{\eff}{\text{eff}}
\def\lt{\left}
\def\rt{\right}
\def\la{\langle}
\def\ra{\rangle}
\def\eps{\varepsilon}
\def\bbB{{\mathbb{B}}}
\def\bbE{{\mathbb{E}}}
\def\bbP{{\mathbb{P}}}
\def\bbQ{{\mathbb{Q}}}
\def\bP{\boldsymbol{P}}
\def\bbR{{\mathbb{R}}}
\def\bbZ{{\mathbb{Z}}}
\def\cF{{\mathcal{F}}}
\def\one{{\mathbbm{1}}}
\def\bP{\mathbf{P}}
\def\bQ{\mathbf{Q}}
\def\TV{{\mathrm{TV}}}
\author{Mark Sellke}
\title{Almost Quartic Lower Bound for the Fr\"{o}hlich Polaron's Effective Mass via Gaussian Domination}
\date{}
\begin{document}

\maketitle

\begin{abstract}
\noindent We prove the Fr\"{o}hlich polaron has effective mass at least $\frac{\alpha^4}{(\log \alpha)^6}$ when the coupling strength $\alpha$ is large. This nearly matches the quartic growth rate $C_*\alpha^4$ predicted in \cite{landau1948effective} and complements a recent sharp upper bound of \cite{brooks2022fr}. Our proof works with the path integral formulation of the problem and systematically applies the Gaussian correlation inequality to exploit quasi-concavity of the interaction terms. 
\end{abstract}

% \tableofcontents
% \newpage

\section{Introduction}

The Fr\"{o}hlich polaron in quantum mechanics was introduced in \cite{frohlich1937theory} to model an electron or other charged particle in a crystal. As the electron moves, it drags along a surrounding polarization cloud, and these together form a polaron. In this paper we obtain new estimates for the \emph{effective mass} of a polaron using the path integral description of \cite{feynman1955slow}.

Before giving the path integral formulation, we briefly review the original quantum mechanical model. Here the polaron at coupling strength $\alpha\geq 0$ is described by a Hamiltonian operator $H$ on $L^2(\bbR^3)\otimes \cF(L^2(\bbR^3))$, where the latter term is a bosonic Fock space. With $x$ lying in the former $\bbR^3$ space and $k$ the latter, and for $a_k^{\dagger},a_k$ the creation and annihilation operators, $H$ is given by
\[
    H=-\nabla_x^2/2 + \int_{\bbR^3} a_k^{\dagger}a_k~\de k + 
    \sqrt{\alpha}\int_{\bbR^3} \frac{e^{-ikx}}{|k|} a_k^{\dagger}~\de k
    +
    \sqrt{\alpha}\int_{\bbR^3} \frac{e^{ikx}}{|k|}a_k~\de k.
\]
Because $H$ commutes with the total momentum operator $-i\nabla_x+\int_{\bbR^3} ka_k^{\dagger}a_k~\de k$ and is rotationally invariant, it has a well-defined ground state energy $E(P)=E_{\text{rad}}(|P|)$ for each momentum $P\in\bbR^3$. It is known since \cite{gross1972existence} that $E(0)\leq E(P)$ for all $P$, and more recently from \cite{polzer2022renewal} that $E_{\text{rad}}$ is monotone increasing and strictly so at $0$. Along with the overall ground state energy $E(0)$, one of the main quantities of interest is the effective mass $m_{\eff}(\alpha)$ defined by
\begin{equation}
\label{eq:meff-QFT}
    \frac{1}{2m_{\eff}(\alpha)}=\lim_{P\to 0} \frac{E(P)-E(0)}{|P|^2}.
\end{equation}
See also \cite{lieb2014equivalence} for an equivalent ``static'' definition of $m_{\eff}(\alpha)$ involving potential wells.

We now turn to the probabilistic path integral description which originated in \cite{feynman1955slow}. Let $\bbP$ be the law of $3$-dimensional Brownian motion and fix a coupling strength $\alpha$ 
and time horizon $T>0$. Then the \emph{polaron path measure} $\wh\bbP_{\alpha,T}$ on $C([0,T];\bbR^3)$ is given by:
\begin{align}
\label{eq:polaron-defn}
    \de\wh\bbP_{\alpha,T}(\bB)
    &\equiv
    \frac{1}{Z_{\alpha,T}}
    \exp\lt(
    \alpha\int_0^T \int_0^T
    e^{-|t-s|} V(\|\bB_t-\bB_s\|) ~\de t~\de s
    \rt)
    \de\bbP(\bB),
    \\
\nonumber
    V(r)&\equiv1/r
    .
\end{align}
Although $V(\cdot)$ is singular, finiteness of $Z_{\alpha,T}$ is proved in e.g. \cite{bley2017estimates}. Later \cite{spohn1987effective,dybalski2020effective} showed that assuming a functional central limit theorem for $\wh\bbP_{\alpha,T}$ with $L^2$ convergence at large times $T$, the original definition \eqref{eq:meff-QFT} of $m_{\eff}(\alpha)$ coincides with:
\begin{equation}
\label{eq:sigma-def}
\begin{aligned}
    m_{\eff}(\alpha)&=1/\sigma^{2}(\alpha);
    \\
    \sigma^{2}(\alpha)
    &\equiv
    \lim_{T \to \infty} 
    \frac
    {
    \bbE^{ \wh\bbP_{\alpha,T}}
    \left[\|\bB_{T}\|^2\right]
    }
    {3T}\,.
\end{aligned}
\end{equation}
The required functional central limit theorem for $\wh\bbP_{\alpha,T}$ was subsequently shown in \cite{mukherjee2018identification,betz2022functional}, who in fact represented $\wh\bbP_{\alpha,T}$ as a mixture of Gaussians. We note that the path integral description was also used much earlier by \cite{donsker1983asymptotics} to compute the ground state energy $E(0)$. In particular its $\alpha\to \infty$ limit was expressed as an explicit Pekar variational problem analyzed in \cite{lieb1977existence}. Later \cite{lieb1997exact} gave a non-probabilistic proof for this limit, and \cite{frank2021quantum,feliciangeli2021strongly,brooks2022v1fr} recently determined the second order ``quantum correction'' term. In the ``strong coupling'' regime $\lim_{T\to\infty}\lim_{\alpha\to\infty}$, the short-time pathwise behavior of $\wh\bbP_{\alpha,T}$ was shown in \cite{bolthausen2017mean,mukherjee2020identification} to be described by a stationary diffusion known as the Pekar process.

We will focus on the asymptotic growth of $m_{\eff}(\alpha)$ as $\alpha\to\infty$. Precise predictions have long been known in the physics literature since \cite{landau1948effective}, who conjectured the quartic behavior 
\revedit{$\lim_{\alpha\to\infty} m_{\eff}(\alpha)/\alpha^4= C_*$, even supplying a explicit $C_*$ given by Pekar's variational formula}. However no nontrivial bounds on $m_{\eff}(\alpha)$ were proved until much more recently. The divergence $\lim_{\alpha\to\infty}m_{\eff}(\alpha)=\infty$ was first shown non-probabilistically in \cite{lieb2020divergence} and later improved to $m_{\eff}(\alpha)\geq c\alpha^{2/5}$ in \cite{betz2022effective} using the mixture-of-Gaussians description from \cite{mukherjee2018identification,betz2022functional}. (See also \cite[Theorem 2]{mysliwy2022polaron} for a rigorous lower bound in some other polaron models.)
Even more recently \cite{brooks2022fr} proved the sharp \emph{upper bound} $m_{\eff}(\alpha)\leq C_*\alpha^4+O(\alpha^{4-\eps})$. Their work uses operator techniques to study $E(P)$ for not-too-small momentums $|P|\geq \alpha^{-1-\eps}$, which suffices thanks to a concavity result of \cite{polzer2022renewal}. 

Our main result establishes the matching quartic \emph{lower bound} on $m_{\eff}(\alpha)$ up to $\log(\alpha)$ factors. In tandem with the upper bound of \cite{brooks2022fr}, this almost resolves the conjecture of \cite{landau1948effective}.

\begin{theorem}
\label{thm:main}
    For $\alpha\geq 2$ and an absolute constant $c$,
    \[
    m_{\eff}(\alpha)\geq \frac{c\alpha^4}{(\log\alpha)^6}\,.
    \]
    In fact, uniformly over $\alpha\geq 2$ and $T\geq 1$ we have
    \[
    \bbE^{ \wh\bbP_{\alpha,T}}
    \left[\|\bB_{T}\|^2\right]
    \leq
    O\lt(\frac{T(\log\alpha)^6}{\alpha^4}+\frac{(\log\alpha)^3}{\alpha^2}\rt).
    \]
\end{theorem}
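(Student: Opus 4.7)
The plan is to establish the second, pathwise estimate displayed in the theorem, from which the effective-mass lower bound follows on taking $T \to \infty$ and invoking \eqref{eq:sigma-def}. The key structural observations are two. First, the polaron interaction factor $e^{-|t-s|}$ gives $\wh\bbP_{\alpha,T}$ a reflection-positive structure in time, the classical mechanism behind Fr\"ohlich--Simon--Spencer ``Gaussian domination''. Second, each pairwise tilt factor $\exp(\alpha e^{-|t-s|} V(\|\bB_t - \bB_s\|))$ is, as a function of the displacement $\bB_t - \bB_s \in \bbR^3$, even and quasi-concave, because the superlevel sets of $V(\|\cdot\|)=1/\|\cdot\|$ are origin-centered balls. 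Interpreting the double integral in \eqref{eq:polaron-defn} via Riemann approximation, the full tilt $d\wh\bbP_{\alpha,T}/d\bbP$ is a positive product of even quasi-concave functions of linear functionals of the Brownian path---precisely the setting in which Royen's Gaussian Correlation Inequality (GCI) applies.

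My proposed route is to construct an effective Gaussian reference path measure $\mu_*$ that captures the rigidity of the polaron at the Pekar scale $1/\alpha$. A natural device is the Gaussian mixture identity $1/r = \pi^{-1/2} \int_0^\infty \lambda^{-1/2} e^{-\lambda r^2}\, d\lambda$, which rewrites the polaron interaction as a superposition of genuinely quadratic (Gaussian) tilts of Brownian motion indexed by scale parameters $\lambda_{s,t}$; each such quadratic tilt makes $\bB_t$ an explicit mean-zero Gaussian process whose displacement covariance can be diagonalized via the spectrum of an exponential-decay kernel on $[0,T]$. With $\mu_*$ chosen to match the typical scale of these parameters, one expects $\bbE^{\mu_*}[\|\bB_T\|^2]$ already to be of order $T(\log\alpha)^6/\alpha^4 + (\log\alpha)^3/\alpha^2$. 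The role of GCI is then to transfer this bound: writing the polaron tilt as $\mu_*$-density times a residual product of even quasi-concave factors, GCI yields $\wh\bbP_{\alpha,T}(\|\bB_T\| \leq r) \geq \mu_*(\|\bB_T\| \leq r)$, and integrating this tail estimate delivers the $L^2$ displacement bound.

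The main obstacle, in my view, is realizing the comparison-to-$\mu_*$ step with a decomposition in which every residual factor remains even quasi-concave in the relevant Brownian increments. A naive attempt---simply absorbing a quadratic term $-\kappa \|\bB_t - \bB_s\|^2$ into $\mu_*$---destroys quasi-concavity of the residual $\exp(\alpha/|x| + \kappa|x|^2)$, whose superlevel sets cease to be convex on large scales. The correct strategy is likely to work conditionally on the auxiliary $\lambda_{s,t}$ from the mixture representation, invoking GCI slice-by-slice in the $\lambda$-variables and then integrating out, combined with an ultraviolet cutoff of the Coulomb singularity at scale $\sim 1/\alpha$; the logarithmic corrections $(\log\alpha)^6$ and $(\log\alpha)^3$ should arise respectively from the entropy of this UV cutoff and from the boundary contributions near the endpoint $t = T$.
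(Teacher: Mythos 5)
Your proposal correctly identifies the right framework (even quasi-concave factors and Royen's GCI), but the concrete route you sketch is, at bottom, the mixture-of-Gaussians representation $1/r=\pi^{-1/2}\int_0^\infty \lambda^{-1/2}e^{-\lambda r^2}\,\de\lambda$ of Mukherjee--Varadhan and Betz--Polzer, and that route achieved only $m_{\eff}(\alpha)\gtrsim\alpha^{2/5}$ in prior work. The obstacle you name is real---splitting off a quadratic $\kappa\|\bB_t-\bB_s\|^2$ from a single Coulomb factor destroys quasi-concavity of the residual $\exp(\alpha/r+\kappa r^2)$---but your proposed repair (conditioning on $\lambda_{s,t}$ and applying GCI slice-by-slice) does not deal with it: once conditioned, the posterior of $\lambda_{s,t}$ is entangled with $\bB$, there is no GCI comparison that survives integrating out $\lambda$, and you give no quantitative control that could approach the quartic scale. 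The paper resolves the obstacle without any mixture representation. It truncates $V$ to $V_A$, and observes (Proposition~\ref{prop:gaussian-domination-computation}) that on the confinement event $\|\bB_s-\bB_t\|\leq R$, the function $V_A(r)+r^2/(2R^3)$ is decreasing on $[0,R]$; so on that event the interaction factor decomposes as a Gaussian reweighting times a factor that \emph{remains} quasi-concave. This restriction to a confinement event is exactly what forces the good/bad mixture decomposition (Lemma~\ref{lem:good-decomposition}), which you do not have.

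More fundamentally, your proposal is missing both mechanisms needed to reach $\alpha^4/(\log\alpha)^6$, and the line ``one expects $\bbE^{\mu_*}[\|\bB_T\|^2]$ already to be of the stated order'' simply assumes the hard part. A single application of the confinement step gives only $\beta\asymp\alpha/(\log\alpha)^{3/2}$, hence $m_{\eff}\gtrsim\alpha^{1/2}$ (Corollary~\ref{cor:suboptimal-bound}); reaching $\alpha^2$ requires the bootstrap of Section~\ref{sec:iterate}, where $R_{k+1}\asymp(\log\alpha)^{1/2}(R_k^3/\alpha)^{1/4}$ tightens iteratively via Lemmas~\ref{lem:unif-good} and \ref{lem:general-path-decomp}--\ref{lem:general-mostly-good}. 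The $(\log\alpha)$ factors come from this recursion and from Gaussian tail bounds at failure probability $\alpha^{-10}$, not from a ``UV-cutoff entropy.'' And the step from $\alpha^2$ to $\alpha^4$ is the slow-oscillation argument of Section~\ref{sec:slow-oscillation}: after confinement one can include nearest-neighbor interval couplings in the dominating Gaussian (see \eqref{eq:wt-bP-gamma-S-def}), under which the interval averages $\overline{\bB}_{[i,i+1]}$ have increment variance $O(1/\beta)$ rather than $O(1/\sqrt\beta)$ (Lemmas~\ref{lem:small-averaged-fluctuations}--\ref{lem:start-to-end-change}). Neither this coupling nor its consequence appears in your sketch. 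Finally, reflection positivity plays no role in the paper's proof; the Gaussian domination it uses is GCI-based, not FSS-based.
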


Our approach stems from the following observation. Although the Radon--Nikodym derivative $\de \wh\bbP_{\alpha,T}/\de \bbP$ fails to be log-concave, it is nonetheless approximated (via Riemann sums for the double integral in the exponent of \eqref{eq:polaron-defn}) by finite products of origin-symmetric functions of $\bB_{[0,T]}$. Moreover each of these functions is \emph{quasi-concave} because $V(r)=1/r$ is decreasing (ignoring for the moment the singularity at $V(0)$). The strategy of this paper is thus to upper bound $\sigma^2(\alpha)$ by systematically applying Royen's Gaussian correlation inequality. Indeed an infinite-dimensional, functional version of this inequality ought to imply $\wh\bbP_{\alpha,T}$ is dominated by $\bbP$ in that
\[
    \wh\bbP_{\alpha,T}[K]\geq \bbP[K]
\]
for any symmetric convex set $K\subseteq C([0,T];\bbR^3)$.

This domination by Brownian motion was previously used in \cite{betz2022functional} as a tightness condition toward proving a functional central limit theorem. But for the effective mass itself, it implies only that $\sigma^2(\alpha)\leq 1$, i.e. $m_{\eff}(\alpha)\geq 1$. To see how to progress further, let us imagine that $\bbP$-almost surely, the bound $\|\bB_s-\bB_t\|\leq R$ holds for all $|t-s|\leq 1$. 
\revedit{It is easily seen that $V(r)+\frac{r^2}{2R^3}$ is decreasing on $r\in [0,R]$; hence the Gaussian correlation inequality would show domination of $\wh\bbP_{\alpha,T}$ by a ``more confined'' Gaussian measure in which $\bbP$ is weighted by 
\[
    \exp
    \lt(
    -\frac{\alpha}{2\text{e}R^3}
    \int_0^T\int_0^T \|\bB_t-\bB_s\|^2 \cdot \one_{|t-s|\leq 1}
    ~\de t~\de s
    \rt).
\]
}
Analyzing this reweighted measure would then yield non-trivial lower bounds for $m_{\eff}(\alpha)$. Although $\|\bB_s-\bB_t\|$ is bounded only with high probability, our argument is built around a more precise version of this idea.

\begin{remark}
\label{rem:robust}
Our proof of Theorem~\ref{thm:main} certainly requires the base measure $\bbP$ to be Gaussian, and might become more involved if $\bbP$ did not have independent increments. However it is somewhat robust to changes in the interaction term. Changing the spatial dimension to be different from $3$ affects neither the argument nor the bounds obtained (so long as $Z_{\alpha,T}<\infty$). Moreover the factor $e^{-|t-s|}$ in \eqref{eq:polaron-defn} can be replaced by any non-negative continuous bounded function of $(t,s)$ which is uniformly positive for $|t-s|\leq c$ \revedit{(and decays fast enough that $Z_{\alpha,T}$ is finite)}. Likewise the potential $V(r)=1/r$ can be replaced by $V_{s,t}(\|\bB_t-\bB_s\|)$ for any decreasing differentiable functions $V_{s,t}:(0,\infty)\to\bbR$ such that $V'_{s,t}(r)\leq -C/r^2$ for some uniform constant $C$ and all $r,s,t>0$, again so long as $Z_{\alpha,T}<\infty$. 
Even taking a different power law $V(r)=r^{-p}$ for $0<p<2$ only affects the proof numerically, and in Remark~\ref{rem:effective-mass-general-p} we derive in this case the bound
\begin{equation}
\label{eq:effective-mass-general-p}
    m_{\eff}^{(p)}(\alpha)\geq 
    \frac{c_p\, \alpha^{\frac{4}{2-p}}}
    {(\log\alpha)^{\frac{4+2p}{2-p}}}.
\end{equation}
The exponent $\frac{4}{2-p}$ is consistent with a natural generalization \cite[Equation (1.30)]{mysliwy2022polaron} of the Pekar conjecture, so it is likely sharp for all $0<p<2$. Note that \cite{bley2017estimates} shows finiteness of $Z_{\alpha,T}$ for $p<2$. If $p>2$ then $Z_{\alpha,T}=\infty$ by considering the events $\{\sup_{t\in [0,1]} \|\bB_t\|\leq \eps\}$ for small $\eps$.
\end{remark}

\section{Preliminaries}
\label{sec:prelims}

In this section, after establishing some basic notations we review the crucially important Gaussian correlation inequality. Then we explain how to truncate the potential $V$ and discretize time, and then outline the idea for our main proof. We end the section by highlighting some notational conventions for various path measures.

\subsection{Basic Notation}

We always use $\mu$ to denote a centered Gaussian measure on a finite-dimensional real vector space $X$. Throughout we say a convex set $K\subseteq X$ is \emph{symmetric} if $K=-K$ is invariant under negation, and similarly define symmetric probability measures and functions on $X$. We write $\de\nu_1/\de\nu_2$ for the Radon--Nikodym derivative, and define $\mu^{\times 2}$ to be the dilated Gaussian measure with 
\begin{equation}
\label{eq:mu-times-2}
    \mu^{\times 2}(2A)=\mu(A)
\end{equation}
for all Borel $A\subseteq X$.

\revedit{We write $[k]=\{0,1,2\dots,k-1\}$ for $k$ a positive integer. We write $F\leq O(G)$ or $F=O(G)$ for non-negative $F,G$ to indicate that $F/G$ is at most an absolute constant, independent of any other quantities (e.g. $\alpha,T$ as well as $A,\eta$ defined later). We often use $C$ to denote such absolute constants. We write $F=\Theta(G)$ or equivalently $F\asymp G$ if $F\leq O(G)$ and $G\leq O(F)$. We use the notation $\propto$ to indicate equivalence of positive measures up to normalizing constants, and in particular often use it to \emph{define} a probability measure as in the following Definition~\ref{def:reweight}.
}

\begin{definition}
\label{def:reweight}
    For a probability measure $\nu$ on $X$ and non-negative function $f$ with $\bbE^{\nu}[f]\in (0,\infty)$, define the reweighting
    \[
    \de\nu^{(f)}(x)\propto f(x)\de\nu(x),
    \]
    given explicitly by
    \[
    \nu^{(f)}(A)=\frac{\int_A f(x)\de \nu(x)}{\int_X f(x)\de\nu(x)}.
    \]
    If $Q:X\to \bbR_{\geq 0}$ is a symmetric non-negative quadratic function, define
    \[
    \de\nu^{\la Q\ra}(x)\propto e^{-Q(x)}\de\nu(x).
    \]
\end{definition}

\subsection{Gaussian Correlation Inequality}

We say $f:X\to\bbR_{\geq 0}$ is \emph{quasi-concave} if the super level sets $\{x\in X~:~f(x)\geq C\}$ are convex for all $C\in\bbR$. We now recall Royen's Gaussian correlation inequality, proved in the remarkable paper \cite{royen2014simple}, as well as its functional form.

\begin{theorem}
\label{thm:GCI}
    Let $\mu$ be a centered Gaussian measure on $X$. Then the following statements hold:
    \begin{enumerate}
    \item For any symmetric convex sets $K_1,K_2\subseteq X$, 
    \begin{equation}
    \label{eq:GCI-1}
    \mu(K_1)\mu(K_2)\leq \mu(K_1\cap K_2).
    \end{equation}
    \item For any symmetric convex sets $K_1,K_2,\dots,K_n\subseteq X$,
    \begin{equation}
    \label{eq:GCI-2}
    \mu\lt(\bigcap_{j=1}^m K_j\rt)\cdot \mu\lt(\bigcap_{k=m+1}^n K_k\rt)
    \leq 
    \mu\lt(\bigcap_{i=1}^n K_i\rt).
    \end{equation}
    \item For any symmetric quasi-concave functions $f_1,f_2,\dots,f_n:X\to\mathbb R_{\geq 0}$,
    \begin{equation}
    \label{eq:GCI-3}
    \bbE^{x\sim \mu}\lt[\prod_{j=1}^m f_j(x)\rt]\cdot
    \bbE^{x\sim \mu}\lt[\prod_{k=m+1}^{n} f_k(x)\rt]
    \leq
    \bbE^{x\sim \mu}\lt[\prod_{i=1}^n f_i(x)\rt].
    \end{equation}
    \end{enumerate}
\end{theorem}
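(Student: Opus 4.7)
The plan is to treat the two-set inequality \eqref{eq:GCI-1} as a black box from \cite{royen2014simple}, deduce \eqref{eq:GCI-2} by an immediate reduction, and then pass to the functional form \eqref{eq:GCI-3} via the standard layer-cake representation. For part 1, I would simply quote Royen's remarkable proof from \cite{royen2014simple}; this is the deep analytic input (in fact Royen establishes a stronger multivariate Gaussian correlation statement from which the centered symmetric-convex case \eqref{eq:GCI-1} is immediate), and I would not attempt to reprove it.

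For part 2, the key observation is that a finite intersection of symmetric convex subsets of $X$ is again symmetric and convex. Setting $K=\bigcap_{j=1}^m K_j$ and $K'=\bigcap_{k=m+1}^n K_k$, both $K$ and $K'$ are symmetric and convex, so \eqref{eq:GCI-1} applied to the pair $(K,K')$ gives
\[
    \mu(K)\,\mu(K')\leq \mu(K\cap K')=\mu\lt(\bigcap_{i=1}^n K_i\rt),
\]
which is exactly \eqref{eq:GCI-2}.

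For part 3, I would use layer-cake together with Fubini. For each symmetric quasi-concave $f_i:X\to\bbR_{\geq 0}$ and each $t\geq 0$, the super-level set $L_i(t)=\{x\in X:f_i(x)\geq t\}$ is symmetric (since $f_i$ is symmetric) and convex (since $f_i$ is quasi-concave). Writing $f_i(x)=\int_0^{\infty}\one[f_i(x)\geq t_i]\,\de t_i$ and applying Fubini yields
\[
    \bbE^{x\sim \mu}\lt[\prod_{i=1}^n f_i(x)\rt]
    = \int_{\bbR_{\geq 0}^n}
    \mu\lt(\bigcap_{i=1}^n L_i(t_i)\rt)\,\de t_1\cdots \de t_n,
\]
and analogously for each of the two grouped expectations on the left-hand side of \eqref{eq:GCI-3}. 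Applying \eqref{eq:GCI-2} to the symmetric convex sets $\{L_i(t_i)\}_{i=1}^n$ pointwise in $(t_1,\dots,t_n)$ inside the integral gives \eqref{eq:GCI-3}. The only nontrivial step of the whole argument is part 1, for which I rely entirely on Royen's theorem; given that input, parts 2 and 3 are essentially bookkeeping.
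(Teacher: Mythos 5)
Your proposal is correct and follows essentially the same route as the paper: quote Royen for the two-set case, reduce the multi-set case by observing that intersections of symmetric convex sets are symmetric convex, and pass to the functional form via the layer-cake (super-level set) representation and Fubini/Tonelli.
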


\begin{proof}
    First, \eqref{eq:GCI-1} is the usual statement of the Gaussian correlation inequality as proved in \cite{royen2014simple} (see also the exposition \cite{latala2017royen}). Next, \eqref{eq:GCI-2} follows trivially from \eqref{eq:GCI-1} since the intersection of symmetric convex sets is again symmetric and convex. Finally \eqref{eq:GCI-3} follows from \eqref{eq:GCI-2} via Fubini and multi-linearity: each $f_i$ is a positive combination of indicators of symmetric convex sets, namely its own super-level sets. 
\end{proof}

\begin{definition}
    For symmetric probability measures $\mu,\nu$ on $X\simeq \bbR^n$, we write
    \[
    \nu\preceq \mu
    \]
    if $\de\nu/\de\mu$ is a finite product of symmetric quasi-concave functions, or a uniformly bounded a.e. limit of such.
\end{definition}

\begin{corollary}
\label{cor:GCI}
    Let $\mu,\nu$ be symmetric probability measures on $X\simeq \bbR^n$ with $\mu$ Gaussian and $\nu\preceq\mu$. Then $\nu(K)\geq \mu(K)$ for any symmetric convex set $K$ and $\bbE^{\nu}[f]\leq \bbE^{\mu}[f]$ for any \revedit{non-negative symmetric} convex function $f$.
\end{corollary}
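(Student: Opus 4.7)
The proposal is a direct consequence of Theorem~\ref{thm:GCI}(3) together with a layer-cake reduction. First suppose $g := \de\nu/\de\mu = \prod_{j=1}^n f_j$ with each $f_j$ symmetric quasi-concave and $\bbE^\mu[g] > 0$ (which holds since $\nu$ is a probability measure). Given a symmetric convex set $K$, the indicator $\one_K$ is itself a symmetric quasi-concave function from $X$ to $\bbR_{\geq 0}$. Applying \eqref{eq:GCI-3} to the collection $\{\one_K, f_1, \dots, f_n\}$, I would obtain
\[
\mu(K)\cdot \bbE^\mu[g] \;=\; \bbE^\mu[\one_K]\cdot \bbE^\mu\!\lt[\prod_{j=1}^n f_j\rt] \;\leq\; \bbE^\mu\!\lt[\one_K \prod_{j=1}^n f_j\rt] \;=\; \bbE^\mu[\one_K g].
\]
Dividing by $\bbE^\mu[g]$ turns the right-hand side into $\nu(K)$, yielding $\nu(K)\geq \mu(K)$.

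For the second assertion, let $f:X\to\bbR_{\geq 0}$ be symmetric and convex. Then every sublevel set $\{f\leq t\}$ is symmetric and convex, so the previous step gives $\nu(\{f\leq t\})\geq \mu(\{f\leq t\})$ for all $t\geq 0$. I would integrate the layer-cake representation $f(x)=\int_0^\infty \one_{f(x)>t}\,\de t$ under both measures to get
\[
\bbE^\nu[f] \;=\; \int_0^\infty \!\bigl(1-\nu(\{f\leq t\})\bigr)\,\de t \;\leq\; \int_0^\infty \!\bigl(1-\mu(\{f\leq t\})\bigr)\,\de t \;=\; \bbE^\mu[f],
\]
with both sides finite or both infinite; in either case the inequality is valid.

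Finally, to cover the limiting case in the definition of $\preceq$, suppose $g$ is the a.e.\ limit of a uniformly bounded sequence $g_k = \prod_{j=1}^{n_k} f_{j,k}$ of finite products of symmetric quasi-concave functions. The argument above establishes $\bbE^\mu[\one_K g_k]\geq \mu(K)\cdot \bbE^\mu[g_k]$ for every $k$. Since the $g_k$ are uniformly bounded by some constant $M$ and $\mu$ is a probability measure, dominated convergence lets me pass to the limit on both sides, yielding $\bbE^\mu[\one_K g]\geq \mu(K)\cdot \bbE^\mu[g]$; dividing by $\bbE^\mu[g]>0$ recovers the claim, and the convex-function version then follows by the same layer-cake argument. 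The only mild subtlety is checking that $\bbE^\mu[g]$ stays bounded away from zero (so that $\nu$ remains well-defined), which is automatic since $\nu$ is given as a probability measure. No step really serves as a genuine obstacle here; the work is all done by Royen's inequality as packaged in \eqref{eq:GCI-3}.
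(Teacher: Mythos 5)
Your proof is correct and is essentially the intended argument; the paper states Corollary~\ref{cor:GCI} without proof precisely because it follows in the way you describe — apply \eqref{eq:GCI-3} with $\one_K$ as one of the quasi-concave factors (so $m=1$), then extend to non-negative symmetric convex functions by layer-cake over their sublevel sets, and handle the limiting case in the definition of $\preceq$ by dominated convergence. No gaps; nothing further to add.
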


Note that the relation $\preceq$ is transitive, and is preserved by reweighting $\nu$ by a quasi-concave function, or by reweighting both $\nu$ and $\mu$ by \revedit{the same factor. We will often use the latter fact with weight factor} $\exp(-Q)$ for a non-negative symmetric quadratic $Q$. Finally we record an obvious but useful special case.

\begin{corollary}
\label{cor:quadratic-domination}
    Let $\mu$ be a centered Gaussian measure on $X$, and $Q:X\to \bbR_{\geq 0}$ a symmetric non-negative quadratic function. Then $\mu^{\la Q\ra}$ is a Gaussian measure and $\mu^{\la Q\ra}\preceq \mu$. 
\end{corollary}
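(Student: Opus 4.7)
The plan is to verify both claims directly from the explicit form of the Radon--Nikodym derivative $\de\mu^{\la Q\ra}/\de\mu \propto e^{-Q}$ given by Definition~\ref{def:reweight}. Everything is finite-dimensional, so I may work with densities. Since $Q$ is symmetric (i.e.\ $Q(-x)=Q(x)$) and quadratic, it contains no linear or constant term and can be written as $Q(x) = \tfrac{1}{2}\la x, Ax\ra$ for some positive semidefinite matrix $A$.

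For the first assertion, I would restrict to the support of $\mu$ (a subspace of $X$) and write the density of $\mu$ there as proportional to $\exp(-\tfrac{1}{2}\la x,\Sigma^+ x\ra)$, where $\Sigma^+$ is the pseudoinverse of the covariance. Multiplying by $e^{-Q(x)}$ yields a density proportional to $\exp(-\tfrac{1}{2}\la x, (\Sigma^+ + A)x\ra)$. Since $\Sigma^+ + A$ is positive semidefinite on the support, this is the density of a centered Gaussian measure, giving the first claim. (The normalizing constant $\bbE^{\mu}[e^{-Q}]$ is clearly finite and positive.)

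For the second assertion $\mu^{\la Q\ra}\preceq \mu$, I only need to exhibit $\de\mu^{\la Q\ra}/\de\mu$ as a finite product of symmetric non-negative quasi-concave functions. By Definition~\ref{def:reweight} this derivative is a positive constant times $e^{-Q(x)}$, i.e.\ a single factor. Symmetry is immediate from $Q(-x)=Q(x)$. Quasi-concavity follows since, for any $C > 0$, the super-level set $\{x: e^{-Q(x)} \geq C\} = \{x: Q(x) \leq -\log C\}$ is a sublevel set of the convex function $Q$ and hence convex; for $C \leq 0$ it equals all of $X$. Thus $e^{-Q}$ is itself a symmetric quasi-concave function, which trivially constitutes a ``finite product'' of such, yielding $\mu^{\la Q\ra}\preceq\mu$.

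There is no substantive obstacle here: the statement really is a packaging of the fact that Gaussian weighting by a non-negative quadratic preserves the Gaussian class, together with the observation that $e^{-Q}$ is log-concave and even. The only minor care needed is the degenerate case, which is handled by restricting to $\supp(\mu)$; this does not affect either conclusion.
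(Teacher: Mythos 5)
Your proof is correct and is exactly the argument the paper has in mind; the paper simply records this as ``an obvious but useful special case'' with no written proof. The two points you verify are precisely the content: $e^{-Q}$ is a single symmetric quasi-concave factor (sublevel sets of the convex form $Q$ are convex), and multiplying a centered Gaussian density by $e^{-Q}$ with $Q$ a non-negative quadratic form yields another centered Gaussian on the support.
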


\subsection{Cutting Off the Interaction and Finite Dimensional Approximation}

To remove technical issues, we will truncate the polaron interaction to be bounded and also discretize time. First, we define the cut-off potential
\begin{equation}
\label{eq:VA-defn}
    V_A(r)=
    \begin{cases}
    2A-A^2 r,\quad r\in [0,1/A]\\
    \frac{1}{r},\quad r\in [1/A,\infty).
    \end{cases}
\end{equation}
Thus $V_A$ is a uniformly bounded approximation to $V(r)=1/r$ which is still decreasing and convex. Note also that $V_A$ increases pointwise up to $V$ as $A\to\infty$. 

\begin{proposition}
\label{prop:gaussian-domination-computation}
    For any $A>1/R>0$, the function $V_A(r)+\frac{r^2}{2R^3}$ is decreasing on $[0,R]$.
\end{proposition}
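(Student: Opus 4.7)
The plan is a direct one-variable calculus check, splitting into the two regimes defining $V_A$. Let $f(r) = V_A(r) + \frac{r^2}{2R^3}$. Since $V_A$ is $C^1$ (one verifies the left and right derivatives at $r=1/A$ both equal $-A^2$), it suffices to show $f'(r) \leq 0$ on each of $[0,1/A]$ and $[1/A,R]$.

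On the linear piece $[0,1/A]$, I would compute
\[
f'(r) = -A^2 + \frac{r}{R^3},
\]
which is non-positive as long as $r \leq A^2 R^3$. Since on this interval $r \leq 1/A$, I need $1/A \leq A^2 R^3$, i.e. $A^3 R^3 \geq 1$, which is equivalent to the hypothesis $AR > 1$. On the Coulomb piece $[1/A, R]$, I would compute
\[
f'(r) = -\frac{1}{r^2} + \frac{r}{R^3} = \frac{r^3 - R^3}{r^2 R^3},
\]
which is non-positive precisely when $r \leq R$, covering the entire interval in question.

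There is no real obstacle here — the statement is essentially chosen so that on the relevant range the $1/r^2$ gradient of $V$ dominates the $r/R^3$ gradient of the quadratic regularizer, and the cutoff is designed so that the linear piece inherits the same monotonicity. The only thing to keep track of is the boundary point $r = 1/A$: the hypothesis $A > 1/R$ is exactly what ensures the regime-change happens before the quadratic term could overpower the linear slope $-A^2$. This single inequality is the crux. Combining the two pieces via continuity of $f'$ at $r=1/A$ yields $f' \leq 0$ throughout $[0,R]$, proving the claim.
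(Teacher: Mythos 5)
Your proof is correct and is essentially the same direct calculus check the paper performs; the paper simply packages it more tersely by noting that $V_A(r)+\frac{r^2}{2R^3}$ is convex (since $V_A$ is $C^1$ and piecewise convex) with derivative vanishing at $r=R$, which gives $f'\leq 0$ on $[0,R]$ without needing to separately verify the sign of $f'$ on the linear piece.
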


\begin{proof}
It suffices to note that $V_A(r)+\frac{r^2}{2R^3}$ is convex with derivative vanishing at $r=R$.
\end{proof}

Next, define the cut-off path measure $\wh\bbP_{\alpha,T}^{(A)}$ and weight function $W_{\alpha,T}^{(A)}:C([0,T];\bbR^3)\to \bbR$ by
\begin{align}
\label{eq:truncated-polaron}
    \de\wh\bbP_{\alpha,T}^{(A)}(\bB)
    &=
    \frac{W_{\alpha,T}^{(A)}(\bB) \de\bbP(\bB)}{Z_{\alpha,T}^{(A)}}
   ;
    \\
\label{eq:W-def}
    W_{\alpha,T}^{(A)}(\bB)
    &=
    \exp\lt(\alpha \int_0^T \int_0^T
    e^{-|t-s|} V_A(\|\bB_{s}-\bB_{t}\|)\rt).
\end{align}

\begin{proposition}
\label{prop:A-to-infty}
    For any $T>0$, 
    \[
    \lim_{A\to\infty}
    \bbE^{\wh\bbP_{\alpha,T}^{(A)}}[\|\bB_T\|^2]
    =
    \bbE^{\wh\bbP_{\alpha,T}}[\|\bB_T\|^2].
    \]
\end{proposition}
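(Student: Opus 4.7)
The plan is a straightforward monotone convergence argument together with an integrability check, so the main step is organizing the monotonicity of $V_A$ in $A$ and verifying that the limiting expectations are finite.

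First I would show that $V_A$ is pointwise non-decreasing in $A$ and satisfies $V_A(r) \uparrow V(r)$ as $A \to \infty$. The key observation is that $V_A$ is the tangent line to the convex function $V(r) = 1/r$ at $r = 1/A$, extended linearly on $[0,1/A]$: indeed $V_A(1/A) = A = V(1/A)$ and the left slope $-A^2$ matches $V'(1/A)$. By convexity of $V$, any tangent line lies pointwise below $V$, so $V_A(r) \le V(r)$ for all $r > 0$, with equality for $r \ge 1/A$. Moreover, for $A < A'$ the same convexity shows $V_A \le V_{A'}$ pointwise (on $[0, 1/A']$ both are linear and a direct comparison at $r=0$ and $r=1/A'$ gives the inequality; on $[1/A', 1/A]$, $V_{A'} = V$ lies above the tangent $V_A$; on $[1/A, \infty)$ both equal $V$). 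In particular, $V_A \uparrow V$ pointwise.

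Next, I would deduce that the weight $W_{\alpha,T}^{(A)}(\bB)$ is pointwise monotone non-decreasing in $A$ and converges a.s.\ to
\[
W_{\alpha,T}(\bB) \;\equiv\; \exp\!\lt(\alpha\int_0^T\!\!\int_0^T e^{-|t-s|} V(\|\bB_s - \bB_t\|)\,\de s\,\de t\rt),
\]
since for $\bbP$-a.e.\ Brownian path, $\{(s,t) : \bB_s = \bB_t\}$ has Lebesgue measure zero in $[0,T]^2$ and hence the integrand converges a.e.\ on $[0,T]^2$. By the monotone convergence theorem, $Z_{\alpha,T}^{(A)} \uparrow Z_{\alpha,T}$, and this limit is finite by the cited result of \cite{bley2017estimates}. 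Applying monotone convergence to the non-negative random variable $\|\bB_T\|^2 W_{\alpha,T}^{(A)}(\bB)$, which is also monotone in $A$, yields
\[
\bbE^{\bbP}\!\lt[\|\bB_T\|^2 W_{\alpha,T}^{(A)}(\bB)\rt] \;\uparrow\; \bbE^{\bbP}\!\lt[\|\bB_T\|^2 W_{\alpha,T}(\bB)\rt].
\]

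To justify that the limiting numerator is finite, I would apply Cauchy--Schwarz under $\bbP$:
\[
\bbE^{\bbP}\!\lt[\|\bB_T\|^2 W_{\alpha,T}(\bB)\rt]
\;\le\; \bbE^{\bbP}\!\lt[\|\bB_T\|^4\rt]^{1/2} \cdot \bbE^{\bbP}\!\lt[W_{\alpha,T}(\bB)^2\rt]^{1/2}.
\]
The first factor is finite (equal to a constant times $T^2$). For the second, note $W_{\alpha,T}^2 = W_{2\alpha,T}$ has the same form with coupling $2\alpha$, so $\bbE^{\bbP}[W_{\alpha,T}^2] = Z_{2\alpha,T} < \infty$ by \cite{bley2017estimates} again. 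Combining with $Z_{\alpha,T}^{(A)} \to Z_{\alpha,T} > 0$, taking ratios yields the claimed convergence of $\bbE^{\wh\bbP_{\alpha,T}^{(A)}}[\|\bB_T\|^2]$ to $\bbE^{\wh\bbP_{\alpha,T}}[\|\bB_T\|^2]$.

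The only nontrivial point is the integrability verification in the last paragraph; everything else is monotone convergence plus the pointwise monotonicity of $V_A$, which follows from the convexity of $V$.
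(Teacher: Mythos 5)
Your proposal is correct and follows essentially the same route as the paper: reduce to two monotone convergence statements for the numerator $\bbE^{\bbP}[\|\bB_T\|^2 W_{\alpha,T}^{(A)}]$ and the partition function $Z_{\alpha,T}^{(A)}$, then take ratios. The extra work you do (checking $V_A\uparrow V$ carefully, and verifying finiteness of the limiting numerator via Cauchy--Schwarz with $W_{\alpha,T}^2 = W_{2\alpha,T}$) is harmless and makes the argument more self-contained, but is not something the paper spells out.
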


\begin{proof}
    Given the finiteness of $Z_{\alpha,T}$ as shown in \cite{bley2017estimates}, it suffices to verify the identities
    \begin{align*}
    &\lim_{A\to\infty}
    \bbE^{\bbP}
    \lt[
    \exp\lt(
    \alpha  \int_0^T \int_0^T
    e^{-|t-s|} V_A(\|\bB_t-\bB_s\|) ~\de t~\de s
    \rt)
    \rt]
    \\
    &\quad\quad
    =
    \bbE^{\bbP}
    \lt[
    \exp\lt(
    \alpha\int_0^T \int_0^T
    e^{-|t-s|} V(\|\bB_t-\bB_s\|) ~\de t~\de s
    \rt)
    \rt],
    \\
    &\lim_{A\to\infty}
    \bbE^{\bbP}
    \lt[
    \|\bB_T\|^2
    \exp\lt(
    \alpha  \int_0^T \int_0^T
    e^{-|t-s|} V_A(\|\bB_t-\bB_s\|) ~\de t~\de s
    \rt)
    \rt]
    \\
    &\quad\quad
    =
    \bbE^{\bbP}
    \lt[
    \|\bB_T\|^2
    \exp\lt(
    \alpha\int_0^T \int_0^T
    e^{-|t-s|} V(\|\bB_t-\bB_s\|) ~\de t~\de s
    \rt)
    \rt]
    .
    \end{align*}
    Both follow immediately from the monotone convergence theorem.
\end{proof}

We next define a time discretization into $\eta$-increments, where $\eta^{-1}\in\bbZ_+$ is always assumed. Let $\bbP^{(\eta)}_{[0,T]}$ be the law of the piecewise-linear process which agrees with $\bB_t$ at each time $t\in \eta\bbZ$, for $\bB\sim \bbP_{[0,T]}$ a Brownian motion. We denote by $C^{(\eta)}([0,T];\bbR^3)$ its support, which consists of piecewise-linear functions on $\eta$-intervals. Similarly to before, let
\[
\wh\bbP_{\alpha,T}^{(A,\eta)}=\lt(\bbP^{(\eta)}_{[0,T]}\rt)^{(W_{\alpha,T}^{(A)})}
\]
be the reweighting of $\bbP^{(\eta)}_{[0,T]}$ by $W_{\alpha,T}^{(A)}$.

\begin{proposition}
\label{prop:discrete-benign}
    Let $f:C([0,T];\bbR^3)\to \bbR_{\geq 0}$ be a \revedit{non-negative} continuous and bounded function which is not identically zero. Then 
    \begin{equation}
    \label{eq:continuous-mapping}
    \lim_{\eta\to 0}~
    (\bbP_{[0,T]}^{(\eta)})^{(f)}
    =
    (\bbP_{[0,T]})^{(f)}
    \end{equation}
    holds as probability measures on $C([0,T];\bbR^3)$. \revedit{In addition}, for any $T,A$ we have
    \revedit{
    \begin{equation}
    \label{eq:delta-safe}
    \lim_{\eta\to 0}
    \sup_{t\in [0,T]}
    \big|\bbE^{\wh\bbP_{\alpha,T}^{(A,\eta)}}[\|\bB_t\|^2]
    -
    \bbE^{\wh\bbP_{\alpha,T}^{(A)}}[\|\bB_t\|^2]
    \big|
    =0.
    \end{equation}
    }
\end{proposition}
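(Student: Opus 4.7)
The plan is to couple everything to a single Brownian motion. Let $\bB\sim\bbP_{[0,T]}$ and let $\bB^{(\eta)}$ be its piecewise-linear interpolant on the grid $\eta\bbZ\cap[0,T]$, so that $\bB^{(\eta)}$ has law $\bbP^{(\eta)}_{[0,T]}$. The almost sure uniform continuity of Brownian motion then yields
\[
\sup_{t\in[0,T]}\|\bB^{(\eta)}_t-\bB_t\|\xrightarrow{\eta\to 0}0\quad\text{almost surely,}
\]
so in particular $\bB^{(\eta)}\to\bB$ almost surely in $C([0,T];\bbR^3)$.

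For the first claim, observe that any bounded continuous test function $g:C([0,T];\bbR^3)\to\bbR$ satisfies $g(\bB^{(\eta)})\to g(\bB)$ a.s.\ by the continuous mapping theorem, and likewise for the continuous bounded $f$. Bounded convergence then gives $\bbE[f(\bB^{(\eta)})g(\bB^{(\eta)})]\to\bbE[f(\bB)g(\bB)]$, and similarly with $g\equiv 1$. Since $f$ is non-negative and not identically zero, the normalizing constant $\bbE[f(\bB)]$ is strictly positive, so dividing yields convergence of $\bbE^{(\bbP^{(\eta)})^{(f)}}[g]$ to $\bbE^{(\bbP)^{(f)}}[g]$, which is weak convergence of probability measures on $C([0,T];\bbR^3)$.

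For the second claim, the crucial point is that once $V$ is replaced by the truncated $V_A$, the weight $W_{\alpha,T}^{(A)}$ is a \emph{bounded continuous} function on $C([0,T];\bbR^3)$: boundedness follows from $0\le V_A\le 2A$ together with $\int_0^T\!\!\int_0^T e^{-|t-s|}\de t\de s\le 2T$, giving $W_{\alpha,T}^{(A)}\le e^{4\alpha AT}$, while continuity follows from dominated convergence applied to the sup-norm continuity of $\bB\mapsto V_A(\|\bB_s-\bB_t\|)$. Hence the first claim applies with $f=W_{\alpha,T}^{(A)}$ and yields weak convergence $\wh\bbP_{\alpha,T}^{(A,\eta)}\to\wh\bbP_{\alpha,T}^{(A)}$. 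To upgrade this to convergence of the unbounded functional $\|\bB_t\|^2$, uniformly in $t$, I would again work in the coupling above and write
\[
\bbE^{\wh\bbP_{\alpha,T}^{(A,\eta)}}[\|\bB_t\|^2]
=\frac{\bbE\bigl[\|\bB^{(\eta)}_t\|^2\,W_{\alpha,T}^{(A)}(\bB^{(\eta)})\bigr]}{\bbE\bigl[W_{\alpha,T}^{(A)}(\bB^{(\eta)})\bigr]}.
\]
The denominator is $t$-independent and converges to the positive limit $\bbE[W_{\alpha,T}^{(A)}(\bB)]$ by the first claim. For the numerator, note the pointwise bound $\|\bB^{(\eta)}_t\|\le\|\bB\|_{C([0,T])}$ (piecewise-linear interpolation does not increase the sup norm), so the integrands are dominated by $\|\bB\|_{C([0,T])}^2\cdot e^{4\alpha AT}$, which lies in $L^1(\bbP)$ since Brownian sup-norms have all Gaussian moments.

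The one step that needs slight care is the uniformity in $t$. For this I would invoke dominated convergence on the single random variable
\[
\Delta_\eta:=\sup_{t\in[0,T]}\bigl|\|\bB^{(\eta)}_t\|^2\,W_{\alpha,T}^{(A)}(\bB^{(\eta)})-\|\bB_t\|^2\,W_{\alpha,T}^{(A)}(\bB)\bigr|,
\]
which tends to $0$ almost surely (because $\sup_t\|\bB^{(\eta)}_t-\bB_t\|\to 0$ a.s., $W_{\alpha,T}^{(A)}$ is continuous, and $t\mapsto\|\bB_t\|^2$ is uniformly continuous on $[0,T]$ a.s.) and is dominated by $2\|\bB\|_{C([0,T])}^2 e^{4\alpha AT}$. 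Then $\bbE[\Delta_\eta]\to 0$ gives the desired uniform convergence of numerators in $t$, completing the argument. The main (minor) obstacle is simply organizing this envelope cleanly so that the dominated-convergence bound is genuinely $t$-independent; everything else reduces to weak convergence plus continuity of $V_A$ and $W_{\alpha,T}^{(A)}$.
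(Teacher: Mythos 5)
Your proof is correct and takes a genuinely different, more explicit route than the paper's.  The paper argues abstractly: it uses the continuous mapping theorem to get weak convergence of $\|\bB_t\|^2$ for each fixed $t$, then invokes uniform sub-Gaussian tails for $\bB_t$ (over $t\in[0,T]$, $\eta\in(0,1]$) plus the uniform boundedness of $W_{\alpha,T}^{(A)}$ to get a uniform bound on $\bbE^{\wh\bbP_{\alpha,T}^{(A,\eta)}}[\|\bB_t\|^4]$, so that uniform integrability upgrades weak convergence to $L^1$ convergence.  You instead build a single almost-sure coupling of $\bB$ and its piecewise-linear interpolant $\bB^{(\eta)}$ on one probability space, observe that piecewise-linear interpolation does not increase the sup-norm, and then run dominated convergence against the $t$-independent envelope $2\|\bB\|_{C([0,T])}^2\,e^{4\alpha AT}$.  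The two methods are equivalent in strength for this statement, but your approach makes the uniformity over $t$ entirely explicit (the envelope $\Delta_\eta$ already controls $\sup_t$), whereas the paper's proof leaves the step from ``per-$t$ $L^1$ convergence'' to the supremum in \eqref{eq:delta-safe} implicit (it is accessible from the same uniform moment bounds, but the paper does not spell it out).  Your version is also self-contained in that it does not invoke the ``well known'' $f\equiv 1$ case as a black box, since the a.s.\ coupling $\sup_t\|\bB^{(\eta)}_t - \bB_t\|\to 0$ already supplies it.  The trade-off is that you need to observe the convexity fact $\|\bB^{(\eta)}_t\|\le\|\bB\|_{C([0,T])}$ and Fernique-type integrability of $\|\bB\|_{C([0,T])}^2$; these are mild, and the resulting argument is tighter on the uniformity claim.
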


\begin{proof}
    The case $f\equiv 1$ is well known. To obtain the general case of \eqref{eq:continuous-mapping}, let $g:C([0,T];\bbR^3)\to \bbR$ be bounded and continuous. Then the continuous mapping theorem yields
    \[
    \lim_{\eta\to 0}
    \bbE^{\bbP^{(\eta)}_{[0,T]}}
    \lt[
    f(\bB)g(\bB)
    \rt]
    =
    \bbE^{\bbP_{[0,T]}}
    \lt[
    f(\bB)g(\bB)
    \rt]
    \]
    which easily implies \eqref{eq:continuous-mapping}. 

    \revedit{For the second part, \eqref{eq:continuous-mapping} implies the result with $\|\bB_t\|^2$ replaced by any bounded continuous function of $\bB_{[0,T]}$, i.e. one has weak convergence of $\|\bB_t\|^2$.
    Moreover note that $\bB_t$ has uniformly sub-Gaussian tails for each $\bbP_{[0,T]}^{(\eta)}$ as $t\in [0,T]$ and $\eta\in [0,1]$ vary.
    As $W_{\alpha,T}^{(A)}$ is uniformly bounded for fixed $T$ we deduce that the expectations $\bbE^{\wh\bbP_{\alpha,T}^{(A,\eta)}}[\|\bB_t\|^4]$ are uniformly bounded across $t\in [0,T]$ and $\eta\in (0,1]$. 
    For the same reason, $\bbE^{\wh\bbP_{\alpha,T}^{(A)}}[\|\bB_t\|^4]<\infty$.
    Hence \eqref{eq:delta-safe} follows.}
\end{proof}

In light of Propositions~\ref{prop:A-to-infty} and \ref{prop:discrete-benign}, Theorem~\ref{thm:main} will follow if we prove that for all $A\geq\alpha$,
\begin{equation}
\label{eq:desired-bound-after-simplification}
    \lim_{\eta\to 0}
    \bbE^{\wh\bbP_{\alpha,T}^{(A,\eta)}}[\|\bB_T\|^2]
    \leq
    \frac{C(\log\alpha)^6) T}{\alpha^4}+\frac{C(\log\alpha)^3}{\alpha^2}.
\end{equation}
Establishing the bound \eqref{eq:desired-bound-after-simplification} will be our main goal for the remainder of the paper.

\begin{remark} 
It should be possible to implement our arguments directly in continuous time. However this creates technical complications due to the important Lemma~\ref{lem:good-decomposition}, which constructs a measure $\nu_{\bad_1}$ absolutely continuous with respect to $\mu$ such that $\nu_{\bad_1}\preceq\mu^{\times 2}$. It is not obvious how to make sense of this in infinite dimensions since $\mu$ and $\mu^{\times 2}$ become singular. We believe $\preceq$ can be suitably defined based on finite-dimensional marginals (for example \cite[Proposition 4.2.6]{bogachev1998gaussian} shows how to approximate infinite-dimensional convex sets by convex cylinder sets in an abstract Wiener space). However this route requires reproving many basic properties that are obvious in finite dimension. On the other hand as we have just seen, the continuous mapping theorem suffices as an easy bridge between discrete and continuous time.
\end{remark}

The interaction $V_A(\cdot)$ appears in our main argument via Proposition~\ref{prop:gaussian-domination-general} below. Below we say a function $\bbR^2\to\bbR$ is $\bbZ^2$-piecewise-constant if it is 
constant on each $[i,i+1)\times [j,j+1)$.

\begin{proposition}
\label{prop:gaussian-domination-general}
    Fix $A>0$. Let $\bbQ^{\dagger}$ and $\bbQ$ be probability measures on $C^{(\eta)}([0,T];\bbR^3)$ such that $\bbQ$ is Gaussian and $\bbQ^{\dagger}\preceq\bbQ$. Suppose that for a $\bbZ^2$-piecewise-constant function $R:\bbR_{\geq 0}^2\to [1/A,\infty)\cup\{+\infty\}$, the bound $\|\bB_s-\bB_t\|\leq R(s,t)$ holds $\bbQ^{\dagger}$-almost surely.
    Define $\wh\bbQ$ and $\wt\bbQ$ by
    \begin{align*}
    \de\wh\bbQ(\bB)
    &\propto
    \exp\lt(
    \alpha\int_0^T \int_0^T
    e^{-|t-s|} V_A(\|\bB_t-\bB_s\|) ~\de t~\de s
    \rt)
    \de\bbQ^{\dagger}(\bB)\,;
    \\
    \de\wt\bbQ(\bB)
    &\propto
    \exp\lt(
    -\int_0^T \int_0^T
    F(s,t)\|\bB_s-\bB_t\|^2 ~\de t~\de s
    \rt)
    \de\bbQ(\bB)
    \end{align*}
    for a $\bbZ^2$-piecewise-constant function $F$ satisfying
    \begin{equation}
    \label{eq:F-bound}
    0\leq F(s,t)\leq \frac{e^{-|t-s|} \alpha}{2R(s,t)^3}\quad\forall s,t\in\bbR.
    \end{equation}
    Then $\wt\bbQ$ is a Gaussian measure and
    \[
    \wh\bbQ\preceq \wt\bbQ\preceq \bbQ.
    \]
\end{proposition}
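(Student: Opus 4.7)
The proposition splits into the Gaussianity of $\wt\bbQ$ and the two inequalities $\wt\bbQ \preceq \bbQ$ and $\wh\bbQ \preceq \wt\bbQ$. The first two facts are immediate: the functional $Q(\bB) = \int_0^T\int_0^T F(s,t)\|\bB_s-\bB_t\|^2~\de s~\de t$ is a non-negative symmetric quadratic form on the finite-dimensional space $C^{(\eta)}([0,T];\bbR^3)$ (symmetry since $Q(-\bB)=Q(\bB)$, non-negativity since $F\ge 0$), so Corollary~\ref{cor:quadratic-domination} immediately yields that $\wt\bbQ = \bbQ^{\la Q\ra}$ is Gaussian and $\wt\bbQ \preceq \bbQ$.

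The substance is the domination $\wh\bbQ \preceq \wt\bbQ$. Computing the Radon--Nikodym derivative,
\[
\frac{\de\wh\bbQ}{\de\wt\bbQ}(\bB) \;\propto\; \frac{\de\bbQ^\dagger}{\de\bbQ}(\bB)\cdot\exp\lt(\int_0^T\int_0^T \psi_{s,t}(\|\bB_s-\bB_t\|)~\de s~\de t\rt),
\quad \psi_{s,t}(r) := \alpha e^{-|t-s|} V_A(r)+F(s,t)\,r^2.
\]
The first factor is already a product of symmetric quasi-concave functions by the hypothesis $\bbQ^\dagger \preceq \bbQ$, so it suffices to exhibit the exponential factor as a uniformly bounded almost-everywhere limit of such products. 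I would Riemann-sum approximate the double integral on a mesh refining the $\bbZ^2$-grid on which $F$ and $R$ are constant; this expresses the exponential factor as a product of terms of the form $\exp\bigl(\epsilon_k\psi_{s_k,t_k}(\|\bB_{s_k}-\bB_{t_k}\|)\bigr)$, each depending on $\bB$ only through the linear projection $\bB\mapsto \bB_{s_k}-\bB_{t_k}$. The key observation is that on $[0,R(s,t)]$ the one-variable function $\psi_{s,t}(r)$ is decreasing: Proposition~\ref{prop:gaussian-domination-computation} gives this property for $V_A(r)+r^2/(2R(s,t)^3)$, and the bound \eqref{eq:F-bound} means that passing to $\psi_{s,t}$ only subtracts a non-negative multiple of $r^2$ from the scaled version of this function, preserving decreasingness on $[0,R(s,t)]$.

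To turn each Riemann-sum factor into a globally symmetric quasi-concave function of $\bB$, I would insert the indicator $\one_{\{\|\bB_{s_k}-\bB_{t_k}\|\le R(s_k,t_k)\}}$ of a symmetric convex set. This insertion does not alter the Radon--Nikodym derivative because $\de\bbQ^\dagger/\de\bbQ$ vanishes outside the support $\{\|\bB_s-\bB_t\|\le R(s,t) \text{ for all } (s,t)\}$ of $\bbQ^\dagger$, so the inserted indicator equals $1$ wherever the product is nonzero. Each modified factor is then a positive decreasing function of $\|\bB_{s_k}-\bB_{t_k}\|$ on $[0,R(s_k,t_k)]$ and vanishes beyond, hence is symmetric and quasi-concave in $\bB_{s_k}-\bB_{t_k}$, and by pullback through a linear map, in $\bB$. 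Uniform boundedness of the product on the support holds because $V_A\le 2A$ and $F(s,t)\|\bB_s-\bB_t\|^2\le \alpha e^{-|t-s|}/(2R(s,t))\le \alpha A/2$ using $R\ge 1/A$, so the integrand is bounded by $O(\alpha A)$ on $[0,T]^2$. The main obstacle is the gap between the only locally decreasing $\psi_{s,t}$ and global quasi-concavity: bridging this requires the support information carried by $\bbQ^\dagger$, which is what allows the indicator insertion to be lossless precisely because $\bbQ^\dagger$ is already supported on the intersection of the balls $\{\|\bB_s-\bB_t\|\le R(s,t)\}$.
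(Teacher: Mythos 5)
Your argument is correct and follows essentially the same route as the paper: reduce Gaussianity and $\wt\bbQ\preceq\bbQ$ to Corollary~\ref{cor:quadratic-domination}, compute the Radon--Nikodym derivative $\de\wh\bbQ/\de\wt\bbQ$, observe that $\psi_{s,t}(r)=\alpha e^{-|t-s|}V_A(r)+F(s,t)r^2$ is decreasing on $[0,R(s,t)]$ via Proposition~\ref{prop:gaussian-domination-computation} and \eqref{eq:F-bound}, and conclude via a Riemann-sum approximation. The one place where you deviate from the paper is the bridge from local to global quasi-concavity: the paper extends $\psi_{s,t}$ past $R(s,t)$ to a decreasing function on all of $[0,\infty)$ and notes that this extension is invisible on $\supp\bbQ^\dagger$, whereas you instead multiply each Riemann factor by the indicator $\one_{\{\|\bB_{s_k}-\bB_{t_k}\|\le R(s_k,t_k)\}}$ and observe this is lossless for the identical reason. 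These are two cosmetic variants of the same idea, both exploiting the a.s. confinement of $\bbQ^\dagger$, so there is no substantive difference in strategy or what gets proved.
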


\begin{proof}
Corollary~\ref{cor:quadratic-domination} implies that $\wt\bbQ$ is a Gaussian measure and $\wt\bbQ\preceq \bbQ$. To show $\wh\bbQ\preceq \wt\bbQ$, note that the Radon-Nikodym derivative is by definition
\begin{equation}
\label{eq:RN-derivative-to-Riemann-sum}
    \frac{\de\wh\bbQ}{\de \wt\bbQ}
    =
    C\cdot
    \exp\lt(
    \int_0^T \int_0^T
    \alpha
    e^{-|t-s|}V_A(\|\bB_s-\bB_t\|)
    +
    F(s,t)\|\bB_s-\bB_t\|^2 ~\de t~\de s
    \rt)\cdot
     \frac{\de\bbQ^{\dagger}}{\de\bbQ}
    .
\end{equation}
Proposition~\ref{prop:gaussian-domination-computation} and the assumption~\eqref{eq:F-bound} together imply that 
\[
    r\mapsto 
    \alpha e^{-|t-s|}V_A(r)
    +
    F(s,t)r^2
\]
is decreasing for $r\in [0,R(s,t)]$, hence agrees on this set with a uniformly bounded, decreasing function with domain $[0,\infty)$. Moreover the integral 
\[
    \int_0^T \int_0^T
    \alpha 
    e^{-|t-s|}V_A(\|\bB_s-\bB_t\|)
    +
    F(s,t)\|\bB_s-\bB_t\|^2 ~\de t~\de s
\]
is approximated in measure (via Riemann summation) by uniformly bounded finite sums of symmetric quasi-concave functions on $C^{(\eta)}([0,T];\bbR^3)$. (The $\bbZ^2$-piecewise-constant conditions and the boundedness and continuity of $V_A$ ensure there are no difficulties in this approximation.) 
\revedit{Because the exponential of any symmetric quasi-concave function is again symmetric quasi-concave, it follows that the exponential weight factor in} \eqref{eq:RN-derivative-to-Riemann-sum} is approximated by finite \emph{products} of symmetric quasi-concave functions. This completes the proof.
\end{proof}

\subsection{Proof Outline}
\label{subsec:outline}

We now present a summary of our main argument. We begin by outlining a simplified proof which yields the weaker bound
\revedit{ 
\[
\bbE^{\wh\bbP_{\alpha,T}^{(A,\eta)}}[\|\bB_T\|^2/T]\leq\frac{(\log(\alpha T))^C}{\alpha^2}.
\]
}
Then we briefly discuss some of the arguments needed to remove the factors of $\log T$ and improve the exponent of $\alpha$.

First, we will always assume $T\in\bbZ_+$ for convenience; if not, so long as $T\geq 1$ we can rescale time slightly to make it so. We also fix $A\geq\alpha$. 
\revedit{
Define $K_{R,[0,T]}^{(\eta)}\subseteq C^{(\eta)}([0,1];\bbR^3)$ to be the symmetric convex set
\[
    K_{R,[0,T]}^{(\eta)}=\lt\{\bB\in C^{(\eta)}([0,T];\bbR^3)~:~\sup_{0\leq i\leq T-1}\sup_{t,s\in [i,i+1]}\|\bB_t-\bB_s\|\leq R\rt\}.
\]
}
Then for \revedit{$R=10\sqrt{\log(\alpha T)}$}, Proposition~\ref{prop:gaussian-domination-general} implies that $\wh\bbP_{\alpha,T}^{(A,\eta)}\preceq \bbP^{(\eta)}_{[0,T]}$ and so \revedit{an easy union bound over integers $0\leq i\leq T-1$ implies}
\begin{equation}
\label{eq:confinement-basic-example}
    \wh\bbP_{\alpha,T}^{(A,\eta)}[K_{R,[0,T]}^{(\eta)}]
    \geq 
    \bbP^{(\eta)}_{[0,T]}[K_{R,[0,T]}^{(\eta)}]
    \geq
    1-\alpha^{-10}.
\end{equation}
Let $\wh \bbP_{\alpha,T,R}^{(A,\eta)}$ be the law of $\wh\bbP_{\alpha,T}^{(A,\eta)}$ conditioned to lie in $K_{R,[0,T]}^{(\eta)}$. Then we have the total variation bound
\begin{equation}
\label{eq:tv-bound-simple}
    \|\wh \bbP_{\alpha,T,R}^{(A,\eta)}-\wh\bbP_{\alpha,T}^{(A,\eta)}\|_{\TV}
    \leq \alpha^{-10}
\end{equation}
and by definition
\begin{equation}
\label{eq:GCI-simple-example}
    \wh \bbP_{\alpha,T,R}^{(A,\eta)}
    \preceq
    \wh\bbP_{\alpha,T}^{(A,\eta)}.
\end{equation}
Proposition~\ref{prop:gaussian-domination-general} then implies
\begin{equation}
\label{eq:comparison-example}
    \wh\bbP_{\alpha,T,R}^{(A,\eta)}\preceq \wt\bbP^{(\eta)}_{[0,T]}
\end{equation}
where $\wt\bbP^{(\eta)}_{[0,T]}$ is defined by
\begin{equation}
\label{eq:confinement-example}
    \de\wt\bbP_{[0,T]}^{(\eta)}(\bB)
    \propto 
    \exp\lt(
    -\frac{\alpha}{2\text{e}R^3}
    \sum_{i=0}^{T-1}
    \int_{i}^{i+1}
    \int_{i}^{i+1}
    \|\bB_t-\bB_s\|^2 \de t\de s
    \rt) \de\bbP_{[0,T]}^{(\eta)}(\bB)
    .
\end{equation}

The measure $\wt\bbP_{[0,T]}^{(\eta)}$ has independent increments on each interval $[i,i+1]$ (and the dimension $d=3$ has also become irrelevant). Moreover one may expect the additional quadratic weighting in \eqref{eq:confinement-example} to yield smaller variance paths under $\wt\bbP_{[0,T]}^{(\eta)}$ than for the base measure $\bbP_{[0,T]}^{(\eta)}$. In fact as shown in Section~\ref{sec:first-step-main} \revedit{(see Lemma~\ref{lem:unif-good}), for an absolute constant $C$ we have} 
\begin{equation}
\label{eq:example-variance-bound}
    \bbE^{\wt\bbP_{[i,i+1]}^{(\eta)}}[\sup_{t\in [i,i+1]}\|\bB_{t}-\bB_i\|^2]
    \leq
    \revedit{\log(\alpha T)}\sqrt{\frac{CR^3}{\alpha}}.
\end{equation}
\revedit{Recalling \eqref{eq:confinement-basic-example}, we have the simple bound
\begin{align*}
    \bbE^{\wh\bbP_{\alpha,T}^{(A,\eta)}}[\|\bB_T\|^2 \cdot (1-\one_{K_{R,[0,T]}^{(\eta)}})]
    &\leq
    \bbE^{\wh\bbP_{\alpha,T}^{(A,\eta)}}[\|\bB_T\|^4]^{1/2}
    \cdot 
    \lt(1-
    \wh\bbP_{\alpha,T}^{(A,\eta)}[K_{R,[0,T]}^{(\eta)}]
    \rt)^{1/2}
    \\
    &\leq
    \bbE^{\bbP_{[0,T]}^{(\eta)}}[\|\bB_T\|^4]^{1/2}
    \cdot 
    \alpha^{-5}
    \\
    &\leq O(T\alpha^{-5}).
\end{align*}
Combining with the Gaussian correlation inequality and the fact that $\wt\bbP$ has i.i.d. increments,  
\begin{equation}
\label{eq:example-key-bound}
\begin{aligned}
    \bbE^{\wh\bbP_{\alpha,T}^{(A,\eta)}}[\|\bB_{T}\|^2]
    &\leq 
    \bbE^{\wh\bbP_{\alpha,T,R}^{(A,\eta)}}[\|\bB_{T}\|^2]
    +
    \bbE^{\wh\bbP_{\alpha,T}^{(A,\eta)}}[\|\bB_T\|^2 \cdot (1-\one_{K_{R,[0,T]}^{(\eta)}})]
    \\
    &\stackrel{\eqref{eq:tv-bound-simple}}{\leq}
    \bbE^{\wh\bbP_{\alpha,T,R}^{(A,\eta)}}[\|\bB_{T}\|^2]
    + O(T\alpha^{-5})
    \\
    &\stackrel{\eqref{eq:comparison-example}}\leq \bbE^{\wt\bbP_{[0,T]}^{(\eta)}}[\|\bB_{T}\|^2] + O(T\alpha^{-5})
    \\
    &=
    T\cdot \bbE^{\wt\bbP_{[0,1]}^{(\eta)}}[\|\bB_{1}\|^2] 
    + 
    O(T\alpha^{-5})
    \\
    &\leq
    T\log(\alpha T)\cdot O\lt(
    \sqrt{\frac{R^3}{\alpha}}+\alpha^{-5}
    \rt)
    .
\end{aligned}
\end{equation}
}
If the logarithmic factors were not present, then \eqref{eq:example-key-bound} would already imply a lower bound of $\alpha^{1/2}$ for the effective mass. Moreover the exponent of $\alpha$ can be improved recursively. Namely it can be deduced from \eqref{eq:example-variance-bound} that
\begin{equation}
\label{eq:example-iterated-confinement}
    \revedit{\wt\bbP^{(\eta)}_{[0,T]}}[K^{(\eta)}_{R_2,[0,T]}]\geq 1-\alpha^{-10}
\end{equation}
holds for a smaller radius
\begin{equation}
\label{eq:R-iteration-example}
    R_2\leq O\lt((\log \alpha T)^{O(1)}\cdot \sqrt[4]{\frac{R^3}{\alpha}}\rt).
\end{equation}
\revedit{The idea here is that $\wt\bbP^{(\eta)}$ is a more confined Gaussian measure that the Brownian motion we started with.}
Then by \eqref{eq:comparison-example}, we find that \eqref{eq:example-iterated-confinement} holds also under $\wh\bbP_{\alpha,T,R}^{(A,\eta)}$, and by \eqref{eq:tv-bound-simple} a similar bound holds for the original measure $\wh\bbP_{\alpha,T}^{(A,\eta)}$.
This allows us to truncate further using the event $K_{R_2,[0,T]}^{(\eta)}$, then replace $\frac{\alpha}{2\text{e}R^3}$ in \eqref{eq:confinement-example} by $\frac{\alpha}{2\text{e} R_2^3}$ to obtain a yet more confined Gaussian measure, and so on.

Iterating eventually yields a bound of the form
\begin{equation}
\label{eq:example-variance-bound-iterated}
    \bbE[\|\bB_{i+1}-\bB_i\|^2]
    \leq
    \frac{\revedit{(\log \alpha T)^{O(1)}}}{\alpha^2},
\end{equation}
where the expectation is taken relative to a Gaussian measure which dominates ``most of'' $\wh\bbP_{\alpha,T}^{(A,\eta)}$, similarly to \eqref{eq:example-variance-bound}. Like $\wt\bbP_{[0,T]}^{(\eta)}$, this Gaussian measure has independent integer increments and so \eqref{eq:example-key-bound} extends to
\begin{equation}
\label{eq:basic-conclusion-example}
    \bbE^{\wh\bbP_{\alpha,T}^{(A,\eta)}}[\|\bB_T\|^2/T]\leq \frac{\revedit{(\log \alpha T)^{O(1)}}}{\alpha^2}.
\end{equation}

The estimate \eqref{eq:basic-conclusion-example} is unsatisfactory in two ways. The first is that due to the $T$-dependence of the logarithmic factors, it does not directly imply \emph{anything} about the effective mass due to the order in which the $T\to\infty$ and $\alpha\to\infty$ limits are taken.
\revedit{In fact while dependence on $\log\alpha$ seems unavoidable in the argument above, the only dependence on $\log T$ came from the union bound over $0\leq i\leq T-1$ in the first step \eqref{eq:confinement-basic-example}. The definition of $K_{R,[0,T]}^{(\eta)}$ can be refined to avoid this dependence.} Namely instead of treating all of $[0,T]$ at once, we split off the rare ``bad" intervals $[i,i+1]$ with large fluctuations and apply the main argument above only to ``good'' intervals, with a small fraction of intervals becoming bad at each iteration step. 
\revedit{It is not obvious that such an argument is possible: in using the Gaussian correlation inequality one does have to reason about the entire path on $[0,T]$ at once.}
The first stage is carried out in Section~\ref{sec:decompose} by decomposing the law $\bbP^{(\eta)}_{[0,1]}$ of $\bB_{[0,1]}$ into a mixture
\begin{equation}
\label{eq:mixture-informal}
    \bbP^{(\eta)}_{[0,1]}
    =
    (1-\delta)\nu_{\good_1}+\delta\nu_{\bad_1}
\end{equation}
satisfying $\delta\leq\alpha^{-10}$ and
\begin{equation}
\label{eq:GCI-dom-main-example}
    \nu_{\good_1}\preceq \bbP^{(\eta)}_{[0,1]},\quad\nu_{\bad_1}\preceq (\bbP^{(\eta)}_{[0,1]})^{\times 2}.
\end{equation}
Crucially $\nu_{\good_1}$ is supported inside
\revedit{an analogously defined $K^{(\eta)}_{R_1,i}\subseteq C^{(\eta)}([i,i+1];\bbR^3)$ for $R_1\leq O(\sqrt{\log \alpha})$ now independent of $T$, while $\nu_{\bad_1}$ has only degraded by a dilation factor of $2$}. Taking the $T$-th power of \eqref{eq:mixture-informal} then yields a decomposition of $\bbP^{(\eta)}_{[0,T]}$ into $2^T$ product measures, each of which is dominated by a corresponding product of the factors in \eqref{eq:GCI-dom-main-example}. The iteration (analogous to \eqref{eq:R-iteration-example} and below) is implemented in Section~\ref{sec:iterate} by a recursive generalization of \eqref{eq:mixture-informal} and only pays $\log(\alpha)$ factors at each stage.

The second issue with \eqref{eq:basic-conclusion-example} is that the dependence on $\alpha$ is quadratic rather than quartic. To improve the exponent we need to look beyond the single-time fluctuations we have considered so far. As shown in \cite{mukherjee2020identification}, even on short $O(\alpha^{-2})$ time-scales, $\bB_t$ behaves as a stationary \emph{Pekar process} with fluctuations of order $\alpha^{-1}$ under the path measure $\wh\bbP_{\alpha,T}$. However the centers of these Pekar processes are expected to vary much more slowly, see e.g. \cite[Section 4]{spohn1987effective}. This suggests that the increments $\|\bB_{i+1}-\bB_i\|^2$ should be mostly ``noise'' from short-time fluctuations, and so
\begin{equation}
\label{eq:better-improvement}
    \bbE^{\wh\bbP_{\alpha,T}^{(A,\eta)}}[\|\bB_T\|^2/T]
    \ll
    \bbE^{\wh\bbP_{\alpha,T}^{(A,\eta)}}[\|\bB_{i+1}-\bB_i\|^2].
\end{equation}
Our final argument in Section~\ref{sec:slow-oscillation} follows this intuition, showing a version of \eqref{eq:better-improvement} for the dominating Gaussian measures. In particular, for an adjacent pair of ``good'' intervals $[i,i+1],~[i+1,i+2]$ we consider the ``smoothed fluctuations'' 
\begin{equation}
\label{eq:smoothed-fluctuation-example}
    \int_{i+1}^{i+2} \bB_t~\de t-\int_{i}^{i+1} \bB_t~\de t.
\end{equation}
We show these smoothed fluctuations indeed obey improved upper bounds, which allows us to prove Theorem~\ref{thm:main}. Here and only here, it is important for the relevant Gaussian measures to also include interactions between adjacent intervals $[i,i+1],~[i+1,i+2]$.

At a high level, the local fluctuation estimates on $\|\bB_{i+1}-\bB_t\|^2$ appear as a priori bounds in this final argument. However we emphasize that the Gaussian correlation inequality permeates our whole proof. In particular the intermediate steps really need to be statements of Gaussian domination; fluctuation bounds for the polaron path measure proved in a different way would not suffice. As an illustration of the subtlety, we do not know how to deduce any effective mass lower bound directly from an upper bound on the $T=1$ variance
\[
\bbE^{\wh\bbP_{\alpha,1}}[\|\bB_1\|^2].
\]
This is because while the dominating Gaussian measures have independent increments by construction, the polaron path measure itself could have highly correlated increments across time. 

\subsection{Notations for Path Measures and More}
\label{subsec:notation}

We will consider a large number of Gaussian and non-Gaussian measures on continuous paths. The conventions that we have aimed to follow are summarized below (though the main arguments are intended to be unambiguous on their own). We first point out that since $\bbP$ always denotes a variant of Wiener measure, we \emph{never} write $\bbP^{\nu}$ to denote probability taken relative to a probability measure $\nu$. Instead, such superscripts or subscripts on $\bbP$ always denote different probability measures, usually obtained by reweighting, as discussed more below. However we \emph{do} write $\bbE^{\nu}$ for expectations relative to $\nu$. Note that all path measures we consider in subsequent sections are discrete-time and supported on $C^{(\eta)}([a,b];\bbR^3)$, except in Subsection~\ref{subsec:kernel}.

Recall that $\bbP_{[a,b]}$ denotes the law of $3$-dimensional Brownian motion $\bB_t$ restricted to times $t\in [a,b]$, while $\bbP^{(\eta)}_{[a,b]}$ refers to the corresponding law on piecewise-linear processes, still denoted $\bB_t$. 
\revedit{It will often be useful to consider path measures such as $\bbP^{(\eta)}$ to be product measures, using independence of increments to write expressions of the form}
\[
    \bbP^{(\eta)}_{[0,T]}=\prod_{i=0}^{T-1}\bbP^{(\eta)}_{[i,i+1]}.
\]
\revedit{This is a slight abuse of notation since a path sampled from $\bbP^{(\eta)}_{[0,T]}$ is of course not drawn from a product measure when viewed as an element of $C^{(\eta)}([0,T];\bbR^3)$. However it makes formal sense if we implicitly consider} $\bbP^{(\eta)}_{[a,b]}$ to be a stochastic process ``modulo global shift'' and identify $\bB_t$ with its increments $\bB_{(j+1)\eta}-\bB_{j\eta}$. 
\revedit{In this way, a full path is obtained from its factors by joining individual short paths on each $[i,i+1]$.} However we \revedit{will always} fix $\bB_0=(0,0,0)$, so this \revedit{point of view is unnecessary} if $a=0$. 

Hat notations $\wh\bbP^{(\eta)}$ always indicate the presence of the polaron interaction factor. Tilde notations of the form $\wt\bbP^{(\eta)}_{[i,i+1]}$ indicate Gaussian measures analogous to $\wt\bbP_{[0,T]}^{(\eta)}$ in the previous subsection.

Measures denoted in bold font such as $\bP^{(\eta)}$ or $\bP^{(A,\eta)}$ are always components of a mixture distribution over $C^{(\eta)}([0,T];\bbR^3)$, obtained by taking $T$-th powers of \eqref{eq:mixture-informal} or similar. These are always indexed by a sequence $\gamma$ of length $T$, so there are many distinct measures $\bP^{(\eta)}_{\gamma}$. Correspondingly, $\wh\bP^{(A,\eta)}_{\gamma}$ denotes the reweighting of such product measures by the polaron interaction factor, while $\wt\bP^{(\eta)}_{\gamma}$ is always a Gaussian measure and typically dominates the corresponding $\wh\bP^{(A,\eta)}_{\gamma}$. All our main arguments work with cut-off, discretized versions of the polaron interaction and apply for any $A\geq\alpha$ and $\eta$ sufficiently small.

These measures will often be reweighted in other ways using the notation of Definition~\ref{def:reweight}. In such cases, the restriction to $t\in [i,i+1]$ always ``comes first''. For example, \eqref{eq:reweight} defines the measure $\wt\bbP^{(\eta)}_{i,\beta}$ as a reweighting of $\bbP^{(\eta)}_{[i,i+1]}$. One could imagine instead reweighting $\bbP^{(\eta)}$ by a factor depending on values of $\bB_t$ for $t\notin [i,i+1]$ and then restricting to $t\in [i,i+1]$, thereby obtaining a different measure. However none of the measures we assign symbols to are constructed using this order of operations. The relative order of operations for reweighting and dilation $(\cdot)^{\times 2}$ can go either way, and will always be made explicit using parentheses.

Finally, we will often reweight a decomposition of a measure by a weight function. In doing so we simply mean that if
\begin{equation}
\label{eq:reweight1}
    \nu=\sum_{j=1}^k p_j\nu_j
\end{equation}
for $p_j\geq 0$ and $\sum_{j=1}^k p_j=1$ is a mixture representation of the probability measure $\nu$, and if $f\in L^1(\nu)$ is non-negative with non-zero expectation, then 
\begin{equation}
\label{eq:reweight2}
    \nu^{(f)}=\sum_{j=1}^k q_j \nu_j^{(f)}
\end{equation}
holds for the new weights
\begin{equation}
\label{eq:reweight3}
    q_j=p_j\cdot\frac{\bbE^{\nu_j}[f]}{\bbE^{\nu}[f]}.
\end{equation}
We will refer to \eqref{eq:reweight2} as a reweighting of \eqref{eq:reweight1}. It may be intuitively helpful to view the formula \eqref{eq:reweight3} as a version of Bayes' rule.

Note that if $(\nu,f)$ or equivalently $(\nu,\nu^{(f)})$ are given, then \eqref{eq:reweight1} uniquely determines \eqref{eq:reweight2}. In particular reweighting by a product of functions $f_1f_2\dots f_k$ can be done in any number of steps with the same result.
Furthermore, reweighting commutes with refining or coarsening a decomposition as stated below; the proof is omitted.

\begin{proposition}
\label{prop:refine-partition}
    Let $S_1,\dots,S_{\ell}$ be a partition of $[k]$. 
    Suppose \eqref{eq:reweight1} holds and let
    \[
    \nu_{S_i}=p_{S_i}^{-1}\sum_{j\in S_i} p_j \nu_j
    \]
    where $p_{S_i}=\sum_{j\in S_i}p_j$.
    Then for $q_{S_i}=\sum_{j\in S_i} q_j$ with $q_j$ as in \eqref{eq:reweight3},
    \[
    \nu
    =
    \sum_{i=1}^{\ell}
    p_{S_i} \nu_{S_i},
    \quad\quad\text{and}
    \quad\quad
    \nu^{(f)}
    =
    \sum_{i=1}^{\ell}
    q_{S_i}\nu_{S_i}^{(f)}.
    \]
\end{proposition}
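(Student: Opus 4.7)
The plan is to verify both identities directly by unwinding the definitions of $\nu_{S_i}$, $\nu^{(f)}$, $\nu_{S_i}^{(f)}$, $p_{S_i}$, and $q_{S_i}$, using only linearity of integration. Since the statement is essentially a bookkeeping lemma about how Bayesian reweighting interacts with coarsening a mixture decomposition, no deep tool is needed—the content lies in carefully tracking normalizing constants. I will prove the two claimed decompositions as separate identities of signed measures evaluated on an arbitrary Borel set $A$.

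For the first identity, I will simply substitute the definition $\nu_{S_i} = p_{S_i}^{-1}\sum_{j\in S_i} p_j\nu_j$ into $\sum_{i=1}^\ell p_{S_i}\nu_{S_i}$ and use that $\{S_i\}$ partitions $[k]$ to recover $\sum_{j=1}^k p_j\nu_j = \nu$. This is a one-line computation, valid on any measurable set.

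For the second identity, the main step is to compute $\nu_{S_i}^{(f)}$ explicitly from Definition~\ref{def:reweight}. Expanding,
\[
\nu_{S_i}^{(f)}(A) = \frac{\int_A f\,\de\nu_{S_i}}{\bbE^{\nu_{S_i}}[f]} = \frac{\sum_{j\in S_i} p_j \int_A f\,\de\nu_j}{\sum_{j\in S_i} p_j \bbE^{\nu_j}[f]},
\]
where the factor $p_{S_i}^{-1}$ cancels between numerator and denominator. Then using the definition of $q_j$ in \eqref{eq:reweight3} I obtain
\[
q_{S_i} = \sum_{j\in S_i} q_j = \frac{\sum_{j\in S_i} p_j\,\bbE^{\nu_j}[f]}{\bbE^{\nu}[f]},
\]
so that multiplying yields $q_{S_i}\,\nu_{S_i}^{(f)}(A) = \bbE^{\nu}[f]^{-1}\sum_{j\in S_i} p_j \int_A f\,\de\nu_j$.

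Finally I sum over $i$, again invoking the partition property, to conclude
\[
\sum_{i=1}^\ell q_{S_i}\,\nu_{S_i}^{(f)}(A) = \frac{\sum_{j=1}^k p_j \int_A f\,\de\nu_j}{\bbE^{\nu}[f]} = \frac{\int_A f\,\de\nu}{\bbE^{\nu}[f]} = \nu^{(f)}(A),
\]
as required. The only thing to be mildly careful about is ensuring all denominators are nonzero, which follows from the standing assumption $\bbE^{\nu}[f]\in(0,\infty)$ combined with the fact that any $S_i$ contributing a degenerate summand (i.e. $p_{S_i}=0$ or $\bbE^{\nu_{S_i}}[f]=0$) can simply be omitted from the decomposition without affecting either identity. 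Since there is no real obstacle, the chief task is presentation: the computation can be displayed cleanly as two short chains of equalities.
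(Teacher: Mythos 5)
Your proof is correct, and since the paper explicitly omits the proof of Proposition~\ref{prop:refine-partition}, yours supplies exactly the straightforward bookkeeping that was intended: the cancellation of $p_{S_i}^{-1}$ between the numerator and denominator of $\nu_{S_i}^{(f)}$ is the one small point one must notice, and you handle it cleanly along with the degenerate case where a block $S_i$ contributes zero mass.
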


\section{Mixture Decomposition of the Polaron Path Measure}
\label{sec:decompose}

\revedit{
The next lemma is crucial for us. For any symmetric convex body $K\subseteq X$ with high probability relative to a Gausian measure $\mu$, it gives a decomposition of $\mu$ into a \textbf{good} component supported inside a constant dilation of $K$ and dominated by $\mu$, and a \textbf{bad} component which is still dominated by a factor-two dilation of $\mu$. 
}

\begin{lemma}
\label{lem:good-decomposition}
    \revedit{
    There exists an absolute constant $C_{\ref{lem:good-decomposition}}\geq 100$ such that the following holds.}
    Let $\mu$ be a centered Gaussian measure on \revedit{the finite-dimensional real vector space} $X$, and $K\subseteq X$ a symmetric convex set with $\mu(K)\geq 1-\delta$ for some $\delta\leq 0.1$. 
    There exists a decomposition 
    \[
    \mu=(1-\delta')\nu_{\good_1}+\delta'\nu_{\bad_1}
    \]
    of $\mu$ into a mixture of probability measures $\nu_{\good_1},\nu_{\bad_1}$
    such that:
    \begin{enumerate}[label=(\roman*)]
        \item 
        \label{it:small-amount-bad}
        $\delta'\leq \delta$. 
        \item
        \label{it:good-support}
        $\supp(\nu_{\good_1})\subseteq C_{\ref{lem:good-decomposition}} K$.
        \item 
        \label{it:good-domination}
        $\nu_{\good_1}\preceq \mu$
        \item 
        \label{it:bad-domination}
        $\nu_{\bad_1}\preceq \mu^{\times 2}$.
    \end{enumerate}
\end{lemma}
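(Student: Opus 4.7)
I would proceed by constructing $\nu_{\good_1}$ and $\nu_{\bad_1}$ as explicit reweightings of $\mu$ and $\mu^{\times 2}$. The natural choice $\nu_{\good_1} \propto \mu|_{CK}$ for the good part immediately satisfies $\nu_{\good_1}\preceq\mu$ and the support condition, since its density with respect to $\mu$ is the indicator of a symmetric convex set, hence symmetric quasi-concave.

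The hard part is the bad part. I would first explain why $\nu_{\bad_1} \propto \mu|_{X\setminus CK}$ cannot work: any finite product $f=\prod_i\phi_i$ of symmetric quasi-concave functions satisfies $f(0)\geq f(x)$ (since each $\phi_i$ has convex symmetric super-level sets, which must contain $0$ whenever they contain $\pm x$), and this ``max-at-origin'' persists under uniformly bounded a.e.\ limits. But $\frac{d\nu_{\bad_1}}{d\mu^{\times 2}}(x) \propto \one_{X\setminus CK}(x)\,e^{-Q(x)}$, with $Q$ the symmetric quadratic form such that $\frac{d\mu}{d\mu^{\times 2}} \propto e^{-Q}$, vanishes on $CK \ni 0$ while being positive outside. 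My fix is to set
\[
    \frac{d\nu_{\bad_1}}{d\mu^{\times 2}}(x) \;\propto\; \min\!\bigl(e^{-q_*},\, e^{-Q(x)}\bigr), \qquad q_* := \inf_{x \notin CK} Q(x),
\]
a minimum of two symmetric log-concave functions, hence itself symmetric log-concave, giving $\nu_{\bad_1}\preceq\mu^{\times 2}$. The good part is then forced to be $(1-\delta')\nu_{\good_1} = \mu - \delta'\nu_{\bad_1}$, and a direct computation yields $\frac{d\nu_{\good_1}}{d\mu}(x) \propto (1 - e^{Q(x)-q_*})_{+}$, whose super-level sets are $Q$-ellipsoids (symmetric convex), so $\nu_{\good_1}\preceq\mu$. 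By the choice of $q_*$, the effective support $\{Q<q_*\}$ sits inside $CK$, confirming the support condition.

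The main obstacle is proving $\delta'\leq\delta$ with an absolute constant $C_{\ref{lem:good-decomposition}}$. Integrating gives
\[
    \delta' \;=\; 2^{n} e^{-q_*}\, \mu^{\times 2}(\{Q\leq q_*\}) + \mu(\{Q > q_*\}),
\]
and both terms must be controlled dimension-freely. The second is a Gaussian tail of $Q \propto \|\cdot\|_{\Sigma}^2$, requiring standard concentration of $\chi^2_n$; the first has an apparent factor $2^n$ that must be killed by arranging $q_*$ comparably large. Both reduce to combining Borell's lemma (which translates $\mu(K)\geq 1-\delta$ into a dimension-free lower bound on the Gaussian inradius of $CK$, hence on $q_*$, in terms of $\Phi^{-1}(1-\delta)$) with Gaussian concentration to arrange $\delta'\leq\delta$ uniformly over all such $(\mu,K)$. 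This coordinated estimate is the technical heart of the argument.
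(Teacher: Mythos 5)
Your outline gets the problem structure right and correctly identifies the key obstacle (the naïve choice $\nu_{\bad_1}\propto \mu|_{X\setminus CK}$ fails the max-at-origin test for densities dominated by $\mu^{\times 2}$). But the specific fix you propose has a genuine, fatal gap in step~\ref{it:small-amount-bad}: you reweight by $Q(x)\propto \|x\|^2$, which measures distance from the origin, not distance from $K$. Take $K=\{x\in\bbR^n:|x_1|\leq a\}$ a symmetric slab with $a$ chosen so that $\mu(K)=1-\delta$; here $a=\Phi^{-1}(1-\delta/2)$ is an absolute constant. Then $q_*=\inf_{x\notin CK}\tfrac{3}{8}\|x\|^2=\tfrac{3}{8}(Ca)^2$ is also an absolute constant, so
\[
\delta'=\bbE^{\mu}\!\left[\min\!\bigl(e^{Q-q_*},1\bigr)\right]\;\geq\;\mu\!\left(\{Q>q_*\}\right)\;=\;\mu\!\left(\{\|x\|>Ca\}\right)\;\longrightarrow\;1
\]
as $n\to\infty$, since a typical Gaussian point has $\|x\|\approx\sqrt{n}\gg Ca$. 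Equivalently, your $\nu_{\good_1}$ is supported on the Euclidean ball $\{Q<q_*\}$, which has vanishing $\mu$-measure in high dimension even though $\mu(K)$ stays close to $1$. The appeal to Borell's lemma cannot save this: it gives a dimension-free inradius lower bound of order $\Phi^{-1}(1-\delta)$, whereas you would need the inradius of $CK$ to scale like $\sqrt{n}$ for $\mu(Q>q_*)$ to be small.

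The correct construction must depend on $d(x,K)$, the distance to the set $K$, rather than on $\|x\|$. The paper takes
\[
\de\nu_{\bad_1}(x)\propto e^{-\sigma(d(x,K))}\,\de\mu(x),\qquad
\de\nu_{\good_1}(x)\propto\bigl(1-e^{-\sigma(d(x,K))}\bigr)\,\de\mu(x),
\]
for a carefully chosen non-increasing $C^2_b$ function $\sigma$. Property~\ref{it:small-amount-bad} then follows from Gaussian isoperimetry applied to the enlargement $\bbB_R(K)$ — which does give a dimension-free bound on $\mu(\{d(x,K)>R\})$, unlike $\mu(\{\|x\|>r\})$ — and property~\ref{it:good-domination} is immediate since $d(\cdot,K)$ is convex and symmetric. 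The real work is property~\ref{it:bad-domination}: because $1-e^{-\sigma(d(x,K))}$ must vanish on $K$, the complementary density $e^{-\sigma(d(x,K))}$ is maximized on a whole \emph{set}, and proving that $\sigma(d(x,K))+\tfrac{3}{8}\|x\|^2$ is convex requires controlling $\sigma',\sigma''$ and a nontrivial geometric argument (using a closest point in $K$ and a one-dimensional Taylor estimate along segments). Your $\min$-construction cleverly guarantees log-concavity for free, but only because it implicitly replaces $K$ by a Euclidean ball — and that is precisely what breaks property~\ref{it:small-amount-bad}.
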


\begin{proof}
\revedit{By a coordinate change, we may assume $X$ is an inner product space such that $\mu$ is a standard Gaussian with identity covariance $I_X$. We write $d(\cdot,\cdot)$ below for the associated Euclidean distance, and abbrevate $d(x)=d(x,K)$ for the distance to $K$. Since $K$ is a convex body with positive measure, we can and do assume it is closed so that a closest point to any $x\in X$ exists. Finally we let $R=C_1\sqrt{\log(1/\delta)}$ for a large absolute constant $C_1$, and define 
\[
\bbB_R(S)=\{x\in X~:~d(x,S)\leq R\},
\quad
S\subseteq X.
\]
We will choose $C_{\ref{lem:good-decomposition}}$ below depending on $C_1$. Note that the condition $C_{\ref{lem:good-decomposition}}\geq 100$ is without loss of generality.
}

\revedit{The desired decomposition can be constructed via}
\begin{equation}
\label{eq:nu-good-bad-construction}
\begin{aligned}
    \de\nu_{\bad_1}(x)&\propto e^{-\sigma(d(x))}\de\mu(x),
    \\
    \de\nu_{\good_1}(x)&\propto \big(1-e^{-\sigma(d(x))}\big)\de\mu(x)
\end{aligned}
\end{equation}
for any $C^2_b$ function $\sigma:[0,\infty]\to [0,R^2]$ satisfying:
\begin{enumerate}[label=(\alph*)]
    \item 
    \label{it:sigma-decrease}
    $\sigma$ is non-increasing;
    \item 
    \label{it:sigma-small}
    $\sigma(r)=R^2$ for $r\in [0,1]$;
    \item
    \label{it:sigma-compact}
    $\sigma(r)=0$ for $r\geq 3C_1 R$;
    \item 
    \label{it:sigma-lipschitz}
    $|\sigma'(r)|\leq (r-1)/C_1$ for all $r\geq 1$;
    \item 
    \label{it:sigma-smooth}
    $|\sigma''(r)|\leq 1/C_1$ for all $r\geq 0$.
\end{enumerate}
An explicit such $\sigma$ can be constructed by convolving a $C^{\infty}_c$ bump function with the following joining $\wt\sigma$ of two quadratics:
\[
    \wt\sigma(r)=
    \begin{cases}
    R^2,\quad\quad\quad\quad\quad 0\leq r\leq 1
    \\
    R^2-\frac{(r-1)^2}{2C_1^2},\quad 1\leq r\leq C_1R+1
    \\
    \frac{(2C_1R-r+1)^2}{2C_1^2},\quad C_1R+1\leq r\leq 2C_1 R+1
    \\
    0,\quad\quad\quad\quad\quad\quad r\geq 2C_1R+1
    \,.
    \end{cases}
\]
We claim that for such $\sigma$, the resulting $\nu_{\good_1},\nu_{\bad_1}$ in \eqref{eq:nu-good-bad-construction} satisfy the conclusions of Lemma~\ref{lem:good-decomposition}. First, \ref{it:small-amount-bad} holds since by choice of $R$ the weight of $\nu_{\bad_1}$ in $\mu$ is
\[
    \delta'=\bbE^{x\sim \mu}\lt[e^{-\sigma(d(x))}\rt]
    \leq
    e^{-R^2}+1-\mu[\bbB_R(K)]
    \leq \delta.
\]
\revedit{
The latter bound holds uniformly on $\delta\in [0,0.1]$ since $R=C_1\sqrt{\log(1/\delta)}$ for $C_1$ a large absolute constant. Namely we have by definition $e^{-R^2}\leq \delta/2$, while the Gaussian isoperimetric inequality and $\mu(K)\geq 1/2$ implies 
\[
1-\mu[\bbB_R(K)]
\geq 
\frac{1}{\sqrt{2\pi}}
\int_R^{\infty} e^{-u^2/2}~\de u \leq \delta/2
\]
for large $C_1$.
}
Also $\sigma$ is decreasing, so the Radon--Nikodym derivative 
\[
    \frac{\de \nu_{\good_1}}{\de \mu}\propto 1-e^{-\sigma(d(x))}
\]
is origin-symmetric and \revedit{quasi-concave}. Hence conclusion~\ref{it:good-domination} holds: $\nu_{\good_1}\preceq\mu$. 

\revedit{
Since $\mu(K)=1-\delta\geq 0.9$ and $K$ is symmetric convex, we next deduce the in-radius lower bound (where $0$ denotes the origin in $X$):
\begin{equation}
\label{eq:inradius-LB}
    \bbB_{R/C_1^2}(0)\subseteq K.
\end{equation}
Indeed if $x\in X\backslash K$, the Hahn--Banach theorem guarantees existence of a hyperplane $H\subseteq X$ containing $x$ such that $K$ lies entirely on one side of $H$. 
If for sake of contradiction we had $\|x\|\leq R/C_1^2$, then we would deduce $d(0,H)\leq d(0,x)\leq R/C_1^2$. Since $K$ is contained on one side of $H$, its $\mu$-measure can be hence bounded by the Gaussian integral
\[
    \mu(K) \leq 
    \frac{1}{\sqrt{2\pi}}
    \int_{-\infty}^{R/C_1^2}
    e^{-u^2/2}~\de u.
\]
Since $R/C_1^2 = C_1^{-1}\sqrt{\log(1/\delta)}$, for $C_1$ a large absolute constant, the latter integral is smaller than $1-\delta$, uniformly in $\delta\in [0,0.1]$. (For instance, $C_1=10$ clearly suffices for small enough $\delta$, and taking $C_1$ large enough depending on $\delta_0>0$ clearly suffices uniformly on $\delta\in [\delta_0,0.1]$.)

Together with \eqref{eq:inradius-LB}, property~\ref{it:sigma-compact} of $\sigma$ implies that $\supp(\nu_{\good_1})\subseteq 4C_1^3K$. Thus conclusion \ref{it:good-support} holds for some $C_{\ref{lem:good-decomposition}}$.

It remains to verify $\nu_{\bad_1}\preceq\revedit{\mu}^{\times 2}$. We show the stronger statement that the Radon--Nikodym derivative $\frac{\de \nu_{\bad_1}}{\de \revedit{\mu}^{\times 2}}$ is log-concave. Since
\begin{align*}
    \frac{\de \nu_{\bad_1}}{\de \revedit{\mu}^{\times 2}}
    &=
\frac{\de \nu_{\bad_1}}{\de\mu}\cdot\frac{\de\mu}{\de \revedit{\mu}^{\times 2}}
    \\
    &\stackrel{\eqref{eq:nu-good-bad-construction}}{\propto}
    \exp\lt(-\sigma(d(x))-\frac{\|x\|^2}{2}+\frac{\|x\|^2}{8}\rt)
    \\
    &=
    \exp\lt(-\sigma(d(x))-\frac{3\|x\|^2}{8}\rt)
\end{align*}
this amounts to the convexity of $x\mapsto \sigma(d(x))+\frac{3x^2}{8}$.

To verify this convexity, we let $p\in (0,1)$ and take $x,z\in X$ distinct, and set $y=px+(1-p)z$. 
Since we assumed previously that $K$ is closed, let $k_y$ be the closest point in $K$ to $y$. Then since $\sigma$ is decreasing,
\begin{align*}
    \sigma(d(x)) 
    \geq 
    \sigma(d(x,k_y))
\end{align*}
and similarly for $z$. Hence it will certainly suffice to show that 
\begin{equation}
\label{eq:suffices}
\begin{aligned}
    p\sigma(d(x,k_y)) + (1-p)\sigma(d(z,k_y))-\sigma(d(y,k_y))
    &\stackrel{?}{\geq}
    -\frac{3}{8}
    \big(p\|x\|^2+(1-p)\|z\|^2 - \|y\|^2\big)
    \\
    &=
    -\frac{3p(1-p)\|x-z\|^2}{8}.
\end{aligned}
\end{equation}
Next let $a$ be the distance from $k_y$ to the (bi-infinite) line $\overline{xz}$, and $o$ the closest point to $k_y$ on this line. Let $w_x,w_y,w_z$ be the signed distances from $x,y,z$ to $o$ with some arbitrary but consistent choice of sign, so that $w_y=pw_x+(1-p)w_z$ and $|w_x-w_z|=\|x-z\|$. Then we have 
\begin{align*}
    d(x,k_y)^2 &= a^2 + w_x^2,
    \\
    d(y,k_y)^2 &= a^2 + (pw_x+(1-p)w_z)^2,
    \\
    d(z,k_y)^2 &= a^2 + w_z^2.
\end{align*}
Hence fixing $a>0$ (the case $a=0$ can be obtained as a limit or treated separately), we are led to define the function $f(w)=\sigma(\sqrt{a^2+w^2})$. One readily computes:
\begin{align*}
    f'(w)
    &=
    \frac{w}{\sqrt{a^2+w^2}}\cdot \sigma'(\sqrt{a^2+w^2}),
    \\
    f''(w)
    &=
    \frac{w^2}{a^2+w^2}\cdot \sigma''(\sqrt{a^2+w^2})
    +
    \frac{a^2}{(a^2+w^2)^{3/2}}\cdot\sigma'(\sqrt{a^2+w^2}).
\end{align*}
In particular, properties \ref{it:sigma-lipschitz}, \ref{it:sigma-smooth} of $\sigma$ imply that $\sup_{w\geq 0}|f''(w)|\leq 2/C_1$.
Then by Taylor's theorem the left-hand side of \eqref{eq:suffices} is 
\begin{align*}
    pf(w_x)+(1-p)f(w_z)-f(pw_x+(1-p)w_z)
    &\geq 
    -\frac{p(1-p)}{2}\cdot (w_x-w_z)^2 \sup_{w\geq 0}|f''(w)|
    \\
    &\geq 
    -\frac{p(1-p)\|x-z\|^2}{C_1}
\end{align*}
Comparing with the right-hand side of \eqref{eq:suffices} completes the proof for suitable $C_1$.
}
\end{proof}

We now apply Lemma~\ref{lem:good-decomposition} to each $\bbP^{(\eta)}_{[i,i+1]}$ as follows. 
\revedit{For $C_{\ref{lem:good-decomposition}}$ as in Lemma~\ref{lem:good-decomposition} and $i\in [T]= \{0,1,\dots,T-1\}$, define:
\begin{align}
\nonumber
    K^{(\eta)}_{R,i}&=\lt\{\bB\in C^{(\eta)}([0,T];\bbR^3)~:~\sup_{t,s\in [i,i+1]} \|\bB_t-\bB_s\|\leq R\rt\},
    \\
\nonumber
    K^{(\eta)}_{R}&=K^{(\eta)}_{R,0}
    \\
\label{eq:R1}
    R_1&=C_{\ref{lem:good-decomposition}}^2\sqrt{\log(\alpha)},
    \\
\nonumber
    \delta_1&=1-\bbP^{(\eta)}_{[i,i+1]}[\revedit{K^{(\eta)}_{R_1/C_{\ref{lem:good-decomposition}},i}}].
\end{align}
Then taking $(K,\delta)=(\revedit{K^{(\eta)}_{R_1/C_{\ref{lem:good-decomposition}},i}},\delta_1)$ in Lemma~\ref{lem:good-decomposition} yields for each $i\in [T]$ a decomposition
\[
    \bbP^{(\eta)}_{[i,i+1]}=(1-\delta_1')\nu_{i,\good_1}+\delta_1'\nu_{i,\bad_1}
\]
with (using $C_{\ref{lem:good-decomposition}}\geq 100$ to obtain the first line):
}
\begin{equation}
\label{eq:path-decomp-v1-result}
\begin{aligned}
    \delta_1'\leq \delta_1&\leq \alpha^{-10},
    \\
    \supp(\nu_{i,\good_1})&\subseteq K^{(\eta)}_{R_1,i},
    \\
    \nu_{i,\good_1}&\preceq \bbP^{(\eta)}_{[i,i+1]},
    \\
    \nu_{i,\bad_1}&\preceq (\bbP^{(\eta)}_{[i,i+1]})^{\times 2}
    .
\end{aligned}
\end{equation}

Of course, we may assume these decompositions are identical up to indexing as $i$ varies. Since $\bbP^{(\eta)}_{[0,T]}$ has independent increments, we obtain a corresponding product decomposition
\begin{align}
\label{eq:product-decomp-basic}
    \bbP^{(\eta)}_{[0,T]}
    &=
    \sum_{\gamma\in \{\good_1,\bad_1\}^T}
    w_{R_1}(\gamma)
    \bP^{(\eta)}_{\gamma};
    \\
\nonumber
    \bP^{(\eta)}_{\gamma}&\equiv\prod_{i=0}^{T-1} \nu_{i,\gamma_i},
    \\
\nonumber
    w_{R_1}(\gamma)&=(1-\delta_1')^{|\gamma_{\good_1}|}(\delta_1')^{|\gamma_{\bad_1}|}
    .
\end{align}
Here we have used the notation
\begin{equation}
\label{eq:gamma-good-1-def}
\gamma_{\good_1}\equiv\{i\in [T]~:~\gamma_{i}=\good_1\},
\quad\quad\quad
\gamma_{\bad_1}\equiv\{i\in [T]~:~\gamma_{i}=\bad_1\}
\end{equation}
Next recalling \eqref{eq:W-def}, let
\[
    \de\wh\bP^{(A,\eta)}_{\gamma}(\bB)
    \propto
    W_{\alpha,T}^{(A)}(\bB)
    ~\de\bP^{(\eta)}_{\gamma}(\bB)
\]
be the reweighting of $\bP^{(\eta)}_{\gamma}$ by the cut-off interaction factor $ W_{\alpha,T}^{(A)}$. Then \eqref{eq:product-decomp-basic} becomes a decomposition of the polaron path measure:
\revedit{
\begin{equation}
\label{eq:product-decomp-polaron}
\begin{aligned}
    \wh\bbP_{\alpha,T}^{(A,\eta)}
    &=
    \sum_{\gamma\in \{\good_1,\bad_1\}^T}
    \wh w_{R_1}^{(A,\eta)}(\gamma)
    \wh\bP^{(A,\eta)}_{\gamma};
    \\
    \wh w_{R_1}^{(A,\eta)}(\gamma)
    &=
    w_{R_1}(\gamma)\cdot 
    \frac{\bbE^{\bP^{(\eta)}_{\gamma}}\lt[W_{\alpha,T}^{(A)}\rt]}{\bbE^{\bbP^{(\eta)}_{[0,T]}}\lt[W_{\alpha,T}^{(A)}\rt]}.
    % \wh w_{R_1}^{(A,\eta)}(\gamma)
    % &\propto
    % w_{R_1}(\gamma)\cdot 
    % \bbE^{\bP^{(\eta)}_{\gamma}}\lt[W_{\alpha,T}^{(A)}\rt].
\end{aligned}
\end{equation}
}
Next we show the label $\good_1$ still predominates after reweighting.

\begin{lemma}
\label{lem:mostly-good}
    The weights $\wh w_{R_1}^{(A,\eta)}(\cdot)$ just constructed satisfy
    \[
    \sum_{\gamma\in \{\good_1,\bad_1\}^T}
    \wh w_{R_1}^{(A,\eta)}(\gamma) |\gamma_{\good_1}|
    \geq
    T(1-\alpha^{-10}).
    \]
\end{lemma}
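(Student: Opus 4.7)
The plan is to show, for each fixed $i\in [T]$, that $\sum_{\gamma:\gamma_i=\good_1}\wh w_{R_1}^{(A,\eta)}(\gamma)\geq 1-\alpha^{-10}$. Summing over $i$ and using $\sum_{\gamma}\wh w_{R_1}^{(A,\eta)}(\gamma)=1$ will then yield the claimed bound on $\sum_{\gamma}\wh w_{R_1}^{(A,\eta)}(\gamma)|\gamma_{\good_1}|$ by linearity of expectation.

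To obtain the per-index bound, I would introduce the natural joint law on $(\gamma,\bB)$ in which $\gamma$ has prior weights $w_{R_1}$ and $\bB\mid\gamma\sim\bP^{(\eta)}_{\gamma}$; its $\bB$-marginal is $\bbP^{(\eta)}_{[0,T]}$ by construction of the mixture. Bayes' rule then yields $\Pr[\gamma_i=\good_1\mid \bB]=h_i(\bB_{[i,i+1]})$, where $h_i:=(1-\delta_1')\cdot \de\nu_{i,\good_1}/\de\bbP^{(\eta)}_{[i,i+1]}$. The crucial observation is that reweighting this joint law by $W_{\alpha,T}^{(A)}(\bB)$ affects only the $\bB$-marginal, replacing it by $\wh\bbP_{\alpha,T}^{(A,\eta)}$, while leaving the conditional distribution of $\gamma$ given $\bB$ unchanged. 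Consequently
\[
    \sum_{\gamma:\gamma_i=\good_1}\wh w_{R_1}^{(A,\eta)}(\gamma)=\bbE^{\wh\bbP_{\alpha,T}^{(A,\eta)}}[h_i],
\]
which converts the discrete sum over $\gamma$ into an expectation of a tractable single function under the polaron path measure.

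The last step is a Gaussian domination argument. Property~\ref{it:good-domination} of Lemma~\ref{lem:good-decomposition} gives $\nu_{i,\good_1}\preceq\bbP^{(\eta)}_{[i,i+1]}$, so $h_i$ is a non-negative symmetric quasi-concave function on the full path space, depending only on the $[i,i+1]$ coordinates. Meanwhile, Proposition~\ref{prop:gaussian-domination-general} applied with $\bbQ=\bbQ^{\dagger}=\bbP^{(\eta)}_{[0,T]}$, $R\equiv\infty$, and $F\equiv 0$ yields $\wh\bbP_{\alpha,T}^{(A,\eta)}\preceq\bbP^{(\eta)}_{[0,T]}$. A layer-cake extension of Corollary~\ref{cor:GCI} from symmetric convex sets to non-negative symmetric quasi-concave functions then delivers $\bbE^{\wh\bbP_{\alpha,T}^{(A,\eta)}}[h_i]\geq\bbE^{\bbP^{(\eta)}_{[0,T]}}[h_i]=1-\delta_1'\geq 1-\alpha^{-10}$.

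The main conceptual step is the Bayes reformulation: once the count of good intervals is rewritten as a GCI-amenable expectation of a quasi-concave function under $\wh\bbP_{\alpha,T}^{(A,\eta)}$, the rest is essentially mechanical. This reflects the intuition that $W_{\alpha,T}^{(A)}$ (built from the decreasing potential $V_A$) favors paths with small increments and can therefore only enhance the good event on each interval.
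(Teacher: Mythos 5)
Your proof is correct and is essentially the paper's own argument, recast in Bayesian language. The identity $\sum_{\gamma:\gamma_i=\good_1}\wh w_{R_1}^{(A,\eta)}(\gamma)=\bbE^{\wh\bbP_{\alpha,T}^{(A,\eta)}}[h_i]$ with $h_i=(1-\delta_1')\,\de\nu_{i,\good_1}/\de\bbP^{(\eta)}_{[i,i+1]}$ is exactly the paper's ratio $(1-\delta_1')\int W\,\de\nu_{i,\good_1,[0,T]}/\int W\,\de\bbP^{(\eta)}_{[0,T]}$ unwound, and the concluding comparison $\bbE^{\wh\bbP}[h_i]\geq\bbE^{\bbP}[h_i]$ is the same single invocation of \eqref{eq:GCI-3} applied to the two symmetric quasi-concave factors $W_{\alpha,T}^{(A)}$ and $h_i$ under the Gaussian $\bbP^{(\eta)}_{[0,T]}$ — the paper merely packages $h_i$ into the measure $\nu_{i,\good_1,[0,T]}$ while you package $W$ into $\wh\bbP_{\alpha,T}^{(A,\eta)}$. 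One small attribution note: what you call a ``layer-cake extension of Corollary~\ref{cor:GCI}'' (domination reverses expectations of non-negative symmetric \emph{quasi-concave} functions) is really just \eqref{eq:GCI-3} applied directly with $m=1$, since Corollary~\ref{cor:GCI} as stated goes the other way for convex functions.
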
 

\begin{proof}
    Fix $i\in [T]$ and let 
    \begin{equation}
    \label{eq:nu-extend}
    \nu_{i,\good_1,[0,T]}=
    \bbP^{(\eta)}_{[0,i]}\times {\nu}_{i,\good_1}\times \bbP^{(\eta)}_{[i+1,T]}.
    \end{equation}
    (Here the product indicates that increments are independently joined as mentioned in Subsection~\ref{subsec:notation}.)
    From \eqref{eq:path-decomp-v1-result}, we have
    \[
    \nu_{i,\good_1,[0,T]}\preceq \bbP^{(\eta)}_{[0,T]}.
    \]
    Moreover $W_{\alpha,T}^{(A)}$ is symmetric and quasi-concave, so 
    \begin{align*}
    \sum_{\gamma\in \{\good_1,\bad_1\}^T:~\gamma_i=\good_1}
    \wh w_{R_1}^{(A,\eta)}(\gamma)
    &=
    (1-\delta_1')
    \cdot
    \frac{\int W_{\alpha,T}^{(A)}
    ~\de \nu_{i,\good_1,[0,T]}}
    {
    \int W_{\alpha,T}^{(A)}
    ~\de \bbP^{(\eta)}_{[0,T]}
    }
    \\
    &\geq
    (1-\delta_1')
    \geq 
    1-\alpha^{-10}
    .
    \end{align*}
    Summing over $0\leq i\leq T-1$ completes the proof.
\end{proof}

\section{A One-Step Estimate}
\label{sec:first-step-main}

We now define the Gaussian measures which will dominate the corresponding polaron component $\wh\bP^{(A,\eta)}_{\gamma}$. For each integer interval $[i,i+1]$, define the quadratic form
\begin{equation}
\label{eq:Qibeta}
    Q_i(\bB)\equiv
    \int_i^{i+1}\int_i^{i+1} \|\bB_t-\bB_s\|^2~\de t~\de s.
\end{equation}
For each $\beta\geq 0$, let $\wt\bbP^{(\eta)}_{i,\beta}=(\bbP^{(\eta)}_{[i,i+1]})^{\la\beta Q_i\ra}$, i.e.
\begin{equation}
\label{eq:reweight}
    \de\wt\bbP^{(\eta)}_{i,\beta}(\bB)
    \propto
    \exp\lt(
    -\beta \int_i^{i+1} \int_i^{i+1}
    \|\bB_{s}-\bB_{t}\|^2 
    ~\de t~\de s
    \rt)
    \de\bbP^{(\eta)}_{[i,i+1]}( \bB).
\end{equation}
\revedit{
We set 
\[
\beta_1=\frac{\alpha}{16\text{e}^2R_1^3},
\]
where $R_1=C_{\ref{lem:good-decomposition}}^2\sqrt{\log(\alpha)}$ as in \eqref{eq:R1}.
(The factor $16\text{e}^2$ rather than $2\text{e}$ will be convenient later in Lemma~\ref{lem:gaussian-domination-k-step}.)
} 
Then define
\[
\wt\bbP^{(\eta)}_{i,\good_1}=\wt\bbP^{(\eta)}_{i,\beta_1}.
\]
The corresponding $\bad_1$ distribution is simply a dilation of the base measure: 
\[
    \wt\bbP^{(\eta)}_{i,\bad_1}\equiv (\bbP^{(\eta)}_{[i,i+1]})^{\times 2}.
\]
For each $\gamma\in \{\good_1,\bad_1\}^T$ we consider the path measure
\begin{equation}
\label{eq:dominating-product}
    \wt\bP^{(\eta)}_{\gamma}
    =\prod_{i=0}^{T-1}
    \wt\bbP^{(\eta)}_{i,\gamma_i}.
\end{equation}

\begin{lemma}
\label{lem:gaussian-domination-one-step}
    For $A\geq\alpha$ and any 
    $\gamma\in \{\good_1,\bad_1\}^T$, we have
    \[
    \wh\bP^{(A,\eta)}_{\gamma}\preceq \wt\bP^{(\eta)}_{\gamma}.
    \]
    In particular,
    \[
    \bbE^{ \wh\bP^{(A,\eta)}_{\gamma}}[\|\bB_T\|^2]
    \leq
    \bbE^{ \wt\bP^{(\eta)}_{\gamma}}[\|\bB_T\|^2].
    \]
\end{lemma}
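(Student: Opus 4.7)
The plan is to apply Proposition~\ref{prop:gaussian-domination-general} directly. I would take $\bbQ^{\dagger}=\bP^{(\eta)}_{\gamma}$ and set
\[
\bbQ \equiv \prod_{i=0}^{T-1}\mu_{i,\gamma_i},
\qquad
\mu_{i,\good_1}=\bbP^{(\eta)}_{[i,i+1]},
\qquad
\mu_{i,\bad_1}=(\bbP^{(\eta)}_{[i,i+1]})^{\times 2},
\]
where the product is interpreted in the independent-increment sense of Subsection~\ref{subsec:notation}. This makes $\bbQ$ a centered Gaussian measure on $C^{(\eta)}([0,T];\bbR^3)$. The hypothesis $\bbQ^{\dagger}\preceq\bbQ$ follows from the factorwise dominations $\nu_{i,\gamma_i}\preceq\mu_{i,\gamma_i}$ recorded in \eqref{eq:path-decomp-v1-result}: in increment coordinates, $\de\bbQ^{\dagger}/\de\bbQ$ splits as a product $\prod_i \de\nu_{i,\gamma_i}/\de\mu_{i,\gamma_i}$ of quasi-concave factors depending on disjoint blocks of coordinates, so the full derivative is a product of symmetric quasi-concave functions.

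Next I would specify the $\bbZ^2$-piecewise-constant functions
\[
R(s,t)=\begin{cases} R_1,& \lfloor s\rfloor=\lfloor t\rfloor=i\text{ with }\gamma_i=\good_1,\\ +\infty,& \text{otherwise},\end{cases}
\qquad
F(s,t)=\begin{cases} \beta_1,& \lfloor s\rfloor=\lfloor t\rfloor=i\text{ with }\gamma_i=\good_1,\\ 0,& \text{otherwise}.\end{cases}
\]
The support property $\supp(\nu_{i,\good_1})\subseteq K^{(\eta)}_{R_1,i}$ from \eqref{eq:path-decomp-v1-result} ensures $\|\bB_s-\bB_t\|\leq R(s,t)$ almost surely under $\bbQ^{\dagger}$. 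The condition \eqref{eq:F-bound} reduces to the good-interval case where $|t-s|\leq 1$, so $e^{-|t-s|}\alpha/(2R_1^3)\geq \alpha/(2\text{e} R_1^3)$, which comfortably dominates $\beta_1=\alpha/(16\text{e}^2 R_1^3)$; moreover $R_1\geq 1\geq 1/A$ since $A\geq\alpha\geq 2$.

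Under these choices, integration against $F$ collapses to
\[
\int_0^T\!\!\int_0^T F(s,t)\|\bB_s-\bB_t\|^2\,\de t\,\de s\;=\;\beta_1\sum_{i\in\gamma_{\good_1}}Q_i(\bB),
\]
so the Gaussian measure $\wt\bbQ$ produced by Proposition~\ref{prop:gaussian-domination-general} factorizes over independent increments as $\prod_i \wt\bbP^{(\eta)}_{i,\gamma_i}=\wt\bP^{(\eta)}_{\gamma}$ by \eqref{eq:reweight} and \eqref{eq:dominating-product}. Meanwhile $\wh\bbQ$ is by construction $\wh\bP^{(A,\eta)}_{\gamma}$, so the proposition yields the desired $\wh\bP^{(A,\eta)}_{\gamma}\preceq\wt\bP^{(\eta)}_{\gamma}$. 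The expectation inequality then follows from Corollary~\ref{cor:GCI} applied to the symmetric non-negative convex function $\bB\mapsto\|\bB_T\|^2$.

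The main bookkeeping challenge is matching the product-of-independent-increments viewpoint with Proposition~\ref{prop:gaussian-domination-general}, which is phrased for measures on the full path space. Once $\de\bP^{(\eta)}_{\gamma}/\de\bbQ$ is seen to be a legitimate product of symmetric quasi-concave functions---which relies on each $\nu_{i,\gamma_i}\preceq\mu_{i,\gamma_i}$ factoring through the increments in $[i,i+1]$ only---and the slack factor $16\text{e}^2$ in $\beta_1$ absorbs the $e^{-|t-s|}\geq e^{-1}$ estimate, Proposition~\ref{prop:gaussian-domination-general} handles the rest with no further computation.
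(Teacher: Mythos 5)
Your proof is correct and matches the paper's own argument essentially verbatim: both apply Proposition~\ref{prop:gaussian-domination-general} with $\bbQ^{\dagger}=\bP^{(\eta)}_{\gamma}$, $\bbQ=\prod_i\bbQ^{(\eta)}_{i,\gamma_i}$, piecewise-constant $R$ and $F$ supported on good diagonal blocks, and the factorwise dominations $\nu_{i,\gamma_i}\preceq\bbQ^{(\eta)}_{i,\gamma_i}$ coming from Lemma~\ref{lem:good-decomposition}. Your explicit spelling-out of the $e^{-1}$ slack against $\beta_1=\alpha/(16\mathrm{e}^2 R_1^3)$ and the $R_1\geq 1/A$ check are small refinements the paper leaves implicit, but the route is the same.
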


\begin{proof}
    We apply Proposition~\ref{prop:gaussian-domination-general}. Specifically, with $\bbQ^{(\eta)}_{i,\good_1}=\bbP^{(\eta)}_{[i,i+1]}$ and $\bbQ^{(\eta)}_{i,\bad_1}=(\bbP^{(\eta)}_{[i,i+1]})^{\times 2}$, we set:
    \begin{align*}
    \bbQ
    &=
    \bQ^{(\eta)}_{\gamma}
    \equiv
    \prod_{i\in [T]}\bbQ^{(\eta)}_{i,\gamma_i},
    \\
    \bbQ^{\dagger}&=\bP_{\gamma}^{(\eta)},
    \\
    \wt\bbQ
    &=
    \wt\bP_{\gamma}^{(\eta)},
    \\
    \wh\bbQ
    &=
    \revedit{\wh\bP_{\gamma}^{(A,\eta)}}.
    \end{align*}
    Parts~\ref{it:good-domination} and \ref{it:bad-domination} of Lemma~\ref{lem:good-decomposition} imply that $\nu_{i,\gamma_i}\preceq \bbQ^{(\eta)}_{i,\gamma_i}$ for each $i\in [T]$. Taking a product over $i$ then yields $\bbQ^{\dagger}\preceq\bbQ$. The Radon--Nikodym derivative $\de\wh\bbQ/\de\bbQ^{\dagger}$ is proportional to $W^{(A)}_{\alpha,T}$ as required, while $\de\wt\bbQ/\de\bbQ$ takes the required form with $F(s,t)=\beta_1$ if $s,t$ are in the same interval $[i,i+1]$ and $\gamma_i=\good_1$, and $F(s,t)=0$ otherwise. The condition \eqref{eq:F-bound} then holds by Lemma~\ref{lem:good-decomposition}, part~\ref{it:good-support} and the definition of $\beta_1$. Thus Proposition~\ref{prop:gaussian-domination-general} applies, \revedit{with the function $R(s,t)=R_1$ if $s,t\in [i,i+1]$ for some $i$, and $R(s,t)=+\infty$ otherwise}. This completes the proof.
\end{proof}

Our next goal is to bound the variance $\bbE^{\wt\bbP^{(\eta)}_{0,\beta}}[\|\bB_1\|^2]$ on good intervals. This is done in the following important estimate. 

\begin{lemma}
\label{lem:gain}
    \revedit{There exists an absolute constant $C_{\ref{lem:gain}}$ such that the following holds.
    For any $\beta\geq 2$, if $\eta>0$ is sufficiently small,}
    \[
    \sup_{t\in [0,1]}
    \bbE^{\wt\bbP^{(\eta)}_{0,\beta}}[\|\bB_t\|^2]
    \leq C_{\ref{lem:gain}}/\sqrt{\beta}.
    \]
\end{lemma}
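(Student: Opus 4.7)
Proof proposal. My plan is to reduce the claim to a one-dimensional continuous-time Gaussian computation and then prove a short Sobolev-type interpolation inequality.

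First, since the three coordinates of $\bB$ are independent under $\bbP^{(\eta)}_{[0,1]}$ and the reweighting factor $\exp(-\beta Q_0)$ splits coordinate-wise into three identical factors, the coordinates remain i.i.d.\ under $\wt\bbP^{(\eta)}_{0,\beta}$, and $\bbE[\|\bB_t\|^2]=3\,\bbE[B_t^2]$ for each $1$D coordinate. For the $\eta\to 0$ limit, I would apply Proposition~\ref{prop:discrete-benign} with the bounded continuous weight $f=\exp(-\beta Q_0)$, combined with the uniform fourth-moment bound inherited from $\wt\bbP^{(\eta)}_{0,\beta}\preceq\bbP^{(\eta)}$ (Corollary~\ref{cor:GCI}, applied to the symmetric convex function $\|\bB_t\|^4$). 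Uniformity in $t\in[0,1]$ follows from equicontinuity via $\bbE[(B_t-B_s)^2]\le |t-s|$, again by domination. This reduces the lemma to bounding the continuous-time $1$D variance $\bbE^{\wt\bbP_{0,\beta}}[B_t^2]$ by $O(1/\sqrt\beta)$.

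Next, using the identity
\[
\int_0^1\!\!\int_0^1 (B_s-B_t)^2\,ds\,dt = 2\int_0^1 (B_s-\bar B)^2\,ds,\qquad \bar B := \int_0^1 B_r\,dr,
\]
the $1$D continuous-time measure $\wt\bbP_{0,\beta}$ is a centered Gaussian whose precision operator on the Cameron--Martin space $\{v\in H^1([0,1]):v(0)=0\}$ is $A_\beta v = -v'' + 4\beta(v-\bar v)$ with Neumann condition at $r=1$. By the variational formula for Gaussian covariance,
\[
\bbE^{\wt\bbP_{0,\beta}}[B_t^2] = \sup_{v\in H^1,\,v(0)=0}\frac{v(t)^2}{\|v'\|_{L^2}^2+4\beta\|v-\bar v\|_{L^2}^2},
\]
so the lemma reduces to proving the interpolation inequality
\[
v(t)^2 \leq \frac{3}{\sqrt\beta}\bigl[\|v'\|_{L^2}^2 + 4\beta\|v-\bar v\|_{L^2}^2\bigr]
\]
for all $v\in H^1$ with $v(0)=0$, $t\in[0,1]$, $\beta\ge 2$.

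To prove this, set $u:=v-\bar v$, so $\bar u=0$, $u'=v'$, and $u(0)=-\bar v$. For any $s\in[0,1]$, the identity $u(t)^2-u(s)^2 = \int_s^t 2uu'\,dr$ with Cauchy--Schwarz yields $u(t)^2\le u(s)^2 + 2\|u\|_{L^2}\|u'\|_{L^2}$; averaging over $s\in[0,1]$ then gives $u(t)^2\le \|u\|^2+2\|u\|\|u'\|$. Young's inequality $2\|u\|\|u'\|\le 2\sqrt\beta\|u\|^2+\|u'\|^2/(2\sqrt\beta)$ produces $u(t)^2 \le (3/(4\sqrt\beta))(\|u'\|^2+4\beta\|u\|^2)$ whenever $\beta\ge 2$. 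The identical bound at $t=0$ controls $\bar v^2=u(0)^2$; combining via $v(t)^2\le 2u(t)^2+2\bar v^2$ completes the argument, and the final absolute constant is $C_{\ref{lem:gain}}\le 9$ after accounting for the factor $3$ from the $3$D reduction.

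The main technical obstacle is ensuring that the passage $\eta\to 0$ is genuinely uniform in $t$, though this is standard given weak convergence plus the equicontinuity bound above; the variational reformulation and the interpolation inequality itself are both short and elementary. An alternative would be to work directly in discrete time using a discrete analogue of the interpolation (with summation by parts in place of $(u^2)'=2uu'$), but the continuum route is cleaner.
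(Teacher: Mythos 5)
Your proof is correct, and it takes a genuinely different route from the paper. The paper reduces to one dimension and continuous time the same way you do, but then invokes Shepp's formula for the covariance of Brownian motion reweighted by a quadratic Wiener functional: it identifies the integral kernel $\theta(t,s)=2\min(s,t)(1-\max(s,t))$ as the Dirichlet Green's function, diagonalizes it in the sine basis, and evaluates $\bbE[B_t^2]$ as an explicit infinite series whose $O(\beta^{-1/2})$ decay is extracted by a Fourier identity. You instead use the Cameron--Martin variational characterization $\mathrm{Var}(B_t)=\sup_{v(0)=0} v(t)^2/\big(\|v'\|_{L^2}^2+4\beta\|v-\bar v\|_{L^2}^2\big)$ and prove the needed pointwise interpolation inequality $v(t)^2\le (3/\sqrt\beta)\big(\|v'\|^2+4\beta\|v-\bar v\|^2\big)$ for $\beta\ge 1$ via a one-line Agmon-type argument (fundamental theorem of calculus, Cauchy--Schwarz, average over the base point, Young). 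Your identity $Q_0=2\int(B-\bar B)^2$, the resulting form $\|v'\|^2+4\beta\|v-\bar v\|^2$, and the numeric constant $C_{\ref{lem:gain}}\le 9$ all check out, and the $\eta\to 0$ reduction you sketch is the same one the paper uses via Proposition~\ref{prop:discrete-benign}. The trade-off is clear: the paper's spectral route is longer and leans on an external reference but yields the exact covariance function; your variational route is shorter, self-contained, and more elementary, at the cost of only producing an upper bound (which is all the lemma asks for). One minor remark: the displayed ``precision operator'' $-v''+4\beta(v-\bar v)$ with a Neumann condition at $1$ is decoration you never use --- the argument only needs the Cameron--Martin quadratic form, which you correctly identify.
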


We pause to record the suboptimal bound $m_{\eff}(\alpha)\geq \frac{c\alpha^{1/2}}{(\log\alpha)^{3/4}}$ using what we have seen so far. 

\begin{corollary}
\label{cor:suboptimal-bound}
    \revedit{There exists an absolute constant $C$ such that the following holds.} For $\alpha\geq  2$ and $A\geq\alpha$, and with $\eta>0$ sufficiently small,
    \[
    \lim_{T\to\infty}
    \bbE^{\wh\bbP^{(\eta)}_{\alpha,T}}
    \lt[\frac{\|\bB_T\|^2}{T}\rt]
    \leq
    \frac{C(\log\alpha)^{3/4}}{\alpha^{1/2}}
    .
    \]
\end{corollary}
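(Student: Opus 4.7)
The plan is to combine the mixture decomposition \eqref{eq:product-decomp-polaron} of $\wh\bbP_{\alpha,T}^{(A,\eta)}$ with the one-step Gaussian domination $\wh\bP^{(A,\eta)}_\gamma \preceq \wt\bP^{(\eta)}_\gamma$ from Lemma~\ref{lem:gaussian-domination-one-step}, exploit the independent-increments product structure of $\wt\bP^{(\eta)}_\gamma$ to reduce $\bbE[\|\bB_T\|^2]$ to a sum over unit intervals, and then invoke Lemma~\ref{lem:gain} on the $\good_1$ intervals with a trivial variance computation on the $\bad_1$ intervals, keeping track of the label weights via Lemma~\ref{lem:mostly-good}.

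First I would observe that $\bB \mapsto \|\bB_T\|^2$ is a non-negative symmetric convex function on $C^{(\eta)}([0,T];\bbR^3)$, so Corollary~\ref{cor:GCI} together with Lemma~\ref{lem:gaussian-domination-one-step} gives, for every $\gamma\in\{\good_1,\bad_1\}^T$,
\[
\bbE^{\wh\bP^{(A,\eta)}_\gamma}[\|\bB_T\|^2] \leq \bbE^{\wt\bP^{(\eta)}_\gamma}[\|\bB_T\|^2].
\]
Under $\wt\bP^{(\eta)}_\gamma = \prod_{i\in [T]} \wt\bbP^{(\eta)}_{i,\gamma_i}$ the increments $\bB_{i+1}-\bB_i$ are independent centered Gaussian vectors in $\bbR^3$, so writing $\bB_T = \sum_{i\in [T]}(\bB_{i+1}-\bB_i)$ and expanding the square,
\[
\bbE^{\wt\bP^{(\eta)}_\gamma}[\|\bB_T\|^2] = \sum_{i\in [T]} \bbE^{\wt\bbP^{(\eta)}_{i,\gamma_i}}[\|\bB_{i+1}-\bB_i\|^2]
\]
with no surviving cross terms.

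Next I would bound each summand separately. For $\gamma_i = \good_1$, Lemma~\ref{lem:gain} applied with $t=1$ and $\beta = \beta_1 = \alpha/(16\mathrm{e}^2 R_1^3) \asymp \alpha/(\log\alpha)^{3/2}$ gives
\[
\bbE^{\wt\bbP^{(\eta)}_{i,\good_1}}[\|\bB_{i+1}-\bB_i\|^2] \leq \frac{C_{\ref{lem:gain}}}{\sqrt{\beta_1}} = O\lt(\frac{(\log\alpha)^{3/4}}{\alpha^{1/2}}\rt).
\]
For $\gamma_i=\bad_1$, since $\wt\bbP^{(\eta)}_{i,\bad_1} = (\bbP^{(\eta)}_{[i,i+1]})^{\times 2}$ is a factor-two dilation of discretized Brownian motion on a unit interval, $\bbE[\|\bB_{i+1}-\bB_i\|^2] = 12$.

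Finally I would sum over $\gamma$ weighted by the probabilities $\wh w_{R_1}^{(A,\eta)}(\gamma)$ and apply Lemma~\ref{lem:mostly-good}: trivially $\sum_\gamma \wh w_{R_1}^{(A,\eta)}(\gamma)|\gamma_{\good_1}|\leq T$, while the lemma together with $|\gamma_{\good_1}|+|\gamma_{\bad_1}|=T$ yields $\sum_\gamma \wh w_{R_1}^{(A,\eta)}(\gamma)|\gamma_{\bad_1}|\leq T\alpha^{-10}$. Hence
\[
\bbE^{\wh\bbP_{\alpha,T}^{(A,\eta)}}[\|\bB_T\|^2] \leq O\lt(\frac{T(\log\alpha)^{3/4}}{\alpha^{1/2}}\rt) + 12T\alpha^{-10}.
\]
Dividing by $T$ and sending $T\to\infty$ gives the bound for every fixed $A\geq\alpha$ with $\eta$ small; one may then pass to the full polaron measure $\wh\bbP_{\alpha,T}$ via Propositions~\ref{prop:A-to-infty} and~\ref{prop:discrete-benign}. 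There is no real obstacle here: Lemma~\ref{lem:gain} is invoked as a black box, and the remaining steps are bookkeeping around the product structure of $\wt\bP^{(\eta)}_\gamma$ and the Gaussian correlation inequality.
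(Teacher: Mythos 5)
Your proposal is correct and follows essentially the same route as the paper: decompose $\wh\bbP_{\alpha,T}^{(A,\eta)}$ via \eqref{eq:product-decomp-polaron}, dominate each $\wh\bP^{(A,\eta)}_\gamma$ by $\wt\bP^{(\eta)}_\gamma$ (Lemma~\ref{lem:gaussian-domination-one-step}), exploit independence of increments to reduce to unit intervals, apply Lemma~\ref{lem:gain} with $\beta_1\asymp\alpha/(\log\alpha)^{3/2}$ on $\good_1$ intervals and the factor-$2$ dilation on $\bad_1$ intervals, and control the bad-interval count by Lemma~\ref{lem:mostly-good}. The only cosmetic difference is that you invoke Corollary~\ref{cor:GCI} explicitly, whereas the paper cites the ``in particular'' clause of Lemma~\ref{lem:gaussian-domination-one-step}, which is the same step.
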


\begin{proof}
     Using Lemma~\ref{lem:gain} and the independence and centeredness of increments $\wt\bP^{(\eta)}_{\gamma}$ on distinct intervals $[i,i+1]$, we obtain
    \begin{align*}
    \bbE^{\wh\bbP_{\alpha,T}^{(A,\eta)}}
    \lt[\frac{\|\bB_T\|^2}{T}\rt]
    &=
    \sum_{\gamma\in \{\good_1,\bad_1\}^T}
    \wh w_{R_1}^{(A,\eta)}(\gamma)
    \bbE^{\wh\bP^{(A,\eta)}_{\gamma}}
    \lt[\frac{\|\bB_T\|^2}{T}\rt]
    \\
    &\leq
    \sum_{\gamma\in \{\good_1,\bad_1\}^T}
    \wh w_{R_1}^{(A,\eta)}(\gamma)
    \bbE^{\wt\bP^{(\eta)}_{\gamma}}
    \lt[\frac{\|\bB_T\|^2}{T}\rt]
    \\
    &=
    \sum_{\gamma\in \{\good_1,\bad_1\}^T}
    \wh w_{R_1}^{(A,\eta)}(\gamma)
    \sum_{i=0}^{T-1}
    \bbE^{\wt\bbP^{(\eta)}_{i,\gamma_i}}
    \lt[\frac{\|\bB_i-\bB_{i+1}\|^2}{T}\rt]
    \\
    &\leq
    \sum_{\gamma\in \{\good_1,\bad_1\}^T}
    \wh w_{R_1}^{(A,\eta)}(\gamma)\cdot
    \revedit{
    \lt(C_{\ref{lem:gain}}\sqrt{\frac{16\text{e}^2 R_1^3}{\alpha}}+12|\gamma_{\bad_1}|\rt)
    }
    .
    \end{align*}
    \revedit{Here the last step follows by using Lemma~\ref{lem:gain} when $\gamma_i=\good_1$, and Lemma~\ref{lem:good-decomposition} part~\ref{it:bad-domination} when $\gamma_i=\bad_1$ (the value $12=2^2\cdot 3$ comes from the factor of two dilation and the fact that the processes live in $\bbR^3$).}
    Recalling Lemma~\ref{lem:mostly-good} and the definition~\eqref{eq:R1} of $R_1$ completes the proof.
\end{proof}

We will verify Lemma~\ref{lem:gain} in the next subsection by returning to continuous time and giving an exact series formula for $\bbE^{\wt\bbP_{0,\beta}}[\|\bB_t\|^2]$. However let us point out that is easy to guess Lemma~\ref{lem:gain}. Intuitively, the change of measure $\bbP^{(\eta)}\to \wt\bbP^{(\eta)}_{0,\beta}$ should be similar to reweighting Wiener measure $\bbP$ by $\exp\lt(-\beta \int_0^1 \|\bB_t\|^2 ~\de t\rt)$. It is easy to see that for ordinary Brownian motion,
\[
    \bbP\lt[\int_0^1 \|\bB_t\|^2~\de t\leq \eps\rt] = e^{-\Theta(1/\eps)}
\]
by considering the behavior of $\bB_t$ separately on each interval $[j\eps,(j+1)\eps]$. Therefore one expects that a $\wt\bbP^{(\eta)}_{0,\beta}$-typical path will satisfy $\int_0^1 \|\bB_t\|^2 ~\de t\asymp \beta^{-1/2}$ to minimize $\beta\eps+\frac{1}{\eps}$, which suggests the conclusion of Lemma~\ref{lem:gain}.

Finally we record the following uniform-in-time bound which is easily deduced from Lemma~\ref{lem:gain} and is important for the iterative argument in the next section. \revedit{We note that the failure probability below is a power of $\beta$ because $\alpha$ does not appear anywhere in the statement, but in all our applications it will be at most $\alpha^{-10}$ as in the rest of this paper.}

\begin{lemma}
\label{lem:unif-good}
    \revedit{
    There exists an absolute constant $C_{\ref{lem:unif-good}}$ such that the following holds. For any $\beta\geq 2$, for $\eta>0$ sufficiently small:
    }
    \begin{equation}
    \label{eq:unif-good}
    \wt\bbP^{(\eta)}_{0,\beta}
    \lt[\revedit{\sup_{s,t\in [0,1]}\|\bB_t-\bB_s\|} \geq \frac{
    \sqrt{\revedit{C_{\ref{lem:unif-good}}}\log(\beta)}}{\beta^{1/4}}\rt]
    \leq \beta^{-100}.
    \end{equation}
\end{lemma}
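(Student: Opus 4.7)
}

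The plan is to use the fact that $\wt\bbP^{(\eta)}_{0,\beta}$ is a centered Gaussian measure and combine pointwise variance control (from Lemma~\ref{lem:gain}) with local fluctuation control (from Gaussian domination), all stitched together by a discretization and union bound. The key observation is that $\wt\bbP^{(\eta)}_{0,\beta}$ is Gaussian since it is the reweighting of a Gaussian measure by a symmetric non-negative quadratic form (Corollary~\ref{cor:quadratic-domination}). Moreover, since $\bbP^{(\eta)}_{[0,1]}$ is invariant under spatial rotations of $\bbR^3$ and $Q_0(\bB)$ depends on $\bB$ only through pairwise distances, the three coordinate processes of $\bB$ are i.i.d.\ under $\wt\bbP^{(\eta)}_{0,\beta}$.

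First I would discretize. Set $N=\lceil\sqrt{\beta}\,\rceil$ and let $t_j=j/N$ for $0\le j\le N$. By a triangle inequality
\[
\sup_{s,t\in[0,1]}\|\bB_t-\bB_s\|
\;\le\;
2\max_{0\le j\le N}\|\bB_{t_j}\|
\;+\;
2\max_{0\le j<N}\sup_{s,t\in[t_j,t_{j+1}]}\|\bB_t-\bB_s\|\,.
\]
It suffices to bound each of the two terms by $\tfrac12\sqrt{C_{\ref{lem:unif-good}}\log\beta}\,\beta^{-1/4}$ except on an event of probability $\tfrac12\beta^{-100}$.

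For the grid term, Lemma~\ref{lem:gain} together with the rotational symmetry just noted implies that each coordinate of $\bB_{t_j}$ is centered Gaussian with variance at most $C_{\ref{lem:gain}}/(3\sqrt{\beta}\,)$. Standard one-dimensional Gaussian tail bounds, combined with a union bound over the three coordinates and the $N+1\le 2\sqrt\beta$ grid points, give
\[
\wt\bbP^{(\eta)}_{0,\beta}\!\left[\max_j\|\bB_{t_j}\|\ge \tfrac{M\sqrt{\log\beta}}{\beta^{1/4}}\right]
\;\le\; 6\sqrt{\beta}\cdot \exp\!\left(-\tfrac{M^2\sqrt{\beta}\log\beta}{6C_{\ref{lem:gain}}\sqrt{\beta}}\right)
\;\le\; \tfrac12\beta^{-100}
\]
for $M$ a sufficiently large absolute constant.

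For the local-oscillation term I would use Gaussian domination in the form of Corollary~\ref{cor:GCI}. The set
\[
K_j=\{\bB:\sup_{s,t\in[t_j,t_{j+1}]}\|\bB_t-\bB_s\|\le R\}
\]
is symmetric and convex, so from $\wt\bbP^{(\eta)}_{0,\beta}\preceq\bbP^{(\eta)}_{[0,1]}$ (Corollary~\ref{cor:quadratic-domination}) we get $\wt\bbP^{(\eta)}_{0,\beta}[K_j^c]\le \bbP^{(\eta)}_{[0,1]}[K_j^c]$. For Brownian motion on an interval of length $1/N\asymp 1/\sqrt\beta$, the standard sub-Gaussian tail bound on the oscillation (via Doob's maximal inequality or the reflection principle, applied coordinate-by-coordinate) yields $\bbP[K_j^c]\le \exp(-c M^2\log\beta)$ when $R=M\sqrt{\log\beta}/\beta^{1/4}$. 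A union bound over the $N\le 2\sqrt\beta$ subintervals then again gives probability at most $\tfrac12\beta^{-100}$ for $M$ large. The sole subtlety here is that the inequality $\wt\bbP^{(\eta)}_{0,\beta}\preceq\bbP^{(\eta)}_{[0,1]}$ must be used to reduce the oscillation tail bound to one for Brownian motion, since we cannot directly read off such a bound from the variance estimate of Lemma~\ref{lem:gain} alone. Combining the two bounds and choosing $C_{\ref{lem:unif-good}}$ suitably in terms of $M$ completes the proof.
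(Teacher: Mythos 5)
Your proof is correct and follows essentially the same strategy as the paper's: discretize $[0,1]$ into a grid, control the grid values via the variance bound of Lemma~\ref{lem:gain} (using that $\wt\bbP^{(\eta)}_{0,\beta}$ is a centered Gaussian with rotation-invariant, i.i.d.\ coordinates), and control the oscillation between grid points by passing to Brownian motion via $\wt\bbP^{(\eta)}_{0,\beta}\preceq\bbP^{(\eta)}_{[0,1]}$ (Corollary~\ref{cor:quadratic-domination}) followed by a reflection-principle/sub-Gaussian tail bound and a union bound. The only cosmetic difference is the grid mesh: you choose spacing $\asymp\beta^{-1/2}$, which balances the two terms so both are $\asymp\sqrt{\log\beta}/\beta^{1/4}$, whereas the paper uses the finer spacing $\asymp\beta^{-1}$, making the oscillation contribution far smaller than needed. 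Both choices work; your constants are off by harmless numerical factors (e.g.\ the exponent in the grid-term tail should be $3M^2/(2C_{\ref{lem:gain}})$ rather than $M^2/(6C_{\ref{lem:gain}})$, but this only affects the choice of $M$, not the conclusion).
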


\begin{proof}
    Let $S_{\beta}=[0,1]\cap \lceil\beta^{-1}\rceil \bbZ$. Since $\bB_s$ is a centered Gaussian for each $s$, \revedit{union bounding over $S_{\beta}$ via} Lemma~\ref{lem:gain} yields
    \[
    \wt\bbP^{(\eta)}_{0,\beta}
    \lt[\sup_{s\in S_{\beta}}\|\bB_s\| \geq \frac{\revedit{100\sqrt{C_{\ref{lem:gain}}\log(\beta)}}}{\beta^{1/4}}\rt]
    \leq \beta^{-100}/2.
    \]
    Moreover Corollary~\ref{cor:quadratic-domination} implies that $\wt\bbP^{(\eta)}_{0,\beta}\preceq\bbP^{(\eta)}_{[0,1]}$ and so with $0\leq t_1,t_2\leq 1$ below,
    \begin{align*}
    &\wt\bbP^{(\eta)}_{0,\beta}
    \lt[\sup_{|t_1-t_2|\leq \beta^{-1}}\|\bB_{t_1}-\bB_{t_2}\| 
    \geq 
    \frac{\revedit{100}\sqrt{\log(\beta)}}{\beta^{1/4}}\rt]
    \\
    &\leq 
    \bbP^{(\eta)}_{[0,1]}
    \lt[\sup_{|t_1-t_2|\leq \beta^{-1}}\|\bB_{t_1}-\bB_{t_2}\| 
    \geq 
    \frac{\revedit{100}\sqrt{\log(\beta)}}{\beta^{1/4}}\rt]
    \\
    &\leq
    \beta^{-100}/2.
    \end{align*}
    \revedit{The last step follows by using the reflection principle to control the oscillation of Brownian motion on each interval $[k\beta^{-1},(k+2)\beta^{-1}]$, together with a union bound over integers $0\leq k\leq \beta$.}
    Combining completes the proof.
\end{proof}

\subsection{Proof of Lemma~\ref{lem:gain}}
\label{subsec:kernel}

We prove Lemma~\ref{lem:gain} by using a result of \cite{shepp1966radon} (see also \cite{cheridito2003representations}) to write $\bbE^{\wt\bbP_{0,\beta}}[\|\bB_t\|^2]$ as an explicit infinite series \revedit{given by the spectral expansion of a certain kernel}.\footnote{Only the scalar case is considered in \cite{shepp1966radon}, but the effect of reweighting by $Q_{\theta}$ is independent on the $d=3$ coordinates so this makes no difference.} By Proposition~\ref{prop:discrete-benign} these continuous-time estimates apply also for sufficiently small $\eta>0$. Recalling Definition~\ref{def:reweight}, we consider the law $\bbP_{[0,1]}^{\la Q_{\theta}\ra}=(\bbP_{[0,1]})^{\la Q_{\theta}\ra}$ for a quadratic form
\[
    Q_{\theta}(\bB)
    \equiv 
    \int_0^1 \int_0^1 \theta(t,s) ~\de \bB_t\de \bB_s.
\]
We will consider only cases in which $\theta\in L^2([0,1]^2)$ satisfies $\theta(s,t)=\theta(t,s)$ and $Q_{\theta}$ is always non-negative. To state the relevant result, we associate $\theta$ with the integral kernel operator $K:L^2([0,1])\to L^2([0,1])$ given by
\[
    K[\phi](x)=
    \int_0^1 \theta(x,y) \phi(y)~\de y.
\]
With $\bone$ the identity operator on $L^2([0,1])$, the resolvent kernel $\tilde \theta\in L^2([0,1]^2)$ is defined to have associated kernel operator $\wt K$ satisfying
\[
    \wt K=\bone-(\bone+K)^{-1}.
\]
Equivalently, $\theta$ and $\tilde \theta$ have orthonormal eigenfunction expansions related by:
\begin{equation}
\label{eq:resolvent-kernel}
\begin{aligned}
    \theta(t,s)&=\sum_{k\geq 1} \lambda_k v_k(t) v_k(s),
    \\
    \tilde \theta(t,s)&=\sum_{k\geq 1} \frac{\lambda_k}{1+\lambda_k} v_k(t) v_k(s).
\end{aligned}
\end{equation}

\begin{proposition}[{\cite{shepp1966radon}}]
\label{prop:shepp}
For $\theta$ as above, $\bbP_{[0,1]}^{\la Q_{\theta}\ra}$ is a centered Gaussian process $\bB_t=(B_t^{(1)},B_t^{(2)},B_t^{(3)})$ and the coordinate processes are i.i.d. with covariance 
\[
    \bbE^{\bbP_{[0,1]}^{\la Q_{\theta}\ra}}[B_t^{(1)}B_s^{(1)}]
    =
    \min(t,s)-
    \int_0^t \int_0^s
    \tilde \theta(u,v)~\de u~\de v
\]
for all $t,s\in [0,1]$.
\end{proposition}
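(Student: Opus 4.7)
The plan is to diagonalize the quadratic form $Q_\theta$ using the spectral expansion of its associated integral operator, which turns the reweighting into a product of independent one-dimensional Gaussian rescalings on the Karhunen--Loève modes; the covariance formula then drops out of Parseval's identity applied to the indicator functions $\mathbf{1}_{[0,t]}$ and $\mathbf{1}_{[0,s]}$.

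\textbf{Step 1 (reduction to one dimension).} Because the integrand $\theta(t,s)$ does not couple different coordinates, the double stochastic integral decomposes as $Q_\theta(\bB) = \sum_{j=1}^{3}Q_\theta(B^{(j)})$. Hence the Radon--Nikodym factor $e^{-Q_\theta(\bB)}$ factors across the three coordinates, so under $\bbP_{[0,1]}^{\la Q_\theta\ra}$ the three components $B^{(j)}$ remain independent with identical laws. It therefore suffices to treat a single scalar Brownian motion $B$ and prove the claimed covariance formula in dimension one.

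\textbf{Step 2 (diagonalizing $Q_\theta$).} Since $\theta\in L^2([0,1]^2)$ is symmetric and defines a non-negative quadratic form, its associated integral operator $K$ on $L^2([0,1])$ is self-adjoint, Hilbert--Schmidt, and positive semidefinite. The spectral theorem gives the eigenexpansion $\theta(t,s)=\sum_{k\ge 1}\lambda_k v_k(t)v_k(s)$ with $\lambda_k\geq 0$ and $\{v_k\}$ an orthonormal system extending to a complete ONB of $L^2([0,1])$. The Wiener integrals $Z_k = \int_0^1 v_k(t)\,\de B_t$ are i.i.d.\ $\mathcal N(0,1)$ under $\bbP_{[0,1]}$, and Brownian motion admits the $L^2$-convergent expansion $B_t = \sum_k Z_k \int_0^t v_k(u)\,\de u$ obtained by writing $B_t=\int_0^1 \mathbf{1}_{[0,t]}(u)\,\de B_u$ and expanding the indicator in the $v_k$-basis. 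In the diagonal basis, $Q_\theta$ becomes a function only of $(Z_k)$; one checks by a truncation argument that reweighting the i.i.d.\ sequence $(Z_k)\sim\mathcal N(0,1)^{\otimes \infty}$ by $e^{-Q_\theta}$ leaves them mutually independent Gaussian, with each $Z_k$ acquiring variance $1/(1+\lambda_k)$ (the precise prefactor being dictated by the convention for the double stochastic integral used by Shepp).

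\textbf{Step 3 (covariance via Parseval).} Using the eigenexpansion, compute
\[
\bbE^{\bbP^{\la Q_\theta\ra}_{[0,1]}}[B_tB_s]
= \sum_{k\ge 1}\frac{1}{1+\lambda_k}\Big(\!\int_0^t v_k\Big)\Big(\!\int_0^s v_k\Big)
= \sum_{k\ge 1}\langle v_k,\mathbf{1}_{[0,t]}\rangle\langle v_k,\mathbf{1}_{[0,s]}\rangle
- \sum_{k\ge 1}\frac{\lambda_k}{1+\lambda_k}\langle v_k,\mathbf{1}_{[0,t]}\rangle\langle v_k,\mathbf{1}_{[0,s]}\rangle.
\]
Parseval turns the first sum into $\langle \mathbf{1}_{[0,t]},\mathbf{1}_{[0,s]}\rangle = \min(t,s)$, and the second sum, by the eigenexpansion of $\tilde\theta$ in \eqref{eq:resolvent-kernel}, equals $\int_0^t\!\int_0^s \tilde\theta(u,v)\,\de u\,\de v$. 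Subtraction yields the claimed formula.

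\textbf{Main obstacle.} The genuinely delicate point is giving a clean meaning to $Q_\theta$ and to the identification $Q_\theta=\sum_k\lambda_k Z_k^2$ (modulo Wick/diagonal corrections) for general $\theta\in L^2$, together with the claim that reweighting by $e^{-Q_\theta}$ produces exactly independent Gaussians with variances $1/(1+\lambda_k)$. The clean route, followed in Shepp's original paper, is to first verify everything for finite-rank $\theta_N=\sum_{k\le N}\lambda_k v_k\otimes v_k$, where $Q_{\theta_N}$ is a polynomial in finitely many i.i.d.\ Gaussians and the reweighting is an elementary exponential-family calculation; then pass to the $N\to\infty$ limit, using that $\theta_N\to\theta$ in $L^2([0,1]^2)$ and $\tilde\theta_N\to\tilde\theta$ in $L^2$ (Hilbert--Schmidt convergence of the resolvent), so that the covariance formulas converge pointwise. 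This simultaneously handles the Gaussianity assertion and the exact spectral rescaling that underlies the resolvent identity defining $\tilde\theta$.
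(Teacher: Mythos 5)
A preliminary remark for calibration: the paper never proves Proposition~\ref{prop:shepp} at all --- it is quoted from \cite{shepp1966radon}, with only a footnote reducing the three-dimensional case to the scalar one --- so your argument is really being compared with Shepp's original proof. Your route (reduce to one coordinate, diagonalize the kernel operator, treat each Karhunen--Lo\`eve mode as an independent one-dimensional exponential tilt, conclude by Parseval, justify via finite-rank truncation) is in essence that proof, and Steps 1 and 3, as well as the truncation-and-limit scheme, are fine as sketched given $\lambda_k\ge 0$ and $\|\theta_N-\theta\|_{L^2}\to 0$.

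The genuine gap is in Step 2, and it sits exactly at the quantitative heart of the statement: you assert that each mode acquires variance $1/(1+\lambda_k)$ and defer the prefactor to ``the convention for the double stochastic integral used by Shepp.'' This cannot be deferred, because the proposition is asserted for the $Q_\theta$ of this paper, and the two natural conventions produce different covariances, differing by the substitution $K\mapsto 2K$ inside the resolvent. Under the full-square, diagonal-inclusive reading --- which is the one the paper itself uses in the proof of Lemma~\ref{lem:gain}, via $(B_t-B_s)^2=\int_s^t\int_s^t\de B_u\,\de B_v$ and Fubini on the kernel --- a rank-one kernel $\theta=\lambda\,v\otimes v$ gives $Q_\theta=\lambda Z^2$ with $Z=\int_0^1 v\,\de B$, and tilting a standard Gaussian by $e^{-\lambda Z^2}$ produces variance $1/(1+2\lambda)$, not $1/(1+\lambda)$; the resulting covariance is $\min(t,s)-\int_0^t\int_0^s\wh\theta(u,v)\,\de u\,\de v$ with $\wh K=\bone-(\bone+2K)^{-1}$. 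The stated formula with $\wt K=\bone-(\bone+K)^{-1}$ corresponds instead to the iterated (ordered-region) integral, for which $Q_\theta=\tfrac12\sum_k\lambda_k(Z_k^2-1)$. The discrepancy is real, not a normalization that washes out: test the two candidates on $\theta(u,v)=\tfrac{\lambda^2}{2}\bigl(1-\max(u,v)\bigr)$, i.e.\ the tilt $\exp\bigl(-\tfrac{\lambda^2}{2}\int_0^1\|B_u\|^2\,\de u\bigr)$ in one coordinate, where Feynman--Kac/Mehler gives the variance of $B_1$ as $\tanh(\lambda)/\lambda$; this matches the $(\bone+2K)^{-1}$ version, while the $(\bone+K)^{-1}$ version gives $\sqrt2\,\tanh(\lambda/\sqrt2)/\lambda$. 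So to complete your proof you must fix the meaning of $Q_\theta$ explicitly, carry out the one-mode exponential-family computation under that convention, and check that it reproduces the resolvent relation in the statement (in the paper's application this only shifts absolute constants in Lemma~\ref{lem:gain}, but it is the entire quantitative content of the proposition you set out to prove).
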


\begin{proof} \revedit{Proposition~\ref{prop:discrete-benign} implies that uniformly on $t\in [0,1]$:}
\[
\lim_{\eta\to 0}
\bbE^{\wt\bbP^{(\eta)}_{0,\beta}}[\|\bB_t\|^2]
=
\bbE^{\wt\bbP_{0,\beta}}[\|\bB_t\|^2].
\]
Hence it suffices to work in continuous time and bound the right-hand side. We write $B_t=B_t^{(1)}$ throughout, and begin by expressing
\[
\int_0^1 \int_0^1 (B_t-B_s)^2~\de t\de s
\]
as an iterated stochastic integral 
\[
    \int_0^1 \int_0^1 \theta(t,s) ~\de B_t\de B_s.
\]
Since $(B_t-B_s)^2=\int_s^t \int_s^t \de B_u \de B_v$, Fubini implies these are equal for 
\[
    \theta(t,s)=2\min(s,t)\big(1-\max(s,t)\big).
\]
Next we find the eigenfunctions of the associated kernel $K$, i.e. those $v\in L^2([0,1])$ satisfying
\begin{align*}
    \frac{1}{2}\int_0^1 \theta(t,s) v(s) \de s
    &=
    \int_0^1 \min(s,t)\big(1-\max(s,t)\big) v(s)~\de s
    \\
    &=
    \int_0^t s(1-t) v(s)\de s
    +
    \int_t^1 t(1-s) v(s)\de s
    \\
    &=
    \lambda v(t)/2.
\end{align*}
In fact since $\theta$ is the Green's function for Brownian motion with Dirichlet boundary conditions on $[0,1]$, the orthonormal eigenfunctions are simply $v_k(t)=\sqrt{2}\sin(\pi k t)$ for $k\geq 1$, as can also be checked directly. Including the factor $\beta$, the eigenvalue $\lambda_k$ for $v_k$ is $\frac{2\beta}{\pi^2 k^2}$. Recalling ~\eqref{eq:resolvent-kernel}, we find
\[
    \tilde \theta(t,s)
    =
    \sum_{k\geq 1}
    \frac{2\sin(\pi kt)\sin(\pi ks)}
    {1+\frac{\pi^2 k^2}{2\beta} }.
\]
By Proposition~\ref{prop:shepp}, the process $\bbP_{[0,1]}^{\la Q_{\theta}\ra}$ has covariance 
\[
    \bbE^{\bbP_{[0,1]}^{\la Q_{\theta}\ra}}[B_t B_s]
    =
    \min(t,s)-
    \int_0^t \int_0^s
    \tilde \theta(u,v)~\de v~\de u
\]
and so in particular,
\[
    \bbE^{\bbP_{[0,1]}^{\la Q_{\theta}\ra}}[B_t^2]
    =
    t-
    2\sum_{k\geq 1} \frac{(1-\cos(\pi kt))^2}
    {\pi^2 k^2 \big(1+\frac{\pi^2 k^2}{2\beta}\big)}.
\]
To bound this variance, we first observe that it vanishes when we formally set $\beta=\infty$. (This is to be expected given the definition of $\wt\bbP_{0,\beta}$.) Indeed after expanding the numerators and using the cosine double angle formula, the identity
\begin{equation}
\label{eq:fourier-identity}
    \sum_{k\geq 1} \frac{(1-\cos(\pi kt))^2}
    {\pi^2 k^2}
    =
    t/2,\quad t\in [0,1]
\end{equation}
follows from the easily verified cosine Fourier expansion
\[
    \sum_{k\geq 1} \frac{\cos(\pi kt)}{\pi ^2 k^2}
    =
    \frac{t^2}{4}-\frac{|t|}{2}+\frac{1}{6},\quad\forall~t\in [-1,1].
\]
We conclude that uniformly over $t\in [0,1]$,
\begin{align*}
    \bbE^{\wt\bbP_{0,\beta}}[\|\bB_t\|^2/3]
    &=
     \bbE^{\bbP_{[0,1]}^{\la Q_{\theta}\ra}}[B_t^2]
    \\
    &=
    2\sum_{k\geq 1}
    \lt(
    \frac{(1-\cos(\pi kt))^2}
    {\pi^2 k^2}
    -
    \frac{(1-\cos(\pi kt))^2}
    {\pi^2 k^2 \big(1+\frac{\pi^2 k^2}{2\beta}\big)}
    \rt)
    \\
    &\leq
    \frac{2}{\pi^2}\sum_{k\geq 1}
    k^{-2}
    \cdot
    \lt(
    1-\frac{1}{1+\frac{\pi^2 k^2}{2\beta}}
    \rt)
    \\
    &\leq 
    \sum_{k\geq 1}
    \frac{1}{\beta+\frac{\pi^2 k^2}{2}}
    \\
    &\leq
    C\beta^{-1/2}
    .
    \qedhere
\end{align*}
\end{proof}

\section{Iteratively Improving the Confinement}
\label{sec:iterate}

In this section we improve our upper bounds on short-time fluctuations. The idea is that, for $i\in [T]$ corresponding to ``good'' intervals, we can recurse between improved local fluctuation bounds
\[
    \sup_{s,t\in [i,i+1]}\|\bB_s-\bB_t\|\leq R
\]
(aiming to decrease $R$) and improved Gaussian domination via $\wt\bbP^{(\eta)}_{i,\beta}$ (aiming to increase $\beta$). This strategy leads to the following recursion. \revedit{With $R_1=C_{\ref{lem:good-decomposition}}^2\sqrt{\log \alpha}$} and (for later convenience) $\beta_0=0$, we inductively define 
\revedit{
\begin{align}
\label{eq:beta-k}
    \beta_k&=\frac{\alpha}{16\text{e}^2R_k^3},
    \\
\label{eq:R-recursion}
    R_{k+1}
    &=
    C_{\ref{lem:good-decomposition}}
    C_{\ref{lem:unif-good}}
    (\log\beta_k)^{1/2} \beta_k^{-1/4}
    \asymp(\log\alpha)^{1/2}\lt(\frac{R_k^3}{\alpha}\rt)^{1/4}.
\end{align}
}
The latter equivalence holds \revedit{up to absolute constants} since $\beta_k$ increases up to $\Theta\lt(\frac{\alpha^4}{(\log\alpha)^6}\rt)$. In fact the recursion converges modulo constant factors in a small number of iterations.

\revedit{
\begin{lemma}
\label{lem:R-recursion}
    There exists $C_{\ref{lem:R-recursion}}$ such that for all $\alpha\geq 2$, for some $L\leq \revedit{C_{\ref{lem:R-recursion}}}\log \alpha $, we have
    \begin{align*}
    R_L&\leq \frac{C_{\ref{lem:R-recursion}}(\log\alpha)^2}{\alpha},
    \\
    \frac{C_{\ref{lem:R-recursion}}\alpha^4}{(\log\alpha)^6}
    \geq 
    \beta_L&\geq
    \frac{\alpha^4}{C_{\ref{lem:R-recursion}}(\log\alpha)^6}
    .
    \end{align*}
\end{lemma}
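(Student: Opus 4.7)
The plan is to analyze the recursion~\eqref{eq:R-recursion} by passing to logarithmic coordinates, in which it becomes (up to lower-order corrections) an affine map with contraction ratio $3/4$. The first step is to locate the approximate fixed point: solving $R_{\infty}\asymp(\log\alpha)^{1/2}(R_{\infty}^3/\alpha)^{1/4}$ yields $R_{\infty}\asymp(\log\alpha)^2/\alpha$, which via \eqref{eq:beta-k} corresponds to $\beta_{\infty}\asymp\alpha^4/(\log\alpha)^6$. These are precisely the target scales in the lemma, so the goal reduces to showing that some $L=O(\log\log\alpha)$ (which is comfortably smaller than $C_{\ref{lem:R-recursion}}\log\alpha$) suffices to achieve $R_L\asymp R_{\infty}$.

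Next I would control the $(\log\beta_k)^{1/2}$ prefactor in~\eqref{eq:R-recursion} by showing that $\log\beta_k\asymp\log\alpha$ throughout the iteration. Provided $R_k\in[R_\infty,R_1]$ up to constants, \eqref{eq:beta-k} gives $\log\beta_k=\log\alpha-3\log R_k-O(1)$, which lies between roughly $\log\alpha$ and $4\log\alpha$, hence $\log\beta_k=\Theta(\log\alpha)$. Setting $\rho_k=\log(R_k/R_{\infty})$, a direct computation turns \eqref{eq:R-recursion} into
\[
\rho_{k+1}=\tfrac{3}{4}\rho_k+\varepsilon_k,\qquad |\varepsilon_k|=O(1),
\]
where the $\varepsilon_k$ absorb the bounded variation of $(\log\beta_k)^{1/2}/(\log\alpha)^{1/2}$ together with the multiplicative constant of~\eqref{eq:R-recursion}. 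Crucially the idealized recursion $\rho_{k+1}=(3/4)\rho_k$ has $R_\infty$ as its exact fixed point, so no additional additive $(\log\alpha)$-sized term appears.

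Iterating the above display from $\rho_1=\log(R_1/R_{\infty})=\log\alpha-\tfrac{3}{2}\log\log\alpha+O(1)=O(\log\alpha)$ gives
\[
\rho_L=(3/4)^{L-1}\rho_1+\sum_{j=0}^{L-2}(3/4)^j\,\varepsilon_{L-1-j},
\]
so $|\rho_L-(3/4)^{L-1}\rho_1|\leq O(1)\cdot\sum_{j\geq 0}(3/4)^j=O(1)$ uniformly in $L$. Choosing $L=\lceil\log\rho_1/\log(4/3)\rceil+O(1)=O(\log\log\alpha)$ makes $(3/4)^{L-1}\rho_1=O(1)$, hence $|\rho_L|=O(1)$ and $R_L\asymp R_\infty$. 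This immediately yields the two claimed bounds $R_L\leq C_{\ref{lem:R-recursion}}(\log\alpha)^2/\alpha$ and $\beta_L=\Theta(\alpha^4/(\log\alpha)^6)$ via~\eqref{eq:beta-k}, while $L\leq C_{\ref{lem:R-recursion}}\log\alpha$ holds with ample room. Small values of $\alpha\in[2,\alpha_0]$ are handled by enlarging $C_{\ref{lem:R-recursion}}$, since the recursion is then over a bounded range.

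The main obstacle, such as there is one, is the bookkeeping around the fact that the prefactor $(\log\beta_k)^{1/2}$ is not literally constant along the iteration, so one cannot immediately read off a clean linear recursion in log coordinates. The observation that $\log\beta_k$ stays within a bounded factor of $\log\alpha$ throughout the relevant window reduces this to an affine contraction with a bounded additive perturbation, which behaves exactly as expected and does not degrade the contraction ratio $3/4$.
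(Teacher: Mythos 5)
Your proof is correct, but it takes a genuinely different and more refined route than the paper. The paper's proof is a two-line halving argument: observing that $R_{\ell+1}/R_\ell \leq 1/2$ holds as long as $R_\ell > C(\log\alpha)^2/\alpha$ (which follows from \eqref{eq:R-recursion}, \eqref{eq:beta-k} and $\log\beta_\ell \leq 4\log\alpha$), it takes $L$ to be the first index at which $R_\ell$ drops below $C(\log\alpha)^2/\alpha$; since $R_1 \asymp \sqrt{\log\alpha}$ and $R_1/(C(\log\alpha)^2/\alpha) = O(\alpha)$, halving gives $L = O(\log\alpha)$ directly, and then $R_L, \beta_L$ are pinned down by one more iteration of the recursion plus the induction that $\beta_k$ never exceeds $O(\alpha^4/(\log\alpha)^6)$. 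You instead pass to log coordinates, identify the fixed point, and run a contraction-mapping argument with a bounded perturbation $\varepsilon_k$, obtaining the stronger bound $L = O(\log\log\alpha)$. This is more work than the lemma requires (only $L \leq C\log\alpha$ is claimed, and later uses of Lemma~\ref{lem:R-recursion} only care that $L\alpha^{-10}$ is tiny, which holds either way), but it is a valid and sharper analysis. One small point worth making explicit in your argument: the uniform bound $\log\beta_k = \Theta(\log\alpha)$, which you need to control $\varepsilon_k$, should be verified together with the contraction rather than assumed, since a priori $R_k$ could overshoot below $R_\infty$; you implicitly handle this by observing that once $\rho_k = O(1)$ it stays $O(1)$, but this should be stated as part of the induction rather than as a standing assumption.
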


\begin{proof}
    The estimate $\frac{C_{\ref{lem:R-recursion}}\alpha^4}{(\log\alpha)^6}\geq \beta_L$ holds for all $L\geq 0$ by induction.
    For the rest, note that $R_{\ell+1}/R_{\ell}\leq 1/2$ holds until the first time $\ell$ such that $R_{\ell}\leq \frac{\revedit{C_{\ref{lem:R-recursion}}}(\log\alpha)^2}{\alpha}$. 
\end{proof}
}

\begin{remark}
\label{rem:effective-mass-general-p}
    For the more general bound \eqref{eq:effective-mass-general-p} with $0<p<2$, one just replaces \eqref{eq:beta-k}, \eqref{eq:R-recursion} by
    \begin{align*}
    \beta_{k,p}&=\frac{\alpha}{C_p R_{k,p}^{2+p}},
    \\
    R_{k+1,p}
    &=
    \revedit{C_{\ref{lem:good-decomposition}}
    C_{\ref{lem:unif-good}}}
    (\log\alpha)^{1/2}
    \alpha^{-1/4} R_{k,p}^{\frac{2+p}{4}}.
    \end{align*}
    This recursion improves until reaching
    \begin{align*}
    R_{L,p}&\asymp C_p' (\log\alpha)^{\frac{2}{2-p}}/\alpha^{\frac{1}{2-p}};
    \\
    \beta_{L,p}&\asymp \frac{\alpha^{\frac{4}{2-p}}}{C_p'(\log\alpha)^{\frac{4+2p}{2-p}}}.
    \end{align*}
    Up to constants, $\beta_{L,p}$ turns out to be the final lower bound we obtain for the effective mass as can be seen from the end of Section~\ref{sec:slow-oscillation}. The rest of the proofs adapt to general $p$ with infinitesimal modifications (e.g. one should replace occurrences of $\alpha^{-10}$ by $\alpha^{-C_p}$).
\end{remark}

\subsection{Inductive Decomposition of the Path Measure}
\label{subsec:induct}

Lemma~\ref{lem:general-path-decomp} generalizes Lemma~\ref{lem:good-decomposition}. Recall $Q_i(\bB)$ from \eqref{eq:Qibeta}.

\begin{lemma}
\label{lem:general-path-decomp}
    Fix $\alpha\geq 2$ and $k\in \bbZ_+$, and let $\eta>0$ be sufficiently small. There exist constants $c_1,c_2,\dots,c_k\in [0,1]$ and probability measures $\nu_{\good_0},\nu_{\good_1},\dots,\nu_{\good_k},\nu_{\bad_{1}},\dots,\nu_{\bad_{k}}$ on $C^{(\eta)}([0,1];\bbR^3)$ with $\nu_{\good_0}=\bbP^{(\eta)}_{[0,1]}$ such that for each $1\leq j\leq k$ there is a mixture decomposition
    \begin{equation}
    \label{eq:general-mixture-decomposition}
    \nu_{\good_{j-1}}
    =(1-c_{j})\nu_{\good_{j}}
    +
    c_{j}\nu_{\bad_{j}}.
    \end{equation}
    Moreover this decomposition has the following properties:
    \begin{enumerate}
    \item 
    \label{it:induct-1}
    Under the measure \revedit{$\nu_{\good_k}$}, the bound
    \revedit{$\sup_{s,t\in [0,1]}\|\bB_t-\bB_s\| \leq R_k$} holds almost surely for $R_k$ as in \eqref{eq:R-recursion}.
    \item 
    \label{it:induct-2}
    $\nu_{\good_k}\preceq \nu_{\good_{k-1}}\preceq\dots \preceq \nu_{\good_1}\preceq  \bbP^{(\eta)}_{[0,1]}$.
    \item 
    \label{it:induct-3}
    For each $1\leq j\leq k$, 
    \[
    (\nu_{\bad_j})^{\la\beta_{j-1} Q_0\ra} \preceq (\bbP^{(\eta)}_{[0,1]})^{\times 2}.
    \]
    \item 
    \label{it:induct-4}
    For each $1\leq j\leq k$, \eqref{eq:general-mixture-decomposition} is the reweighting of a decomposition
   \begin{equation}
    \label{eq:level-k-decomp-circ}
    \wt\bbP^{(\eta)}_{0,\beta_{j-1}}
    =
    (1-\delta^{\circ}_j)\nu^{\circ}_{\good_j}
    +
    \delta^{\circ}_j \nu^{\circ}_{\bad_j}.
    \end{equation}
    such that 
    \begin{equation}
    \label{eq:mu-circ-good-k-confined}
    \sup_{s,t\in [i,i+1]}\|\bB_t-\bB_s\| \leq R_j
    \end{equation}
    holds $\nu^{\circ}_{\good_j}$-almost surely, and
\begin{align}
\label{eq:delta-k-bound}
    \delta^{\circ}_j&\leq \alpha^{-10},
    \\
\label{eq:mu-circ-good-k-dominated}
    \nu^{\circ}_{\good_j}&\preceq \wt\bbP^{(\eta)}_{0,\beta_{j-1}},
    \\
\label{eq:mu-circ-bad-k-dominated}
    \nu^{\circ}_{\bad_j}&\preceq \big(\wt\bbP^{(\eta)}_{0,\beta_{j-1}}\big)^{\times 2}.
\end{align}
    \end{enumerate}
\end{lemma}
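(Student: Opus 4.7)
The plan is to proceed by induction on $k$, applying Lemma~\ref{lem:good-decomposition} at each level not to the (generally non-Gaussian) measure $\nu_{\good_{k-1}}$ but to the closely related Gaussian measure $\wt\bbP^{(\eta)}_{0,\beta_{k-1}}$. To keep the bookkeeping clean, I will maintain the auxiliary invariant
\begin{equation}
\label{eq:proof-invariant}
\nu_{\good_j} = (\bbP^{(\eta)}_{[0,1]})^{(G_j)}, \qquad G_j \propto g_1 g_2 \cdots g_j,
\end{equation}
where each $g_j$ is the origin-symmetric quasi-concave density factor produced by the $j$-th application of Lemma~\ref{lem:good-decomposition}. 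The base case $k=1$ is direct: apply Lemma~\ref{lem:good-decomposition} to $\wt\bbP^{(\eta)}_{0,0} = \bbP^{(\eta)}_{[0,1]}$ with $K = K^{(\eta)}_{R_1/C_{\ref{lem:good-decomposition}}}$, the concentration $\mu(K) \geq 1 - \alpha^{-10}$ following from a standard Brownian oscillation estimate and the choice $R_1 = C_{\ref{lem:good-decomposition}}^2\sqrt{\log \alpha}$. Property (3) at $j=1$ reduces, thanks to $\beta_0 = 0$, to Lemma~\ref{lem:good-decomposition}, part~\ref{it:bad-domination}.

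For the inductive step, first verify the concentration $\wt\bbP^{(\eta)}_{0,\beta_{k-1}}[K^{(\eta)}_{R_k/C_{\ref{lem:good-decomposition}}}] \geq 1 - \alpha^{-10}$ via Lemma~\ref{lem:unif-good}: the recursion \eqref{eq:R-recursion} is calibrated so that $R_k/C_{\ref{lem:good-decomposition}}$ matches the oscillation threshold, and $\beta_{k-1}^{-100} \leq \alpha^{-10}$ since $\beta_{k-1} \geq \beta_1 \gtrsim \alpha/(\log\alpha)^{3/2}$ for $\alpha$ large. Applying Lemma~\ref{lem:good-decomposition} to the Gaussian measure $\wt\bbP^{(\eta)}_{0,\beta_{k-1}}$ produces the decomposition \eqref{eq:level-k-decomp-circ} satisfying \eqref{eq:mu-circ-good-k-confined}--\eqref{eq:mu-circ-bad-k-dominated}, and defines $g_k$ by $\de\nu^{\circ}_{\good_k}/\de\wt\bbP^{(\eta)}_{0,\beta_{k-1}} \propto g_k$. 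The invariant \eqref{eq:proof-invariant} gives $\nu_{\good_{k-1}} = (\wt\bbP^{(\eta)}_{0,\beta_{k-1}})^{(f_{k-1})}$ with $f_{k-1} \propto G_{k-1} e^{\beta_{k-1} Q_0}$, so the reweighting recipe \eqref{eq:reweight1}--\eqref{eq:reweight3} converts \eqref{eq:level-k-decomp-circ} into the target \eqref{eq:general-mixture-decomposition} with $\nu_{\good_k} = (\nu^{\circ}_{\good_k})^{(f_{k-1})}$ and $\nu_{\bad_k} = (\nu^{\circ}_{\bad_k})^{(f_{k-1})}$. Chasing definitions then confirms $\nu_{\good_k} = (\bbP^{(\eta)}_{[0,1]})^{(g_k G_{k-1})}$, propagating the invariant to level $k$.

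Properties (1) and (2) are now routine: reweighting preserves supports, and $\nu_{\good_k} \preceq \nu_{\good_{k-1}}$ follows from $\nu^{\circ}_{\good_k} \preceq \wt\bbP^{(\eta)}_{0,\beta_{k-1}}$ by reweighting both sides by the same factor $f_{k-1}$. The main obstacle, which also motivates the otherwise mysterious $\la \beta_{j-1} Q_0 \ra$ twist in property (3), is the cancellation
\[
(\nu_{\bad_k})^{\la \beta_{k-1} Q_0\ra}
= \big((\nu^{\circ}_{\bad_k})^{(f_{k-1})}\big)^{\la \beta_{k-1} Q_0\ra}
= (\nu^{\circ}_{\bad_k})^{(G_{k-1})},
\]
in which the $e^{\beta_{k-1} Q_0}$ coming from $f_{k-1}$ annihilates against $e^{-\beta_{k-1} Q_0}$. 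The remaining chain
\[
\nu^{\circ}_{\bad_k} \preceq (\wt\bbP^{(\eta)}_{0,\beta_{k-1}})^{\times 2} = \big((\bbP^{(\eta)}_{[0,1]})^{\times 2}\big)^{\la \beta_{k-1} Q_0/4\ra} \preceq (\bbP^{(\eta)}_{[0,1]})^{\times 2}
\]
(using Corollary~\ref{cor:quadratic-domination} for the final $\preceq$) can then be reweighted on both sides by the product of symmetric quasi-concave functions $G_{k-1}$ to conclude property (3). This last reweighting step is where the invariant \eqref{eq:proof-invariant} pays off: without it one would be stuck with an opaque density on $\nu_{\bad_k}$, whereas \eqref{eq:proof-invariant} exhibits it as a clean product of quasi-concave factors times a controllable Gaussian twist.
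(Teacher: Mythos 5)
Your proof is correct and follows essentially the same inductive reweighting strategy as the paper: apply Lemma~\ref{lem:good-decomposition} at level $k$ to the Gaussian $\wt\bbP^{(\eta)}_{0,\beta_{k-1}}$, reweight the resulting decomposition by $\de\nu_{\good_{k-1}}/\de\wt\bbP^{(\eta)}_{0,\beta_{k-1}}$, and observe that the $e^{\pm\beta_{k-1}Q_0}$ factors cancel so that $(\nu_{\bad_k})^{\la\beta_{k-1}Q_0\ra}$ is a clean quasi-concave reweighting of $\nu^\circ_{\bad_k}$. The explicit invariant $\nu_{\good_j}=(\bbP^{(\eta)}_{[0,1]})^{(G_j)}$ is just a more concrete bookkeeping device for what the paper extracts on the fly from its induction hypothesis~(2); the content is the same.

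One phrasing imprecision at the end: when verifying property (3) you say you reweight the chain $\nu^{\circ}_{\bad_k}\preceq(\bbP^{(\eta)}_{[0,1]})^{\times 2}$ ``on both sides'' by $G_{k-1}$. Doing so literally gives $(\nu^{\circ}_{\bad_k})^{(G_{k-1})}\preceq\big((\bbP^{(\eta)}_{[0,1]})^{\times 2}\big)^{(G_{k-1})}$, which has the wrong dominating measure; you then need the extra observation $\big((\bbP^{(\eta)}_{[0,1]})^{\times 2}\big)^{(G_{k-1})}\preceq(\bbP^{(\eta)}_{[0,1]})^{\times 2}$ (valid since $G_{k-1}$ is a bounded product of symmetric quasi-concave factors). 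The cleaner route, and the one consistent with the remark after Corollary~\ref{cor:GCI}, is to reweight only the left-hand measure $\nu^{\circ}_{\bad_k}$ by $G_{k-1}$ (one $g_j$ at a time), which directly preserves $\preceq(\bbP^{(\eta)}_{[0,1]})^{\times 2}$. Either way the argument closes; this is a one-line fix, not a structural gap.
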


\begin{proof}
    We induct on $k$, where the base case is Lemma~\ref{lem:good-decomposition}. Consider a decomposition of the form \eqref{eq:level-k-decomp-circ} with $j=k$. By Lemmas~\ref{lem:unif-good} and \ref{lem:good-decomposition}, there exists such a decomposition such that all the properties in Hypothesis~\ref{it:induct-4} hold.
    \revedit{
    Indeed, Lemma~\ref{lem:unif-good} and the definition~\ref{eq:R-recursion} of $R_j$ ensure that the symmetric convex set
    \[
    K^{(\eta)}_{R_j/C_{\ref{lem:good-decomposition}}}
    =
    \lt\{
    \bB_{[0,1]}\in C^{(\eta)}([0,1];\bbR^3)~:~\sup_{s,t\in [0,1]}\|\bB_t-\bB_s\|\leq R_j/C_{\ref{lem:good-decomposition}}
    \rt\}
    \]
    satisfies the hypotheses of Lemma~\ref{lem:good-decomposition} with $\delta\leq\beta_{j-1}^{-100}\leq \alpha^{-10}$. (The latter bound is technically vacuous as stated for $\beta_0=0$, but this is the base case for which we use the construction following Lemma~\ref{lem:good-decomposition} directly; alternatively, Lemma~\ref{lem:unif-good} holds for all $\beta\geq 0$ if $\beta$ is replaced by $\beta+1$ in the right-hand side \eqref{eq:unif-good}.)
    }

    To close the induction for \eqref{eq:general-mixture-decomposition}, we reweight \eqref{eq:level-k-decomp-circ} by $\de \nu_{\good_{k-1}}/\de \wt\bbP^{(\eta)}_{0,\beta_{k-1}}$ and obtain
\begin{equation}
\label{eq:level-k-decomp}
    \nu_{\good_{k-1}}
    =
    (1-c_k)\nu_{\good_k}
    +
    c_k \nu_{\bad_k}.
\end{equation}

We now verify that the first three induction hypotheses continue to hold. Hypothesis~\ref{it:induct-1} follows from \eqref{eq:mu-circ-good-k-confined} and absolute continuity. For hypothesis~\ref{it:induct-2}, note that \eqref{eq:mu-circ-good-k-dominated} immediately implies $\nu_{\good_k}\preceq\nu_{\good_{k-1}}$ since the relation $\preceq$ is unchanged by joint reweighting. Finally hypothesis~\ref{it:induct-3} holds because 
\begin{align*}
    (\nu_{\bad_k})^{\la\beta_{k-1}Q_0\ra} 
    &\preceq
    \frac{\exp(-\beta_{k-1}Q_0)\de \nu_{\good_{k-1}}}{\de \wt\bbP^{(\eta)}_{0,\beta_{k-1}}}
    \cdot
    \big(\wt\bbP^{(\eta)}_{0,\beta_{k-1}}\big)^{\times 2}
    \\
    &\preceq
    \frac{\exp(-\beta_{k-1}Q_0)\de \bbP^{(\eta)}_{[0,1]}}{\de \wt\bbP^{(\eta)}_{0,\beta_{k-1}}}
    \cdot
    \big(\wt\bbP^{(\eta)}_{0,\beta_{k-1}}\big)^{\times 2}
    \\
    &=
    \big(\wt\bbP^{(\eta)}_{0,\beta_{k-1}}\big)^{\times 2}
    \\
    &
    \preceq
    (\bbP_{[0,1]}^{(\eta)})^{\times 2}.
\end{align*}
    \revedit{Here the first domination relation follows from \eqref{eq:mu-circ-bad-k-dominated} with $j=k$ since
    \[
    \frac{\de \nu_{\bad_k}}{\de\nu^{\circ}_{\bad_k}}
    =
    \frac{\de \nu_{\good_{k-1}}}{\de\wt\bbP^{(\eta)}_{0,\beta_{k-1}}}.
    \]
    Meanwhile $\nu_{\good_{k-1}}\preceq \bbP^{(\eta)}_{[0,1]}$ in the second step by hypothesis~\ref{it:induct-2},
    while the last step holds because the dilation $(\cdot)^{\times 2}$ preserves the relation $\preceq$.
    }
\end{proof}

Unwrapping the decompositions \eqref{eq:general-mixture-decomposition} of $\nu_{\good_j}$ in Lemma~\ref{lem:general-path-decomp}, we obtain
\begin{equation}
\label{eq:general-decomp-base}
    \bbP^{(\eta)}_{[0,1]}
    =a_{k}\nu_{\good_k}
    +
    \sum_{j=1}^k a_{k,j}\nu_{\bad_j};
\end{equation} 
\[
    a_k=\prod_{\ell\leq k} (1-c_{\ell}),\quad
    a_{k,j}=c_j\prod_{\ell=1}^{j-1} (1-c_{\ell}).
\]
As in \eqref{eq:product-decomp-basic} taking a $T$-fold product yields
\begin{align}
\label{eq:general-product-decomp}
    \bbP^{(\eta)}_{[0,T]}
    &=
    \sum_{\gamma^{(k)}\in 
    \{\good_k,\bad_k,\dots,\bad_1\}^T}
    w(\gamma^{(k)})
    \bP^{(\eta)}_{\gamma^{(k)}};
    \\
    \notag
    \bP^{(\eta)}_{\gamma^{(k)}}&\equiv\prod_{i=0}^{T-1} \nu_{i,\gamma^{(k)}_i},
    \\
    \notag
    w(\gamma^{(k)})&=
    a_k^{|\gamma_{\good_k}|}
    \cdot
    \prod_{j=1}^k
    a_{k,j}^{|\gamma_{\bad_j}|}
    .
\end{align}
\revedit{Here analogously to \eqref{eq:gamma-good-1-def} we write 
\[
\gamma_{\good_k}\equiv\{i\in [T]~:~\gamma_{i}=\good_k\},
\quad\quad\quad
\gamma_{\bad_j}\equiv\{i\in [T]~:~\gamma_{i}=\bad_j\}.
\]
}
Reweighting by $W_{\alpha,T}^{(A)}$, the resulting analog of \eqref{eq:product-decomp-polaron} is:
\begin{equation}
\label{eq:product-decomp-polaron-general}
\begin{aligned}
    \wh\bbP_{\alpha,T}^{(A,\eta)}
    &=
    \sum_{\gamma^{(k)}\in
    \{\good_k,\bad_k,\dots,\bad_1\}}
    \wh w(\gamma^{(k)})
    \wh\bP^{(A,\eta)}_{\gamma^{(k)}}
    ;
    \\
    \de\wh\bP^{(A,\eta)}_{\gamma^{(k)}}
    &\propto
    W_{\alpha,T}^{(A)}(\bB)
    ~\de\bP^{(\eta)}_{\gamma^{(k)}}(\bB);
    \\
    \wh w(\gamma^{(k)})
    &=
    w(\gamma^{(k)})\cdot 
    \frac{
    \bbE^{\bP^{(A,\eta)}_{\gamma^{(k)}}}\lt[W_{\alpha,T}^{(A)}\rt]
    }
    {
    \bbE^{\bbP^{(A,\eta)}_{\alpha,T}}\lt[W_{\alpha,T}^{(A)}\rt]
    }
.
\end{aligned}
\end{equation}

\begin{lemma}
\label{lem:general-mostly-good}
The weights $\wh w(\gamma^{(k)})$ in \eqref{eq:product-decomp-polaron-general} satisfy
    \begin{equation}
    \label{eq:general-mostly-good} 
     \sum_{\gamma^{(k)}\in 
     \{\good_k,\bad_k,\dots,\bad_1\}} 
     \wh w(\gamma^{(k)})\cdot|\gamma^{(k)}_{\good_k}|
     \geq
     T(1-k\alpha^{-10}).
    \end{equation}
\end{lemma}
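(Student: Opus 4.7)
The proof extends Lemma~\ref{lem:mostly-good} from the level-$1$ to the level-$k$ decomposition. Fix $i\in[T]$; summing over $i$ will yield the lemma, so it suffices to show
\[
\sum_{\gamma^{(k)}:\gamma^{(k)}_i=\good_k}\wh w(\gamma^{(k)}) \geq 1 - k\alpha^{-10}.
\]
Using independence of increments in \eqref{eq:general-product-decomp} to sum out the labels at positions $\ell\neq i$, each such summation reassembles $\bbP^{(\eta)}_{[\ell,\ell+1]}$ via \eqref{eq:general-decomp-base} while leaving $\nu_{\good_k}$ at position $i$:
\[
\sum_{\gamma^{(k)}:\gamma^{(k)}_i=\good_k}w(\gamma^{(k)})\bP^{(\eta)}_{\gamma^{(k)}} = a_k\cdot \nu_{i,\good_k,[0,T]},
\]
where $\nu_{i,\good_k,[0,T]} := \bbP^{(\eta)}_{[0,i]}\times\nu_{\good_k}\times\bbP^{(\eta)}_{[i+1,T]}$ and $a_k=\prod_{\ell=1}^k(1-c_\ell)$ is the total coefficient of $\nu_{\good_k}$ in the mixture~\eqref{eq:general-decomp-base}.

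Next, hypothesis~\ref{it:induct-2} of Lemma~\ref{lem:general-path-decomp} gives $\nu_{\good_k}\preceq\bbP^{(\eta)}_{[0,1]}$, and taking products yields $\nu_{i,\good_k,[0,T]}\preceq \bbP^{(\eta)}_{[0,T]}$. Since $W_{\alpha,T}^{(A)}$ is symmetric and quasi-concave (as in the proof of Lemma~\ref{lem:mostly-good}), Corollary~\ref{cor:GCI} gives $\bbE^{\nu_{i,\good_k,[0,T]}}[W_{\alpha,T}^{(A)}] \geq \bbE^{\bbP^{(\eta)}_{[0,T]}}[W_{\alpha,T}^{(A)}]$. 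Substituting into the reweighting formula \eqref{eq:product-decomp-polaron-general}:
\[
\sum_{\gamma^{(k)}:\gamma^{(k)}_i=\good_k}\wh w(\gamma^{(k)}) = \frac{a_k\bbE^{\nu_{i,\good_k,[0,T]}}[W_{\alpha,T}^{(A)}]}{\bbE^{\bbP^{(\eta)}_{[0,T]}}[W_{\alpha,T}^{(A)}]} \geq a_k.
\]

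It then remains to show $a_k\geq 1-k\alpha^{-10}$, equivalently $\sum_{\ell=1}^k c_\ell\leq k\alpha^{-10}$; for this it suffices to verify $c_\ell\leq\alpha^{-10}$ at each splitting step. Applying the reweighting formula \eqref{eq:reweight3} to the decomposition \eqref{eq:level-k-decomp-circ} of $\wt\bbP^{(\eta)}_{0,\beta_{\ell-1}}$ with weight $f_{\ell-1}=\de\nu_{\good_{\ell-1}}/\de\wt\bbP^{(\eta)}_{0,\beta_{\ell-1}}$, one computes $c_\ell = \bbE^{\nu_{\good_{\ell-1}}}[e^{-\sigma_\ell(d_\ell)}]$, where $\sigma_\ell$ and $d_\ell$ are the cut-off and distance from the Lemma~\ref{lem:good-decomposition} construction at level $\ell$; the bound $\bbE^{\wt\bbP^{(\eta)}_{0,\beta_{\ell-1}}}[e^{-\sigma_\ell(d_\ell)}] = \delta^\circ_\ell\leq\alpha^{-10}$ holds by hypothesis~\ref{it:induct-4}. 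Since $e^{-\sigma_\ell(d_\ell)}$ is symmetric and quasi-convex, the quasi-convex form of Corollary~\ref{cor:GCI} (derived by writing sub-level sets of quasi-convex functions as complements of symmetric convex sets) will transfer this bound to $c_\ell$ provided one has an appropriate $\preceq$ domination between $\nu_{\good_{\ell-1}}$ and $\wt\bbP^{(\eta)}_{0,\beta_{\ell-1}}$. The main obstacle is establishing precisely this domination by carefully tracking the factorization of $f_{\ell-1}$ through the iterated construction of Lemma~\ref{lem:general-path-decomp}; once this is done, summing the per-$i$ estimate over $i\in[T]$ gives the stated bound.
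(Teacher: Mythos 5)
Your first steps are fine: summing out the labels at positions $\ell\neq i$ does yield $a_k\nu_{i,\good_k,[0,T]}$, and the positive-correlation argument (GCI applied to $W_{\alpha,T}^{(A)}$ against $\de\nu_{i,\good_k,[0,T]}/\de\bbP^{(\eta)}_{[0,T]}$, both symmetric quasi-concave) correctly gives $\sum_{\gamma^{(k)}:\gamma^{(k)}_i=\good_k}\wh w(\gamma^{(k)})\geq a_k$. The gap is in the last step: you need $a_k\geq 1-k\alpha^{-10}$, i.e.\ $c_\ell\leq\alpha^{-10}$ at every level, and this is \emph{false}. The weights $c_\ell$ live in the pre-polaron decomposition \eqref{eq:general-decomp-base}; they are driven far from $\delta^\circ_\ell$ by the reweighting factor $\de\nu_{\good_{\ell-1}}/\de\wt\bbP^{(\eta)}_{0,\beta_{\ell-1}}$, and the domination $\nu_{\good_{\ell-1}}\preceq\wt\bbP^{(\eta)}_{0,\beta_{\ell-1}}$ you would need for the transfer simply does not hold. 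To see why, take $\ell-1=1$: if $\nu_{\good_1}\preceq\wt\bbP^{(\eta)}_{0,\beta_1}$, then Corollary~\ref{cor:GCI} would force $\bbE^{\nu_{\good_1}}[\|\bB_1\|^2]\leq\bbE^{\wt\bbP^{(\eta)}_{0,\beta_1}}[\|\bB_1\|^2]=O(\beta_1^{-1/2})\ll 1$, but $\nu_{\good_1}$ is essentially $\bbP^{(\eta)}_{[0,1]}$ confined to $K^{(\eta)}_{R_1}$ and so has $\Theta(1)$ endpoint variance. The obstruction is intrinsic: $\de\nu_{\good_{\ell-1}}/\de\wt\bbP^{(\eta)}_{0,\beta_{\ell-1}}$ contains the factor $e^{+\beta_{\ell-1}Q_0}$, which is quasi-\emph{convex}, and nothing in your bound cancels it. Consequently $a_k$ is expected to be tiny (the good set shrinks drastically at each level), and the inequality $\sum\wh w\geq a_k$ becomes vacuous.

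This is precisely why the paper does not try to bound the pre-polaron weights $a_k$. Instead it tracks the \emph{post-polaron} probabilities $p_{i,j}$ that $\gamma^{(j)}(i)=\good_j$ under the reweighted decomposition, and proves the ratio $p_{i,j+1}/p_{i,j}\geq 1-\alpha^{-10}$ at each step. The key GCI application there, in \eqref{eq:hard-GCI-use}, involves the joint weight $W_{\alpha,T}^{(A)}(\bB)\exp(\beta_j Q_i(\bB))\one_{K^{(\eta)}_{R_j,i}}$, which \emph{is} quasi-concave precisely because $W_{\alpha,T}^{(A)}$ compensates for the quasi-convex factor $\exp(\beta_j Q_i)$ on the relevant range (Proposition~\ref{prop:gaussian-domination-general}). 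That compensation --- having the polaron weight available inside the GCI step at the same time as the $e^{\beta Q}$ factor --- is the structural ingredient your $a_k$ approach forgoes, and it is what makes the per-level losses small. Without it, the argument cannot close.
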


\begin{proof}
    Fix $i\in [T]$ arbitrarily and define 
    \begin{equation}
    \label{eq:pij-def}
    p_{i,j}=
    \sum_{\gamma^{(j)}\in \{\good_j,\bad_j,\dots,\bad_1\}^T:~\gamma^{(j)}(i)=\good_j}
    \wh w(\gamma^{(j)})
    \end{equation}
    to be the probability that $\gamma^{(j)}(i)=\good_j$ according to the decomposition~\eqref{eq:product-decomp-polaron-general} at level $j$. 
    We will show below that $\frac{p_{i,j+1}}{p_{i,j}}\geq 1-\alpha^{-10}$ for each $j$. This suffices to complete the proof: it implies
    \[
    p_{i,j}\geq (1-\alpha^{-10})^k\geq 1-k\alpha^{-10}
    \]
    and summing over $i\in [T]$ then yields \eqref{eq:general-mostly-good}.
    
    Similarly to \eqref{eq:nu-extend}, we extend $\nu^{\circ}_{i,\good_{j+1}}$ and $\nu_{i,\good_{j+1}}$ and $\wt\bbP^{(\eta)}_{i,\beta_{j}}$ 
    to probability measures on $C^{(\eta)}([0,T];\bbR^3)$ rather than $C^{(\eta)}([i,i+1];\bbR^3)$ via
    \begin{equation}
    \label{eq:extend-v2}
    \begin{aligned}
    \nu^{\circ}_{i,\good_{j+1},[0,T]}
    &=
    \bbP^{(\eta)}_{[0,i]}\times \nu^{\circ}_{i,\good_{j+1}}\times \bbP^{(\eta)}_{[i+1,T]},
    \\
    \nu_{i,\good_{j+1},[0,T]}
    &=
    \bbP^{(\eta)}_{[0,i]}\times \nu_{i,\good_{j+1}}\times \bbP^{(\eta)}_{[i+1,T]},
    \\
     \wt\bbP^{(\eta)}_{i,\beta_{j},[0,T]}
    &=
    \bbP^{(\eta)}_{[0,i]}\times  \wt\bbP^{(\eta)}_{i,\beta_{j}}\times \bbP^{(\eta)}_{[i+1,T]}.
    \end{aligned}
    \end{equation}
    \revedit{
    The reason to define these measures is that e.g. $\nu_{i,\good_{j+1},[0,T]}$ is given by summing all these components in \eqref{eq:general-product-decomp} (for $k=j+1$) which satisfy $\gamma_i^{(j+1)}=\good_{j+1}$.
    Hence it appears naturally in a decomposition of $\bbP^{(\eta)}_{[0,T]}$ with weight given by the sum in \eqref{eq:pij-def}, but with $\wh w(\gamma^{(j+1)})$ replaced by $w(\gamma^{(j+1)})$. 
    We will prove \eqref{eq:pij-def} by comparing these different reweightings of the same decomposition, using as usual the Gausian correlation inequality.
    }

    To begin this argument, for each $i\in [T]$ and $j$ we have the reweighting of \eqref{eq:general-decomp-base} by $W^{(A)}_{\alpha,T}$:
    \begin{equation}
    \label{eq:pij-decomposition}
    \begin{aligned}
    \wh\bbP_{\alpha,T}^{(A,\eta)}
    &=
    p_{i,j} 
    (\nu_{i,\good_{j},[0,T]})^{(W^{(A)}_{\alpha,T})}
    +
    (p_{i,j-1}-p_{i,j})
    (\nu_{i,\bad_{j},[0,T]})^{(W^{(A)}_{\alpha,T})}+
    \\
    &\quad\quad\quad\quad
    \ldots
    +
    (1-p_{i,1})
    (\nu_{i,\bad_{1},[0,T]})^{(W^{(A)}_{\alpha,T})}.
    \end{aligned}
    \end{equation}
    Here the weight values follow by the definition \eqref{eq:pij-def} and Proposition~\ref{prop:refine-partition}. In using the latter, each $S_i$ corresponds as just discussed to a decomposition of $\nu_{i,\good_{j},[0,T]}$ (resp.  $\nu_{i,\bad_{\ell},[0,T]}$) into the parts $\bP^{(\eta)}_{\gamma^{(k)}}$ with $\gamma^{(k)}(i)=\good_j$ (resp. $\gamma^{(k)}(i)=\bad_{\ell}$).

    To estimate $\frac{p_{i,j+1}}{p_{i,j}}$, first note that we can extend \eqref{eq:level-k-decomp-circ} (with $j+1$ in place of $j$) to a decomposition 
    \begin{equation}
    \label{eq:intermediate-step-1}
    \wt\bbP^{(\eta)}_{i,\beta_j,[0,T]}
    =
    (1-\delta^{\circ}_{j+1})\nu^{\circ}_{i,\good_{j+1},[0,T]}
    +
    \delta^{\circ}_{j+1} \nu^{\circ}_{i,\bad_{j+1},[0,T]}
    \end{equation}
    by taking products with $\bbP^{(\eta)}_{[0,i]}\times \bbP^{(\eta)}_{[i+1,T]}$ as in \eqref{eq:extend-v2}. By definition, reweighting \eqref{eq:intermediate-step-1} by $\exp(\beta_j Q_i(\bB))$ transforms the left-hand side into $\bbP^{(\eta)}_{i,[0,T]}$. Hence reweighting \eqref{eq:intermediate-step-1} by $\exp(\beta_j Q_i(\bB))
    \cdot\frac{\de\nu_{i,\good_{j},[0,T]}}{\de\bbP^{(\eta)}_{[0,T]}}$ transforms the left-hand side into 
    $\nu_{i,\good_{j},[0,T]}$. (Note that reweighting can be done in multiple stages as discussed just before Proposition~\ref{prop:refine-partition}.)

    We claim that reweighting \eqref{eq:intermediate-step-1} by  
    \[
    p_{i,j} W_{\alpha,T}^{(A)}(\bB)\exp(\beta_j Q_i(\bB))
    \cdot\frac{\de\nu_{i,\good_{j},[0,T]}}{\de\bbP^{(\eta)}_{[0,T]}}
    \]
    yields the decomposition (of subprobability measures):
    \begin{equation}
    \label{eq:intermediate-step-2}
    p_{i,j}(\nu_{i,\good_j,[0,T]})^{(W^{(A)}_{\alpha,T})}
    =
    p_{i,j+1}(\nu_{i,\good_{j+1},[0,T]})^{(W^{(A)}_{\alpha,T})}
    +
    (p_{i,j}-p_{i,j+1})(\nu_{i,\bad_{j+1},[0,T]})^{(W^{(A)}_{\alpha,T})}
    .
    \end{equation}
    The form of the left-hand side in \eqref{eq:intermediate-step-2} follows from the previous paragraph. 
    Moreover, this left-hand side is the contribution of $(\nu_{i,\good_j,[0,T]})^{(W^{(A)}_{\alpha,T})}$ to $\wh\bbP^{(A,\eta)}_{\alpha,T}$ in \eqref{eq:pij-decomposition}.
    Replacing $j$ by $j+1$ in \eqref{eq:pij-decomposition}, we obtain the right-hand side weights in \eqref{eq:intermediate-step-2}.

    Given this reweighting from \eqref{eq:intermediate-step-1} to \eqref{eq:intermediate-step-2}, the formula \eqref{eq:reweight3} thus implies that
    \begin{equation}
    \label{eq:pj-ratio-new}
    \frac{p_{i,j+1}}{p_{i,j}}
    =
    (1-\delta^{\circ}_j)
    \cdot
    \frac{
    \bbE^{\nu^{\circ}_{i,\good_{j+1},[0,T]}}\lt[W_{\alpha,T}^{(A)}(\bB)\exp(\beta_j Q_i(\bB))
    \cdot\frac{\de\nu_{i,\good_{j},[0,T]}}{\de\bbP^{(\eta)}_{[0,T]}}\rt]
    }
    {
    \bbE^{\wt\bbP^{(\eta)}_{i,\beta_{j},[0,T]}}\lt[W_{\alpha,T}^{(A)}(\bB)\exp(\beta_j Q_i(\bB))
    \cdot\frac{\de\nu_{i,\good_{j},[0,T]}}{\de\bbP^{(\eta)}_{[0,T]}}\rt]
    }.
    \end{equation}
    Recalling that $\delta^{\circ}_j\leq\alpha^{-10}$, to show $\frac{p_{i,j+1}}{p_{i,j}}\geq 1-\alpha^{-10}$ it therefore remains to prove that
    \begin{equation}
    \label{eq:hard-GCI-use}
    \begin{aligned}
    &\bbE^{\nu^{\circ}_{i,\good_{j+1},[0,T]}}\lt[W_{\alpha,T}^{(A)}(\bB)\exp(\beta_j Q_i(\bB))
    \cdot\frac{\de\nu_{i,\good_{j},[0,T]}}{\de\bbP^{(\eta)}_{[0,T]}}\rt]
    \\
    &\stackrel{?}{\geq}
    \bbE^{\wt\bbP^{(\eta)}_{i,\beta_{j},[0,T]}}\lt[W_{\alpha,T}^{(A)}(\bB)\exp(\beta_j Q_i(\bB))
    \cdot\frac{\de\nu_{i,\good_{j},[0,T]}}{\de\bbP^{(\eta)}_{[0,T]}}\rt].
    \end{aligned}
    \end{equation}
    This will follow by the Gaussian correlation inequality. \revedit{Indeed since $\bbE^{\bP}[f]=\bbE^{\bQ}\lt[\frac{\de \bP}{\de\bQ} \cdot f\rt]$ for mutually absolutely continuous probability measures $\bP$ and $\bQ$, the left-hand side of \eqref{eq:hard-GCI-use} equals}
    \begin{equation}
    \label{eq:LHS-equals}
    \bbE^{\wt\bbP^{(\eta)}_{i,\beta_{j},[0,T]}}
    \lt[
    \lt(\frac{\de\nu^{\circ}_{i,\good_{j+1},[0,T]}}{\de\wt\bbP^{(\eta)}_{i,\beta_{j},[0,T]}}
    \rt)
    \cdot 
    \lt(W_{\alpha,T}^{(A)}(\bB)\exp(\beta_j Q_i(\bB))
    \cdot\frac{\de\nu_{i,\good_{j},[0,T]}}{\de\bbP^{(\eta)}_{[0,T]}}
    \rt)\rt]
    \end{equation}
    The former Radon--Nikodym derivative is symmetric-quasi-concave by \eqref{eq:mu-circ-good-k-dominated}. Because $\nu_{i,\good_{j},[0,T]}$ is supported in $K^{(\eta)}_{R_j,i}$ (recall Lemma~\ref{lem:general-path-decomp} part~\ref{it:induct-1}), the \revedit{remainding factor in \eqref{eq:LHS-equals}} equals
    \[
    \lt(
    W_{\alpha,T}^{(A)}(\bB)\exp(\beta_j Q_i(\bB))\cdot \one_{K^{(\eta)}_{R_j,i}}
    \rt)
    \cdot
    \lt(
    \frac{\de\nu_{i,\good_{j},[0,T]}}{\de\bbP^{(\eta)}_{[0,T]}}
    \rt).
    \]
    \revedit{Thanks to the factor $\one_{K^{(\eta)}_{R_j,i}}$, both of these terms are limits of products of symmetric-quasi-concave functions} (the first by Proposition~\ref{prop:gaussian-domination-general}).
    Moreover $\wt\bbP^{(\eta)}_{i,\beta_{j},[0,T]}$ is a centered Gaussian measure.
    Therefore we may indeed apply the Gaussian correlation inequality to lower-bound \eqref{eq:LHS-equals} by
    \begin{align*}
    &\bbE^{\wt\bbP^{(\eta)}_{i,\beta_{j},[0,T]}}
    \lt[
    \frac{\de\nu^{\circ}_{i,\good_{j+1},[0,T]}}{\de\wt\bbP^{(\eta)}_{i,\beta_{j},[0,T]}}
    \rt]
    \cdot 
    \bbE^{\wt\bbP^{(\eta)}_{i,\beta_{j},[0,T]}}
    \lt[W_{\alpha,T}^{(A)}(\bB)\exp(\beta_j Q_i(\bB))
    \cdot\frac{\de\nu_{i,\good_{j},[0,T]}}{\de\bbP^{(\eta)}_{[0,T]}}
    \rt]
    \\
    &\hspace{4.4cm}
    =
    \bbE^{\wt\bbP^{(\eta)}_{i,\beta_{j},[0,T]}}
    \lt[W_{\alpha,T}^{(A)}(\bB)\exp(\beta_j Q_i(\bB))
    \cdot\frac{\de\nu_{i,\good_{j},[0,T]}}{\de\bbP^{(\eta)}_{[0,T]}}
    \rt].
    \end{align*}
    We conclude that \eqref{eq:hard-GCI-use} holds, which completes the proof.
    % ~\\
    % {\color{red}old below}
    % \\
    % \begin{equation}
    % \label{eq:pj-ratio}
    % \begin{aligned}
    % \frac{p_{i,j+1}}{p_{i,j}}
    % &=
    % (1-\delta^{\circ}_j)
    % \cdot
    % \frac{\bbE^{\nu^{\circ}_{i,\good_{j+1},[0,T]}}\lt[W_{\alpha,T}^{(A)}(\bB)\exp(\beta_j Q_i(\bB))
    % \cdot\frac{\de\nu_{i,\good_{j},[0,T]}}{\de\bbP^{(\eta)}_{[0,T]}}\rt]}
    % {\bbE^{\wt\bbP^{(\eta)}_{i,\beta_{j},[0,T]}}\lt[W_{\alpha,T}^{(A)}(\bB)\exp(\beta_j Q_i(\bB))
    % \cdot\frac{\de\nu_{i,\good_{j},[0,T]}}{\de\bbP^{(\eta)}_{[0,T]}}\rt]}
    % \\
    % &\stackrel{\GCI}{\geq}
    % 1-\delta^{\circ}_j
    % \\
    % &\geq 
    % 1-\alpha^{-10}.
    % \end{aligned}
    % \end{equation}
    % Here in the $\GCI$ step we used that the restrictions to $K^{(\eta)}_{i,R_j}$ of both
    % \[
    % \frac{\de\nu^{\circ}_{i,\good_{j+1},[0,T]}}{\de\wt\bbP^{(\eta)}_{i,\beta_{j},[0,T]}},
    % \quad
    % W_{\alpha,T}^{(A)}(\bB)\exp(\beta_j Q_i(\bB))\cdot\frac{\de\nu_{i,\good_{j},[0,T]}}{\de\bbP_{[0,T]}P^{(\eta)}}
    % \]
    % are approximated by products of symmetric-quasi-concave functions. (The $W_{\alpha,T}^{(A)}(\bB)\exp(\beta_j Q_i(\bB))$ factor is handled as usual by Proposition~\ref{prop:gaussian-domination-computation}.) This suffices to apply the Gaussian correlation inequality by Lemma~\ref{lem:general-path-decomp} part~\ref{it:induct-1}. 
    % Multiplying \eqref{eq:pj-ratio} over $j$, since $p_{i,0}=1$ we find 
    % \[
    %     p_{i,k}\geq (1-\alpha^{-10})^k\geq 1-k\alpha^{-10}.
    % \]
    % Summing over $i\in [T]$ completes the proof.
\end{proof}

The definition~\eqref{eq:dominating-product} extends similarly to \eqref{eq:product-decomp-polaron-general}. However for use in the next section we will now include adjacent interval interactions in the dominating Gaussian measure. We define
\begin{align*}
    \wt\bbP^{(\eta)}_{i,\bad_j}
    &=
    (\bbP^{(\eta)}_{[i,i+1]})^{\times 2},
    \\
    \wt\bbP^{(\eta)}_{i,\good_k}  
    &=
    \wt\bbP^{(\eta)}_{i,\beta_k}.
\end{align*}
Further, let $S^{\circ}(\gamma^{(k)})$ denote the set of $i\in [T]$ such that $\{i,i+1\}\subseteq \gamma^{(k)}_{\good_k}$. For any subset $S\subseteq S^{\circ}(\gamma^{(k)})$, we set
\begin{equation}
\label{eq:wt-bP-gamma-S-def}
    \wt\bP^{(\eta)}_{\gamma^{(k)},S}
    \propto
    \exp\lt(
    -\beta_k
    \sum_{i\in S}
    \int_{i}^{i+1}
    \int_{i+1}^{i+2}
    \|\bB_t-\bB_s\|^2
    ~\de t~\de s
    \rt)
    \cdot
    \prod_{i\in [T]}
    \wt\bbP^{(\eta)}_{i,\gamma^{(k)}_i}
    .
\end{equation}
Then we have the following analog of Lemma~\ref{lem:gaussian-domination-one-step}.

\begin{lemma}
\label{lem:gaussian-domination-k-step}
    For each $\gamma^{(k)}\in \{\good_k,\bad_k,\dots,\bad_1\}^T$ and any $S\subseteq S^{\circ}(\gamma^{(k)})$, we have
    \[
    \wh\bP^{(A,\eta)}_{\gamma^{(k)}}\preceq \wt\bP^{(\eta)}_{\gamma^{(k)},S}.
    \]
    In particular,
    \[
    \bbE^{ \wh\bP^{(A,\eta)}_{\gamma^{(k)}}}[\|\bB_T\|^2]
    \leq
    \bbE^{ \wt\bP^{(\eta)}_{\gamma^{(k)},S}}[\|\bB_T\|^2].
    \]
\end{lemma}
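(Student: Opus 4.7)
The plan is to apply Proposition~\ref{prop:gaussian-domination-general} in direct analogy with Lemma~\ref{lem:gaussian-domination-one-step}, now with two additions: cross-interval quadratic confinement at pairs $(s,t)$ with $s \in [i,i+1],~t \in [i+1,i+2]$ for $i \in S$, and same-interval quadratic confinement on bad$_j$ intervals. I would take $\bbQ^\dagger = \bP^{(\eta)}_{\gamma^{(k)}}$, $\wh\bbQ = \wh\bP^{(A,\eta)}_{\gamma^{(k)}}$, and $\wt\bbQ = \wt\bP^{(\eta)}_{\gamma^{(k)},S}$, with Gaussian reference $\bbQ = \prod_i \bbQ_i$ where $\bbQ_i = \bbP^{(\eta)}_{[i,i+1]}$ on good intervals and $\bbQ_i$ is a suitable Gaussian on bad intervals that absorbs the $e^{+\beta_{j-1} Q_i}$ factor appearing in $d\nu_{i,\bad_j}/d(\bbP^{(\eta)}_{[i,i+1]})^{\times 2}$ by virtue of induction hypothesis~\ref{it:induct-3}. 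The coefficient $F(s,t)$ is taken to be $\beta_k$ on same-interval good pairs and on cross-interval pairs in $S$, $\beta_{j-1}$ on same-interval bad$_j$ pairs, and zero otherwise, so that reweighting $\bbQ$ by $\exp(-\int\!\int F \|\bB_s - \bB_t\|^2)$ recovers exactly $\wt\bP^{(\eta)}_{\gamma^{(k)},S}$.

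The hard part will be the bad-interval contributions. Setting $F(s,t) = 0$ on bad intervals fails because the $e^{+\beta_{j-1} Q_i}$ factor in the Radon--Nikodym derivative of $\nu_{i,\bad_j}$ is not quasi-concave, while setting $F(s,t) = \beta_{j-1} > 0$ to cancel it requires a finite almost-sure bound $R(s,t) \leq R_{j-1}$ on $\|\bB_s - \bB_t\|$ in order to satisfy~\eqref{eq:F-bound}. The crucial observation that makes this work is that such a path bound \emph{does} hold under $\bP^{(\eta)}_{\gamma^{(k)}}$: since $\nu_{i,\bad_j}$ appears as a component of the mixture $\nu_{\good_{j-1}} = (1-c_j)\nu_{\good_j} + c_j \nu_{\bad_j}$ from Lemma~\ref{lem:general-path-decomp}, it is absolutely continuous with respect to $\nu_{\good_{j-1}}$, which is confined to $K^{(\eta)}_{R_{j-1},i}$ by induction hypothesis~\ref{it:induct-1} applied at level $j-1$. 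The case $j=1$ is vacuous since $\beta_0 = 0$ makes $F$ vanish on bad$_1$ intervals.

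Given this inherited bound, condition~\eqref{eq:F-bound} is then verified uniformly: the factor $16\text{e}^2$ in the definition $\beta_k = \alpha/(16\text{e}^2 R_k^3)$ is exactly enough to accommodate the worst case of cross-interval pairs in $S$ with $|t-s| \leq 2$ and bound $2R_k$ (giving a cube factor of $8$ and an exponential factor of $\text{e}^2$), while the same-interval good case with $|t-s| \leq 1$ and the same-interval bad$_j$ case (using $\beta_{j-1} = \alpha/(16\text{e}^2 R_{j-1}^3)$ and bound $R_{j-1}$) both satisfy the bound with slack. The domination $\bP^{(\eta)}_{\gamma^{(k)}} \preceq \bbQ$ factorizes over intervals: on good intervals by induction hypothesis~\ref{it:induct-2}, and on bad intervals by incorporating the $e^{+\beta_{j-1} Q_i}$ factor into the definition of $\bbQ_i$ so that the residual Radon--Nikodym factor $\phi_{i,\bad_j}$ supplied by hypothesis~\ref{it:induct-3} is quasi-concave. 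Proposition~\ref{prop:gaussian-domination-general} then yields $\wh\bP^{(A,\eta)}_{\gamma^{(k)}} \preceq \wt\bP^{(\eta)}_{\gamma^{(k)},S}$, and the expectation inequality on $\|\bB_T\|^2$ is immediate from Corollary~\ref{cor:GCI} applied to the symmetric convex function $\bB \mapsto \|\bB_T\|^2$.
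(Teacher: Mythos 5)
Your high-level strategy is the right one and you have identified all the essential ingredients: the tension between setting $F=0$ on bad intervals (which loses quasi-concavity because of the $e^{+\beta_{j-1}Q_i}$ factor in $\de\nu_{i,\bad_j}/\de(\bbP^{(\eta)}_{[i,i+1]})^{\times 2}$) and setting $F=\beta_{j-1}$ (which needs a finite support radius); the observation that this support radius $R_{j-1}$ is inherited on bad$_j$ intervals from absolute continuity w.r.t.\ $\nu_{\good_{j-1}}$ and hypothesis~\ref{it:induct-1} at level $j-1$; the fact that the case $j=1$ is vacuous because $\beta_0=0$; and the accounting behind the $16\mathrm{e}^2$ in~\eqref{eq:beta-k}. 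These all agree with the paper.

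However, the specific way you set up the measures has a genuine gap on the bad intervals for $j\geq 2$. You want the reference Gaussian $\bbQ = \prod_i\bbQ_i$ to recover $\wt\bP^{(\eta)}_{\gamma^{(k)},S}$ after reweighting by $\exp(-\int\!\int F\,\|\bB_s-\bB_t\|^2)$ with $F=\beta_{j-1}$ on a bad$_j$ interval. Since $\wt\bbP^{(\eta)}_{i,\bad_j}=(\bbP^{(\eta)}_{[i,i+1]})^{\times 2}$, this forces $\bbQ_i$ to be $(\bbP^{(\eta)}_{[i,i+1]})^{\times 2}$ \emph{up}-weighted by $e^{+\beta_{j-1}Q_i}$. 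But this is not a finite measure: the moment generating function $\bbE^{(\bbP^{(\eta)}_{[i,i+1]})^{\times 2}}[e^{\beta Q_i}]$ diverges once $\beta$ exceeds a fixed $O(1)$ threshold (set by the top eigenvalue of $Q_i$ relative to the covariance of the dilated Brownian path, as can be read off from the spectral computation in Subsection~\ref{subsec:kernel}), whereas $\beta_{j-1}$ grows like a power of $\alpha$ for all $j\geq 2$. So your proposed ``suitable Gaussian'' $\bbQ_i$ is not a probability measure, and Proposition~\ref{prop:gaussian-domination-general} cannot be applied as stated.

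The paper resolves exactly this point by putting the quadratic reweighting on $\bbQ^\dagger$ rather than on $\bbQ$. It takes $\bbQ_{i,\bad_j}=(\bbP^{(\eta)}_{[i,i+1]})^{\times 2}$ unchanged, and instead defines $\bbQ^{\dagger}_{i,\bad_j}=(\nu_{i,\bad_j})^{\la\beta_{j-1}Q_i\ra}$ — a \emph{down}-weighting, which is always normalizable. Then $\bbQ^{\dagger}_{i,\bad_j}\preceq\bbQ_{i,\bad_j}$ is immediate from hypothesis~\ref{it:induct-3}. The positive quadratic piece $e^{+\beta_{j-1}Q_i}$ that you were attempting to absorb into $\bbQ_i$ instead appears as part of the exponent $W^{(A)}_{\alpha,T}(\bB)\exp(\int\!\int F\,\|\bB_s-\bB_t\|^2)$ (with the same $F$ you wrote down), and is handled by the Riemann-sum quasi-concavity argument in the proof of Proposition~\ref{prop:gaussian-domination-general}, using the path bound $R(s,t)\leq R_{j-1}$ that you correctly identified. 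So your observations feed into the same final estimate; it is only the placement of the $\beta_{j-1}Q_i$ reweighting — on the $\bbQ^\dagger$ side instead of the $\bbQ$ side — that needs to change.
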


\begin{proof}
 \revedit{We will show that Proposition~\ref{prop:gaussian-domination-general} applies}. Set $\bbQ^{(\eta)}_{i,\good_k}=\bbP^{(\eta)}_{[i,i+1]}$ and $\bbQ^{(\eta)}_{i,\bad_j}=(\bbP^{(\eta)}_{[i,i+1]})^{\times 2}$. Also define $\bbQ^{\dagger,(\eta)}_{i,\good_k}=\lt(\nu_{i,\good_k}\rt)^{\la \beta_k Q_i\ra}$ and $\bbQ^{\dagger,(\eta)}_{i,\bad_j}=\lt(\nu_{i,\bad_j}\rt)^{\la \beta_{j-1} Q_i\ra}$. Then we use:
    \begin{align*}
    \bbQ
    &=
    \bQ^{(\eta)}_{\gamma^{(k)}}
    \equiv
    \prod_{i\in [T]}\bbQ^{(\eta)}_{i,\gamma^{(k)}_i},
    \\
    \bbQ^{\dagger}&=
    \prod_{i\in[T]} 
    \bbQ^{\dagger,(\eta)}_{i,\gamma^{(k)}_i}
    ,
    \\
    \wt\bbQ
    &=
    \wt\bP_{\gamma^{(k)}}^{(\eta)},
    \\
    \wh\bbQ
    &=
    \wh\bP_{\gamma^{(k)}}^{(A,\eta)}.
    \end{align*}
    Lemma~\ref{lem:general-path-decomp} implies that 
    \[
    \bbQ^{\dagger,(\eta)}_{i,\gamma^{(k)}_i}\preceq \bbQ^{(\eta)}_{i,\gamma_i}
    \]
    for each $i$. Hence taking a product over $i\in [T]$ yields $\bbQ^{\dagger}\preceq\bbQ$. The Radon--Nikodym derivative $\de\wh\bbQ/\de\bbQ^{\dagger}$ is proportional to $W^{(A)}_{\alpha,T}$ as required. Meanwhile $\de\wt\bbQ/\de\bbQ$ takes the required form with
    \[
    F(s,t)=
    \begin{cases}
    \beta_{j-1},\quad s,t\in [i,i+1],~\gamma_i=\bad_j;
    \\
    \beta_k,\quad s,t\in [i,i+1],~\gamma_i=\good_k;
    \\
    \beta_k,\quad (s,t)\in [i,i+1]\times [i+1,i+2],~i\in S;
    \\
    0,\quad\text{else}.
    \end{cases}
    \]
    \revedit{
    Indeed since $\beta_k=\frac{\alpha}{2\text{e}^2 (2R_k)^3}$ (recall \eqref{eq:beta-k}), the condition \eqref{eq:F-bound} applies on each interval $[i,i+2]$ via Lemma~\ref{lem:general-path-decomp}, part~\ref{it:induct-1} (with slightly adapted constants).
    }
    Thus Proposition~\ref{prop:gaussian-domination-general} yields the claim.
\end{proof}

\section{Slow Oscillation on Long Time-Scales}
\label{sec:slow-oscillation}

For $\bB\in C([0,T];\bbR^3)$ and $i\in [T]$, define the interval average
\[
    \overline{\bB}_{[i,i+1]}=\int_i^{i+1} \bB_t~\de t.
\]
Moreover for $0\leq a<b\leq T$, define the probability measure
\[
    \overline{\bbP}^{(\eta)}_{[a,b],\beta}
    \propto
     \exp\lt(
     -\beta
    \sum_{i=a}^{b-2}
    \int_{i}^{i+1}
    \int_{i+1}^{i+2}
    \|\bB_t-\bB_s\|^2
    ~\de t~\de s
    \rt)
    \cdot
    \prod_{i=a}^{b-1} \wt\bbP^{(\eta)}_{i,\beta}.
\]
This is essentially the $[a,b]$ factor in \eqref{eq:wt-bP-gamma-S-def} if $S$ contains a contiguous block $\{a,a+1,\dots,b-2\}\subseteq S$ (and also $a-1,b-1\notin S$). In particular, note that the quadratic confining interactions occur both within individual intervals and between adjacent intervals. We show below that the values $\overline{\bB}_{[i,i+1]}$ have smaller increments under $\overline{\bbP}^{(\eta)}_{[a,b],\beta}$ than suggested by the single-value fluctuation bounds of Lemma~\ref{lem:gain}. In the following lemmas we continue to treat $\beta\geq 2$ as arbitrary, but will set $\beta=\beta_L\asymp \frac{\alpha^4}{(\log\alpha)^6}$ when using them to finally deduce Theorem~\ref{thm:main}.

\begin{lemma}
\label{lem:small-averaged-fluctuations}
    \[
    \bbE^{\overline{\bbP}^{(\eta)}_{[0,2],\beta}}
    \lt[
    \|\overline{\bB}_{[0,1]}
    - 
    \overline{\bB}_{[1,2]}\|^2
    \rt]
    \leq O(1/\beta).
    \]
\end{lemma}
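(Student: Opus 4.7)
The strategy is to exploit the ``ANOVA-type'' algebraic identity
\begin{equation*}
    \int_0^1\!\!\!\int_1^2 \|\bB_t-\bB_s\|^2~\de t~\de s
    =
    \int_0^1 \|\bB_t-X\|^2~\de t
    \,+\,\|X-Y\|^2\,+\,
    \int_1^2 \|\bB_s-Y\|^2~\de s,
\end{equation*}
where $X=\overline{\bB}_{[0,1]}$ and $Y=\overline{\bB}_{[1,2]}$; this is an immediate consequence of expanding the square and noting that $\int_0^1(\bB_t-X)\de t=\int_1^2(\bB_s-Y)\de s=0$. The point is that this splits the cross-interval confinement appearing in $\overline{\bbP}^{(\eta)}_{[0,2],\beta}$ into a clean coupling factor $\exp(-\beta\|X-Y\|^2)$ together with two factors $\exp(-\beta\int_0^1\|\bB_t-X\|^2\de t)$ and $\exp(-\beta\int_1^2\|\bB_s-Y\|^2\de s)$ that only further confine each interval in isolation.

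I will parametrize a path by $(p_1,p_2)$ where $p_1$ is the restriction of $\bB$ to $[0,1]$ and $p_2(s):=\bB_s-\bB_1$ is the increment path on $[1,2]$, so that $p_1,p_2$ are independent under $\wt\bbP^{(\eta)}_{0,\beta}\times\wt\bbP^{(\eta)}_{1,\beta}$. Setting $a:=X-\bB_1=\int_0^1(p_1(t)-p_1(1))\de t$ and $b:=\int_1^2 p_2(s)\de s$, we have $X-Y=a-b$ and moreover $\bB_s-Y=p_2(s)-b$, so the third factor above depends only on $p_2$. Absorbing the second and third factors into the single-interval measures produces independent centered Gaussian measures $\mu_1(dp_1)$ and $\mu_2(dp_2)$ with
\begin{equation*}
    \overline{\bbP}^{(\eta)}_{[0,2],\beta}\;\propto\;\exp(-\beta\|a-b\|^2)\,\mu_1(dp_1)\,\mu_2(dp_2).
\end{equation*}

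The last step is a short explicit Gaussian computation. Because every measure in sight factorizes across the three $\bbR^3$ coordinates, the three coordinate pairs $(a_k,b_k)$ are i.i.d., and each pair is marginally a product of independent centered Gaussians under $\mu_1\times\mu_2$ with some variances $\sigma_1^2,\sigma_2^2\geq 0$. After reweighting by $\exp(-\beta(a_k-b_k)^2)$, the marginal density of $(a_k,b_k)$ is proportional to $\exp(-a_k^2/(2\sigma_1^2)-b_k^2/(2\sigma_2^2)-\beta(a_k-b_k)^2)$, i.e.~a centered bivariate Gaussian with precision matrix having diagonal entries $(\sigma_1^{-2}+2\beta,\sigma_2^{-2}+2\beta)$ and off-diagonals $-2\beta$. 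Inverting this $2\times 2$ matrix and computing the variance of $a_k-b_k$ yields
\begin{equation*}
    \Var((a-b)_k)
    =
    \frac{\sigma_1^2+\sigma_2^2}{1+2\beta(\sigma_1^2+\sigma_2^2)}
    \;\leq\;\frac{1}{2\beta},
\end{equation*}
which holds uniformly in $\sigma_1^2,\sigma_2^2$. Summing over $k\in\{1,2,3\}$ gives $\bbE^{\overline{\bbP}^{(\eta)}_{[0,2],\beta}}[\|X-Y\|^2]\leq 3/(2\beta)=O(1/\beta)$.

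I do not anticipate a serious obstacle: the only delicate point is checking that the within-interval remainders really depend on a single factor each after passing from $(\bB_t)$ to $(p_1,p_2)$, which the identity above makes immediate. Once that is in place, the bound follows from a one-line Gaussian precision-matrix inversion whose key feature is that the $\|X-Y\|^2$ coupling alone already forces $O(1/\beta)$ variance \emph{regardless} of the prior variances $\sigma_j^2$ (which by Lemma~\ref{lem:gain} would only give the weaker $O(\beta^{-1/2})$ bound).
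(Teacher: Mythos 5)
Your proof is correct, and it is essentially the same argument as the paper's — both hinge on the ANOVA identity
\[
\int_0^1\!\!\int_1^2 \|\bB_t-\bB_s\|^2\,\de t\,\de s
= \int_0^1 \|\bB_t-\overline{\bB}_{[0,1]}\|^2\,\de t
+ \|\overline{\bB}_{[0,1]}-\overline{\bB}_{[1,2]}\|^2
+ \int_1^2 \|\bB_s-\overline{\bB}_{[1,2]}\|^2\,\de s,
\]
which isolates the coupling term $\exp(-\beta\|X-Y\|^2)$, and both finish with a one-dimensional Gaussian variance calculation after reweighting by $\exp(-\beta(\,\cdot\,)^2)$. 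The only real difference is bookkeeping: the paper uses Corollary~\ref{cor:quadratic-domination} to \emph{discard} all within-interval quadratic confinements (these only help) and reduce $\overline{\bbP}^{(\eta)}_{[0,2],\beta}$ to the single reweighting $\bbP^{\dagger,(\eta)}_{[0,2],\beta}\propto e^{-\beta\|X-Y\|^2}\,\de\bbP^{(\eta)}_{[0,2]}$, after which $X-Y$ is a reweighted Gaussian whose variance is immediately $O(1/\beta)$. You instead \emph{retain} the extra confinements inside $\mu_1,\mu_2$ and show by the explicit $2\times2$ precision-matrix inversion that $\Var((a-b)_k)=\frac{\sigma_1^2+\sigma_2^2}{1+2\beta(\sigma_1^2+\sigma_2^2)}\leq\frac{1}{2\beta}$, uniformly in $\sigma_1^2,\sigma_2^2$. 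The outcome is identical and the calculation is correct; your route avoids invoking the Gaussian domination machinery at the cost of tracking the extra prior variances, while the paper's route leans on the machinery already set up for the rest of the argument. Either is fine here.
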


\begin{proof}
    Consider the quadratic forms
    \begin{align*}
        Q_{[0,2]}(\bB)&=\int_0^2 \int_0^2 \|\bB_t-\bB_s\|^2~\de t~\de s;
        \\
        Q_{[0,1],[1,2]}(\bB)&=\|\overline{\bB}_{[0,1]}-\overline{\bB}_{[1,2]}\|^2.
    \end{align*}
    \revedit{
    Note that $Q_{[0,2]}(\bB)\geq 2Q_{[0,1],[1,2]}(\bB)\geq Q_{[0,1],[1,2]}(\bB)$ for any $\bB$ since
    \begin{align*}
    Q_{[0,2]}(\bB)-2Q_{[0,1],[1,2]}(\bB)
    &=
    \int_0^{1} \int_0^{1} \|\bB_t-\bB_s\|^2~\de t~\de s
    +
    \int_{1}^2 \int_{1}^2 \|\bB_t-\bB_s\|^2~\de t~\de s
    \\
    &\quad\quad
    +
    \int_0^{1} \|\bB_t-\overline{\bB}_{[0,1]}\|^2 ~\de t
    +
    \int_{1}^2 \|\bB_t-\overline{\bB}_{[1,2]}\|^2 ~\de t.
    \end{align*}
    }
    In light of Corollary~\ref{cor:quadratic-domination}, it suffices to show the result with $\overline{\bbP}^{(\eta)}_{[0,2],\beta}$ replaced by $\bbP^{\dagger,(\eta)}_{[0,2],\beta}(\bB)$ where for general $i$,
    \begin{equation}
    \label{eq:difference-reweight}
    \de\bbP^{\dagger,(\eta)}_{[i,i+2],\beta}(\bB)
    \propto
    \exp\lt(
    -
    \beta\cdot \|\overline{\bB}_{[i,i+1]}-\overline{\bB}_{[i+1,i+2]}\|^2 
    \rt)
    \de\bbP_{[i,i+2]}^{(\eta)}(\bB).
    \end{equation}
    Finally, observe that the distribution of the difference $\overline{\bB}_{[0,1]}-\overline{\bB}_{[1,2]}$ is already a centered Gaussian under $\bbP^{(\eta)}_{[0,2]}$. Therefore its reweighted distribution under $\bbP^{\dagger,(\eta)}_{[0,2],\beta}(\bB)$ is also a centered Gaussian and has variance $O(1/\beta)$ as desired.
\end{proof}

\begin{lemma}
\label{lem:small-summed-fluctuations}
    For $s\geq 2\ell$,
    \[
    \bbE^{\overline{\bbP}^{(\eta)}_{[0,s],\beta}}
    \lt[
    \lt\|
    \sum_{i=0}^{\ell-1}
    \overline{\bB}_{[2i,2i+1]}
    - 
    \overline{\bB}_{[2i+1,2i+2]}
    \rt\|^2
    \rt]
    \leq O(\ell/\beta).
    \]
\end{lemma}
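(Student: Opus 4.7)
The plan is to factor $\overline{\bbP}^{(\eta)}_{[0,s],\beta}$ approximately over disjoint adjacent pairs of intervals, and then invoke Lemma~\ref{lem:small-averaged-fluctuations} on each pair independently. Gaussian quadratic domination (Corollary~\ref{cor:quadratic-domination}) will let us discard the inconvenient couplings without paying anything in the bound.

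Write $X_i := \overline{\bB}_{[2i,2i+1]} - \overline{\bB}_{[2i+1,2i+2]}$, so the goal is to bound $\bbE[\|\sum_{i=0}^{\ell-1} X_i\|^2]$. I would first define a modified Gaussian measure
\[
    \de\nu(\bB) \propto
    \exp\lt(-\beta \sum_{k\,:\, 2k \leq s-2} \int_{2k}^{2k+1} \int_{2k+1}^{2k+2} \|\bB_t-\bB_u\|^2 ~\de t~\de u\rt) \prod_{i=0}^{s-1} \wt\bbP^{(\eta)}_{i,\beta},
\]
obtained from $\overline{\bbP}^{(\eta)}_{[0,s],\beta}$ by retaining only the cross-interval quadratic terms at even positions. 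The Radon--Nikodym derivative $\de\overline{\bbP}^{(\eta)}_{[0,s],\beta}/\de\nu$ is then $\exp(-\beta Q')$ for a non-negative symmetric quadratic $Q'$ (the sum of the dropped odd-position terms), so Corollary~\ref{cor:quadratic-domination} yields $\overline{\bbP}^{(\eta)}_{[0,s],\beta} \preceq \nu$.

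By construction, under $\nu$ the increments on the adjacent pair-intervals $[2k, 2k+2]$ (for $k = 0, 1, \ldots, \lfloor s/2\rfloor - 1$) will be independent: the base product measure $\prod_i \wt\bbP^{(\eta)}_{i,\beta}$ has independent increments across all integer intervals, and the retained quadratic weighting factors over these disjoint pairs. Each pair-marginal is, after time-translation, exactly $\overline{\bbP}^{(\eta)}_{[0,2],\beta}$. Because $X_i$ is shift-invariant, it depends only on the increments within $[2i, 2i+2]$, so $X_0, \ldots, X_{\ell-1}$ are independent and centered under $\nu$. Lemma~\ref{lem:small-averaged-fluctuations} then gives $\bbE^{\nu}[\|X_i\|^2] \leq O(1/\beta)$ for each $i$, and by independence and centeredness,
\[
    \bbE^{\nu}\lt[\lt\|\sum_{i=0}^{\ell-1} X_i\rt\|^2\rt] = \sum_{i=0}^{\ell-1} \bbE^{\nu}[\|X_i\|^2] \leq O(\ell/\beta).
\]

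Finally, since $\|\sum_{i=0}^{\ell-1} X_i\|^2$ is a non-negative symmetric convex (in fact quadratic) function of the path $\bB$, the domination $\overline{\bbP}^{(\eta)}_{[0,s],\beta} \preceq \nu$ together with Corollary~\ref{cor:GCI} transfers the bound to the original measure, completing the proof. I do not foresee a serious obstacle here: the only conceptual step is the decoupling, and once the modified measure $\nu$ is set up, the rest is independence plus a single invocation of Lemma~\ref{lem:small-averaged-fluctuations}. The main thing to check carefully is that the pair-marginal of $\nu$ really does coincide with $\overline{\bbP}^{(\eta)}_{[0,2],\beta}$ after translation, which holds directly from the factorization of the retained weight and of the underlying product measure $\prod_i \wt\bbP^{(\eta)}_{i,\beta}$.
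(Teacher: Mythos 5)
Your proof is correct and follows essentially the same strategy as the paper: use Corollary~\ref{cor:quadratic-domination} to drop the odd-position cross-interval couplings and obtain a dominating Gaussian measure that factors over the disjoint pair-blocks $[2i,2i+2]$, then use independence of the $X_i$ and Lemma~\ref{lem:small-averaged-fluctuations} on each pair. The only (harmless) difference is that the paper drops the within-interval confinement terms as well, reducing to the measure $\bbP^{\dagger,(\eta)}_{[0,2\ell],\beta}$ from \eqref{eq:difference-reweight}, whereas you retain them so that each pair-marginal is literally $\overline{\bbP}^{(\eta)}_{[0,2],\beta}$ and the cited lemma's statement applies verbatim.
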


\begin{proof}
    Similarly to the previous proof, Corollary~\ref{cor:quadratic-domination} implies that it suffices to show the same estimate under the product measure (recall \eqref{eq:difference-reweight})
    \[
    \bbP^{\dagger,(\eta)}_{[0,2\ell],\beta}
    \equiv
    \prod_{i=0}^{\ell-1}\bbP^{\dagger,(\eta)}_{[2i,2i+2],\beta}.
    \]
    Lemma~\ref{lem:small-averaged-fluctuations} finishes the proof as $\lt\{\overline{\bB}_{[2i,2i+1]}- \overline{\bB}_{[2i+1,2i+2]}\rt\}_{0\leq i\leq \ell-1}$ are $\bbP^{\dagger,(\eta)}_{[0,2\ell],\beta}$-independent.
\end{proof}

\begin{lemma}
\label{lem:small-shifted-summed-fluctuations}
    For $s\geq 2\ell-1$,
    \[
    \bbE^{\overline{\bbP}^{(\eta)}_{[0,s],\beta}}
    \lt[
    \lt\|
    \sum_{i=1}^{\ell-1}
    \overline{\bB}_{[2i-1,2i]}
    - 
    \overline{\bB}_{[2i,2i+1]}
    \rt\|^2
    \rt]
    \leq O(\ell/\beta).
    \]
\end{lemma}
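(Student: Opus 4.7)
The plan is to mirror the proof of Lemma~\ref{lem:small-summed-fluctuations}, replacing the partition of $[0,2\ell]$ into pair-intervals $[2i,2i+2]$ for $i=0,\ldots,\ell-1$ by the shifted partition of $[1,2\ell-1]$ into pair-intervals $[2i-1,2i+1]$ for $i=1,\ldots,\ell-1$. These shifted pairs still have pairwise-disjoint interiors, so by the independent-increments structure of Brownian motion the relevant per-pair differences $\overline{\bB}_{[2i-1,2i]}-\overline{\bB}_{[2i,2i+1]}$ will decouple after an appropriate Gaussian domination step.

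Concretely, I would first define the product measure
\[
\bbP^{\dagger,(\eta)}_{\mathrm{shift},[0,s],\beta}
\equiv
\bbP^{(\eta)}_{[0,1]}
\times
\prod_{i=1}^{\ell-1} \bbP^{\dagger,(\eta)}_{[2i-1,2i+1],\beta}
\times
\bbP^{(\eta)}_{[2\ell-1,s]},
\]
where each factor $\bbP^{\dagger,(\eta)}_{[2i-1,2i+1],\beta}$ is the time-shifted analogue of \eqref{eq:difference-reweight}. Next I would verify via Corollary~\ref{cor:quadratic-domination} that $\overline{\bbP}^{(\eta)}_{[0,s],\beta}\preceq\bbP^{\dagger,(\eta)}_{\mathrm{shift},[0,s],\beta}$. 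Collecting the weights of $\overline{\bbP}^{(\eta)}_{[0,s],\beta}$ whose integration domain lies entirely inside $[2i-1,2i+1]$ yields the quadratic exponent $-\beta Q_{2i-1}-\beta Q_{2i}-\beta\int_{2i-1}^{2i}\int_{2i}^{2i+1}\|\bB_t-\bB_s\|^2~\de t~\de s$, and as in the proof of Lemma~\ref{lem:small-averaged-fluctuations} the identity
\[
\tfrac{1}{2}\int_{2i-1}^{2i+1}\!\int_{2i-1}^{2i+1}\|\bB_t-\bB_s\|^2~\de t~\de s
\geq \|\overline{\bB}_{[2i-1,2i]}-\overline{\bB}_{[2i,2i+1]}\|^2
\]
implies that this exponent equals $-\beta\|\overline{\bB}_{[2i-1,2i]}-\overline{\bB}_{[2i,2i+1]}\|^2$ minus a non-negative symmetric quadratic form. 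Dropping this excess piece, together with the leftover weights $Q_0,Q_{2\ell-1},\ldots,Q_{s-1}$ and all cross-interval weights that straddle the partition boundaries, is an instance of Corollary~\ref{cor:quadratic-domination}.

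Finally, since $\bB\mapsto\|\sum_{i=1}^{\ell-1}(\overline{\bB}_{[2i-1,2i]}-\overline{\bB}_{[2i,2i+1]})\|^2$ is symmetric and convex, Corollary~\ref{cor:GCI} upgrades the domination to
\[
\bbE^{\overline{\bbP}^{(\eta)}_{[0,s],\beta}}\!\lt[\lt\|\sum_{i=1}^{\ell-1}\bigl(\overline{\bB}_{[2i-1,2i]}-\overline{\bB}_{[2i,2i+1]}\bigr)\rt\|^2\rt]
\leq
\bbE^{\bbP^{\dagger,(\eta)}_{\mathrm{shift},[0,s],\beta}}\!\lt[\lt\|\sum_{i=1}^{\ell-1}\bigl(\overline{\bB}_{[2i-1,2i]}-\overline{\bB}_{[2i,2i+1]}\bigr)\rt\|^2\rt].
\]
Under the product measure the summands are independent centered Gaussians, so the right-hand side collapses to $\sum_{i=1}^{\ell-1}\bbE[\|\overline{\bB}_{[2i-1,2i]}-\overline{\bB}_{[2i,2i+1]}\|^2]$, which by Lemma~\ref{lem:small-averaged-fluctuations} combined with time-translation invariance is at most $(\ell-1)\cdot O(1/\beta)=O(\ell/\beta)$. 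There is no real obstacle; the only mildly delicate step is the algebraic check that the leftover quadratic is non-negative, and this is identical to the computation already carried out in Lemma~\ref{lem:small-averaged-fluctuations}.
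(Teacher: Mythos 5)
Your proof is correct and is essentially the paper's own argument: you have simply inlined the reduction. The paper's proof of Lemma~\ref{lem:small-shifted-summed-fluctuations} is a one-liner that discards the quadratic interactions outside $[1,2\ell-1]$ via Corollary~\ref{cor:quadratic-domination} to get $\overline{\bbP}^{(\eta)}_{[0,s],\beta}\preceq\bbP^{(\eta)}_{[0,1]}\times\overline{\bbP}^{(\eta)}_{[1,2\ell-1],\beta}\times\bbP^{(\eta)}_{[2\ell-1,s]}$ and then invokes Lemma~\ref{lem:small-summed-fluctuations} (time-shifted), whereas you go one step further to the shifted product measure $\bbP^{\dagger,(\eta)}_{\mathrm{shift},[0,s],\beta}$ and re-run the algebraic/independence argument that the paper already carried out in Lemmas~\ref{lem:small-averaged-fluctuations}--\ref{lem:small-summed-fluctuations}.
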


\begin{proof}
    By Corollary~\ref{cor:quadratic-domination}, we can discard interactions outside $[1,2\ell-1]$ to obtain
    \[
    \overline{\bbP}^{(\eta)}_{[0,s],\beta}
    \preceq
    \bbP^{(\eta)}_{[0,1]}
    \times
    \overline{\bbP}^{(\eta)}_{[1,2\ell-1],\beta}
    \times
    \bbP^{(\eta)}_{[2\ell-1,s]}.
    \]
    Thus it suffices to bound the expectation under the right-hand measure, which is equivalent to Lemma~\ref{lem:small-summed-fluctuations}.
\end{proof}

\begin{lemma}
\label{lem:start-to-end-change-smoothed}
    \[
    \bbE^{\overline{\bbP}^{(\eta)}_{[0,s],\beta}}
    \lt[
    \lt\|
    \overline{\bB}_{[0,1]} 
    - 
    \overline{\bB}_{[s-1,s]} 
    \rt\|^2
    \rt]
    \leq O(s/\beta).
    \]
\end{lemma}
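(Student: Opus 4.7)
My plan is to reduce this estimate to the two preceding lemmas via a telescoping decomposition. First, I would write
\[
\overline{\bB}_{[0,1]} - \overline{\bB}_{[s-1,s]} \;=\; -\sum_{i=0}^{s-2}\lt(\overline{\bB}_{[i+1,i+2]} - \overline{\bB}_{[i,i+1]}\rt),
\]
expressing the full difference between the first and last interval average as a sum of $s-1$ consecutive single-step increments of the form handled by Lemmas~\ref{lem:small-averaged-fluctuations}--\ref{lem:small-shifted-summed-fluctuations}.

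Next, I would partition the telescoping sum according to the parity of $i$. The even-$i$ portion, after substituting $i=2j$, takes the form $\sum_{j} (\overline{\bB}_{[2j,2j+1]} - \overline{\bB}_{[2j+1,2j+2]})$, which is exactly the object bounded by Lemma~\ref{lem:small-summed-fluctuations}. The odd-$i$ portion, after substituting $i=2j-1$, takes the form $\sum_{j} (\overline{\bB}_{[2j-1,2j]} - \overline{\bB}_{[2j,2j+1]})$, which is exactly the object bounded by Lemma~\ref{lem:small-shifted-summed-fluctuations}. In either parity of $s$, each subsum has at most $\lceil s/2\rceil$ terms, and the hypothesis $s\geq 2\ell$ (resp.\ $s\geq 2\ell-1$) is automatic for the matching choice of $\ell$; this just requires a short case split on the parity of $s$.

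Finally, applying the elementary inequality $\|a+b\|^2 \leq 2\|a\|^2 + 2\|b\|^2$ and taking expectation under $\overline{\bbP}^{(\eta)}_{[0,s],\beta}$ bounds the desired quantity by twice the sum of the two parity-subsum expectations. Each of these is $O(\lceil s/2 \rceil /\beta) = O(s/\beta)$ by the two preceding lemmas, giving the claim. I do not foresee any real obstacle here beyond minor index bookkeeping, since the measure in question is the same $\overline{\bbP}^{(\eta)}_{[0,s],\beta}$ already appearing in Lemmas~\ref{lem:small-summed-fluctuations} and~\ref{lem:small-shifted-summed-fluctuations}, so no additional domination step is required.
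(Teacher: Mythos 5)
Your proposal is correct and is essentially the paper's own proof: the parity split of the telescoping sum $\overline{\bB}_{[0,1]} - \overline{\bB}_{[s-1,s]} = -\sum_{i=0}^{s-2}(\overline{\bB}_{[i+1,i+2]}-\overline{\bB}_{[i,i+1]})$ into even and odd $i$ reproduces exactly the identity the paper writes down for $s=2\ell$, and the application of Lemmas~\ref{lem:small-summed-fluctuations} and~\ref{lem:small-shifted-summed-fluctuations} together with $\|a+b\|^2\leq 2\|a\|^2+2\|b\|^2$ is the same. The index bookkeeping for odd $s$ (taking $\ell=\lceil s/2\rceil$ in one lemma and $\ell=\lfloor s/2\rfloor+1$ in the other) works out as you indicate, so there is no gap.
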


\begin{proof}
    For $s=2\ell$ this is immediate from Lemmas~\ref{lem:small-summed-fluctuations} and \ref{lem:small-shifted-summed-fluctuations} via the identity
    \[
    \overline{\bB}_{[0,1]} 
    - 
    \overline{\bB}_{[2\ell-1,2\ell]} 
    =
    \lt(
    \sum_{i=0}^{\ell-1}
    \overline{\bB}_{[2i,2i+1]}
    - 
    \overline{\bB}_{[2i+1,2i+2]}
    \rt)
    +
    \lt(
    \sum_{i=1}^{\ell-1}
    \overline{\bB}_{[2i-1,2i]}
    - 
    \overline{\bB}_{[2i,2i+1]}
    \rt)
    \]
    and the fact that \revedit{$(a+b)^2\leq 2(a^2+b^2)$}. The case of $s$ odd is analogous.
\end{proof}

\begin{lemma}
\label{lem:start-to-end-change}
    \[
    \bbE^{\overline{\bbP}^{(\eta)}_{[0,s],\beta}}
    \lt[
    \lt\|
    \bB_{s}-\bB_0
    \rt\|^2
    \rt]
    \leq O\lt(\frac{s}{\beta}+\beta^{-1/2}\rt).
    \]
\end{lemma}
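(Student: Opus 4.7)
The plan is to decompose the increment $\bB_s-\bB_0$ into three pieces, control the ``smoothed'' middle piece via Lemma~\ref{lem:start-to-end-change-smoothed}, and handle the two boundary pieces using the single-interval fluctuation bound of Lemma~\ref{lem:gain} combined with the Gaussian correlation inequality. Explicitly, I would write
\[
\bB_s-\bB_0
=\bigl(\overline{\bB}_{[0,1]}-\bB_0\bigr)
+\bigl(\overline{\bB}_{[s-1,s]}-\overline{\bB}_{[0,1]}\bigr)
+\bigl(\bB_s-\overline{\bB}_{[s-1,s]}\bigr)
\]
and apply $\|a+b+c\|^2\leq 3(\|a\|^2+\|b\|^2+\|c\|^2)$. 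The middle term already contributes $O(s/\beta)$ by Lemma~\ref{lem:start-to-end-change-smoothed}, so the task is to show each boundary term contributes $O(\beta^{-1/2})$.

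For the boundary terms, use Jensen/Cauchy--Schwarz to pass the norm inside the time-integral: for instance,
\[
\|\bB_s-\overline{\bB}_{[s-1,s]}\|^2
=\Bigl\|\int_{s-1}^{s}(\bB_s-\bB_t)\,\de t\Bigr\|^2
\leq \int_{s-1}^{s}\|\bB_s-\bB_t\|^2\,\de t,
\]
and similarly for $\|\overline{\bB}_{[0,1]}-\bB_0\|^2$. Taking expectations and applying Fubini reduces the task to bounding $\bbE^{\overline{\bbP}^{(\eta)}_{[0,s],\beta}}[\|\bB_s-\bB_t\|^2]$ uniformly in $t\in[s-1,s]$ (and analogously on $[0,1]$).

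Here the Gaussian correlation inequality enters. Since the between-interval weight factor defining $\overline{\bbP}^{(\eta)}_{[0,s],\beta}$ is $\exp(-\beta\cdot Q)$ for a non-negative symmetric quadratic $Q$, Corollary~\ref{cor:quadratic-domination} gives
\[
\overline{\bbP}^{(\eta)}_{[0,s],\beta}\preceq \prod_{i=0}^{s-1}\wt\bbP^{(\eta)}_{i,\beta}.
\]
The function $\bB\mapsto \|\bB_s-\bB_t\|^2$ is symmetric (under $\bB\mapsto -\bB$) and convex, so Corollary~\ref{cor:GCI} yields
\[
\bbE^{\overline{\bbP}^{(\eta)}_{[0,s],\beta}}\bigl[\|\bB_s-\bB_t\|^2\bigr]
\leq \bbE^{\prod_i\wt\bbP^{(\eta)}_{i,\beta}}\bigl[\|\bB_s-\bB_t\|^2\bigr]
=\bbE^{\wt\bbP^{(\eta)}_{s-1,\beta}}\bigl[\|\bB_s-\bB_t\|^2\bigr],
\]
where the equality uses that the increment $\bB_s-\bB_t$ depends only on the factor on $[s-1,s]$. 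A time shift identifies this with $\bbE^{\wt\bbP^{(\eta)}_{0,\beta}}[\|\bB_1-\bB_{t'}\|^2]$ for $t'=t-(s-1)\in[0,1]$, which by $(a+b)^2\leq 2a^2+2b^2$ and Lemma~\ref{lem:gain} is at most $O(1/\sqrt{\beta})$. The same argument on $[0,1]$ handles the other boundary term.

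Combining, the two boundary contributions each yield $O(\beta^{-1/2})$ and the smoothed middle contributes $O(s/\beta)$, proving the lemma. There is no real obstacle here: the only thing to verify carefully is that the boundary-term quantities $\|\bB_s-\bB_t\|^2$ are genuinely symmetric convex functions of the full path so that the Gaussian correlation inequality (Corollary~\ref{cor:GCI}) applies, and that dropping the cross-interval interactions via Corollary~\ref{cor:quadratic-domination} is valid—both of which follow immediately from the machinery already assembled.
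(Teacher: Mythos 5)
Your proof is correct and takes essentially the same route as the paper: the same three-piece decomposition $\bB_s-\bB_0=(\overline{\bB}_{[0,1]}-\bB_0)+(\overline{\bB}_{[s-1,s]}-\overline{\bB}_{[0,1]})+(\bB_s-\overline{\bB}_{[s-1,s]})$, Lemma~\ref{lem:start-to-end-change-smoothed} for the middle term, and for each boundary term Jensen to pass inside the time average followed by Corollary~\ref{cor:quadratic-domination}/Corollary~\ref{cor:GCI} to drop to $\wt\bbP^{(\eta)}_{0,\beta}$ and invoke Lemma~\ref{lem:gain}. The only cosmetic difference is that the paper appeals to the time-reversal symmetry of $\wt\bbP^{(\eta)}_{0,\beta}$ implicitly (writing ``Similarly'' for the right boundary term), whereas you make this step explicit with the elementary bound $\|\bB_1-\bB_{t'}\|^2\leq 2\|\bB_1\|^2+2\|\bB_{t'}\|^2$; both are fine.
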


\begin{proof}
    By Lemma~\ref{lem:gain}, Corollary~\ref{cor:quadratic-domination} and Jensen's inequality,
    \[
    \bbE^{\overline{\bbP}^{(\eta)}_{[0,s],\beta}}
    \lt[
    \lt\|
    \bB_0-\overline{\bB}_{[0,1]}
    \rt\|^2
    \rt]
    \leq
    \sup_{u\in [0,1]}
    \bbE^{\overline{\bbP}^{(\eta)}_{[0,1],\beta}}
    \lt[
    \lt\|
    \bB_0-\bB_u
    \rt\|^2
    \rt]
    \leq
    O(1/\beta^{1/2}).
    \]
    Similarly, 
    \[
    \bbE^{\overline{\bbP}^{(\eta)}_{[0,s],\beta}}
    \lt[
    \lt\|
    \bB_s-\overline{\bB}_{[s-1,s]}
    \rt\|^2
    \rt]
    \leq
    O(1/\beta^{1/2}).
    \]
    Combining these with Lemma~\ref{lem:start-to-end-change-smoothed} implies the result via 
    \revedit{$(a+b+c)^2\leq 3(a^2+b^2+c^2)$}.
\end{proof}

\begin{proof}[Proof of Theorem~\ref{thm:main}]
    As mentioned at the start of Subsection~\ref{subsec:outline} we assume $T\geq 1$ is actually an integer; if not, one can just rescale time by a constant factor or equivalently discretize time in non-integer increments of $T/\lfloor T\rfloor$. 
    Recalling Lemma~\ref{lem:R-recursion}, 
    \revedit{
    we take $L\leq C_{\ref{lem:R-recursion}}\log \alpha$ such that
    \begin{equation}
    \label{eq:beta-L-final}
    \beta_L\geq \frac{\alpha^4}{C_{\ref{lem:R-recursion}}(\log\alpha)^6}.
    \end{equation}
    }
    Fix $\gamma^{(L)}\in \{\good_L,\bad_L,\dots,\bad_1\}^T$. If $i\notin \gamma^{(L)}_{\good_L}$, call the interval $[i,i+1]$ \emph{bad}. Call each bad interval a \emph{bad block}, and each connected component of good intervals a \emph{good block}. \revedit{Let $b(\gamma^{(L)})$ be the number of bad blocks.}

    Next, let $S(\gamma^{(L)})\subseteq S^{\circ}(\gamma^{(L)})\subseteq [T]$ consist of all $i$ such that both $[i,i+1]$ and $[i+1,i+2]$ are good intervals. We apply Lemma~\ref{lem:gaussian-domination-k-step} with this choice to obtain
    \begin{equation}
    \label{eq:blocks-bound-final}
    \bbE^{ \wh\bP^{(A,\eta)}_{\gamma^{(L)}}}[\|\bB_T\|^2]
    \leq
    \bbE^{ \wt\bP^{(\eta)}_{\gamma^{(L)},S(\gamma^{(L)})}}[\|\bB_T\|^2].
    \end{equation}
    Moreover $\wt\bP^{(\eta)}_{\gamma^{(L)},S(\gamma^{(L)})}$ has independent increments on distinct blocks, so the right-hand side of \eqref{eq:blocks-bound-final} is given by a sum over these blocks. Its law on each bad block $[i,i+1]$ is $(\bbP^{(\eta)}_{[i,i+1]})^{\times 2}$ which contributes $O(1)$. Meanwhile on a good block $[a,a+s]$ of length $s$, its law is $\overline{\bbP}^{(\eta)}_{[a,a+s],\beta_L}$ and
    \[
    \bbE^{\overline{\bbP}^{(\eta)}_{[a,a+s],\beta_L}}
    \lt[\|\bB_{a+s}-\bB_a\|^2\rt]
    \leq
    O\lt(\frac{s}{\beta_L}+\frac{1}{\sqrt{\beta_L}}\rt)
    \] 
    by Lemma~\ref{lem:start-to-end-change}. Since there are $b(\gamma^{(L)})$ bad blocks, there are at most $b(\gamma^{(L)})+1$ good blocks. 
    % Since $\beta_L^{-1/2}\leq 1$, 
    Summing contributions yields
    \begin{equation}
    \label{eq:sum-over-blocks}
    \bbE^{ \wt\bP^{(\eta)}_{\gamma^{(L)},S(\gamma^{(L)})}}[\|\bB_T\|^2]
    \leq
    O\lt(\frac{T}{\beta_L}+b(\gamma^{(L)})+\beta_L^{-1/2}\rt).
    \end{equation}
    Averaging over $\gamma^{(L)}$ via \eqref{eq:product-decomp-polaron-general}, we find 
\begin{align*}
    \bbE^{\wh\bbP_{\alpha,T}^{(A,\eta)}}[\|\bB_T\|^2]
    &=
    \sum_{\gamma^{(L)}\in
    \{\good_L,\bad_L,\dots,\bad_1\}}
    \wh w(\gamma^{(L)})
    \bbE^{\wh\bP^{(A,\eta)}_{\gamma^{(L)}}}[\|\bB_T\|^2]
    \\
    &\leq
    \sum_{\gamma^{(L)}\in
    \{\good_L,\bad_L,\dots,\bad_1\}}
    \wh w(\gamma^{(L)})
    \bbE^{\wt\bP^{(\eta)}_{\gamma^{(L)},S(\gamma^{(L)})}}
    [\|\bB_T\|^2]
    \\
    &\stackrel{\eqref{eq:sum-over-blocks}}
    {\leq}
    O\lt(
    \frac{T}{\beta_L}
    +
    \sum_{\gamma^{(L)}\in
    \{\good_L,\bad_L,\dots,\bad_1\}}
    \wh w(\gamma^{(L)})
    \big(b(\gamma^{(L)})+\beta_L^{-1/2}\big)
    \rt)\\
    &\stackrel{Lem.~\ref{lem:general-mostly-good}}{\leq}
    O\lt(\frac{T}{\beta_L} + 
    L\alpha^{-10}T
    +\beta_L^{-1/2}
    \rt)
    \\
    &
    \leq O\lt(\frac{T}{\beta_L} 
    +\beta_L^{-1/2}
    \rt).
\end{align*}
    The last line follows since $L\alpha^{-10}\leq O(\alpha^{-9})\leq \beta_L^{-1}$ \revedit{by Lemma~\ref{lem:R-recursion}}.
    Recalling \eqref{eq:beta-L-final}, we have established \eqref{eq:desired-bound-after-simplification} which concludes the proof of Theorem~\ref{thm:main}.
\end{proof}

\section*{Acknowledgement}

Thanks to Ramon van Handel, Bo'az Klartag, Andrea Montanari, Ron Peled, and Scott Sheffield for helpful discussions. We were introduced to the polaron by excellent lectures of Erwin Bolthausen and S.R.S. Varadhan during Ofer Zeitouni's 60th birthday conference at NYU. We are also grateful to Krzysztof My{\'s}liwy for pointing us to \cite{mysliwy2022polaron} and its connection to Remark~\ref{rem:robust}, as well as to Volker Betz and Tobias Schmidt for suggesting a correction and several clarifications in Section~\ref{sec:iterate}.
Finally we thank the anonymous referees for many other helpful suggestions.
% \small
\bibliographystyle{alpha}
\bibliography{bib}

\begin{thebibliography}{BKM17}

\bibitem[BKM17]{bolthausen2017mean}
Erwin Bolthausen, Wolfgang K{\"o}nig, and Chiranjib Mukherjee.
\newblock {Mean-Field Interaction of Brownian Occupation Measures II: A
  Rigorous Construction of the Pekar Process}.
\newblock {\em Communications on Pure and Applied Mathematics},
  70(8):1598--1629, 2017.

\bibitem[Bog98]{bogachev1998gaussian}
Vladimir~Igorevich Bogachev.
\newblock {\em Gaussian measures}.
\newblock Number~62. Amer. Math. Soc., 1998.

\bibitem[BP22a]{betz2022effective}
Volker Betz and Steffen Polzer.
\newblock {Effective Mass of the Polaron: a Lower Bound}.
\newblock {\em Communications in Mathematical Physics}, 2022.

\bibitem[BP22b]{betz2022functional}
Volker Betz and Steffen Polzer.
\newblock A functional central limit theorem for polaron path measures.
\newblock {\em Communications on Pure and Applied Mathematics},
  75(11):2345--2392, 2022.

\bibitem[BS22a]{brooks2022v1fr}
Morris Brooks and Robert Seiringer.
\newblock {The Fr{\"o}hlich Polaron at Strong Coupling--Part I: The Quantum
  Correction to the Classical Energy}.
\newblock {\em arXiv preprint arXiv:2207.03156}, 2022.

\bibitem[BS22b]{brooks2022fr}
Morris Brooks and Robert Seiringer.
\newblock {The Fr{\"o}hlich Polaron at Strong Coupling--Part II:
  Energy-Momentum Relation and Effective Mass}.
\newblock {\em arXiv preprint arXiv:2211.03353}, 2022.

\bibitem[BT17]{bley2017estimates}
Gonzalo~A Bley and Lawrence~E Thomas.
\newblock Estimates on functional integrals of quantum mechanics and
  non-relativistic quantum field theory.
\newblock {\em Communications in Mathematical Physics}, 350(1):79--103, 2017.

\bibitem[Che03]{cheridito2003representations}
Patrick Cheridito.
\newblock {Representations of Gaussian measures that are equivalent to Wiener
  measure}.
\newblock In {\em S{\'e}minaire de Probabilit{\'e}s XXXVII}, pages 81--89.
  Springer, 2003.

\bibitem[DS20]{dybalski2020effective}
Wojciech Dybalski and Herbert Spohn.
\newblock Effective mass of the polaron—revisited.
\newblock In {\em Annales Henri Poincar{\'e}}, volume~21, pages 1573--1594.
  Springer, 2020.

\bibitem[DV83]{donsker1983asymptotics}
Monroe~D Donsker and SR~Srinivasa Varadhan.
\newblock Asymptotics for the polaron.
\newblock {\em Communications on Pure and Applied Mathematics}, 36(4):505--528,
  1983.

\bibitem[Fey55]{feynman1955slow}
Richard~Phillips Feynman.
\newblock Slow electrons in a polar crystal.
\newblock {\em Physical Review}, 97(3):660, 1955.

\bibitem[Fr{\"o}37]{frohlich1937theory}
Herbert Fr{\"o}hlich.
\newblock Theory of electrical breakdown in ionic crystals.
\newblock {\em Proceedings of the Royal Society of London. Series
  A-Mathematical and Physical Sciences}, 160(901):230--241, 1937.

\bibitem[FS21a]{feliciangeli2021strongly}
Dario Feliciangeli and Robert Seiringer.
\newblock {The strongly coupled polaron on the torus: quantum corrections to
  the Pekar asymptotics}.
\newblock {\em Archive for Rational Mechanics and Analysis}, 242(3):1835--1906,
  2021.

\bibitem[FS21b]{frank2021quantum}
Rupert~L Frank and Robert Seiringer.
\newblock {Quantum corrections to the Pekar asymptotics of a strongly coupled
  polaron}.
\newblock {\em Communications on Pure and Applied Mathematics}, 74(3):544--588,
  2021.

\bibitem[Gro72]{gross1972existence}
Leonard Gross.
\newblock Existence and uniqueness of physical ground states.
\newblock {\em Journal of Functional Analysis}, 10(1):52--109, 1972.

\bibitem[Lie77]{lieb1977existence}
Elliott~H Lieb.
\newblock {Existence and uniqueness of the minimizing solution of Choquard's
  nonlinear equation}.
\newblock {\em Studies in Applied Mathematics}, 57(2):93--105, 1977.

\bibitem[LM17]{latala2017royen}
Rafa{\l} Lata{\l}a and Dariusz Matlak.
\newblock {Royen’s proof of the Gaussian correlation inequality}.
\newblock In {\em Geometric aspects of functional analysis}, pages 265--275.
  Springer, 2017.

\bibitem[LP48]{landau1948effective}
LD~Landau and SI~Pekar.
\newblock Effective mass of a polaron.
\newblock {\em Zh. Eksp. Teor. Fiz}, 18(5):419--423, 1948.

\bibitem[LS14]{lieb2014equivalence}
Elliott~H Lieb and Robert Seiringer.
\newblock Equivalence of two definitions of the effective mass of a polaron.
\newblock {\em Journal of Statistical Physics}, 154(1):51--57, 2014.

\bibitem[LS20]{lieb2020divergence}
Elliott~H Lieb and Robert Seiringer.
\newblock Divergence of the effective mass of a polaron in the strong coupling
  limit.
\newblock {\em Journal of Statistical Physics}, 180(1):23--33, 2020.

\bibitem[LT97]{lieb1997exact}
Elliott~H Lieb and Lawrence~E Thomas.
\newblock Exact ground state energy of the strong-coupling polaron.
\newblock In {\em Condensed Matter Physics and Exactly Soluble Models}, pages
  311--321. Springer, 1997.

\bibitem[MS22]{mysliwy2022polaron}
Krzysztof My{\'s}liwy and Robert Seiringer.
\newblock Polaron models with regular interactions at strong coupling.
\newblock {\em Journal of Statistical Physics}, 186(1):1--24, 2022.

\bibitem[MV20a]{mukherjee2018identification}
Chiranjib Mukherjee and SRS Varadhan.
\newblock {Identification of the Polaron Measure I: Fixed Coupling Regime and
  the Central Limit Theorem for Large Times}.
\newblock {\em Communications on Pure and Applied Mathematics}, 73(2):350--383,
  2020.

\bibitem[MV20b]{mukherjee2020identification}
Chiranjib Mukherjee and SRS Varadhan.
\newblock {Identification of the Polaron measure in strong coupling and the
  Pekar variational formula}.
\newblock {\em The Annals of Probability}, 48(5):2119--2144, 2020.

\bibitem[Pol22]{polzer2022renewal}
Steffen Polzer.
\newblock {Renewal approach for the energy-momentum relation of the
  Fr{\"o}hlich polaron}.
\newblock {\em arXiv preprint arXiv:2206.14425}, 2022.

\bibitem[Roy14]{royen2014simple}
Thomas Royen.
\newblock {A Simple Proof of the Gaussian Correlation Conjecture Extended to
  Some Multivariate Gamma Distributions}.
\newblock {\em Far East Journal of Theoretical Statistics}, 3:139--145, 2014.

\bibitem[She66]{shepp1966radon}
Larry~A Shepp.
\newblock {Radon-Nikodym derivatives of Gaussian measures}.
\newblock {\em The Annals of Mathematical Statistics}, pages 321--354, 1966.

\bibitem[Spo87]{spohn1987effective}
Herbert Spohn.
\newblock Effective mass of the polaron: A functional integral approach.
\newblock {\em Annals of Physics}, 175(2):278--318, 1987.

\end{thebibliography}

\end{document}